\newtheorem{theorem}{\bf Theorem}[section]
\newtheorem{corollary}{\bf Corollary}[section]
\newtheorem{definition}{\bf Definition}[section]
\newtheorem{lemma}{\bf Lemma}[section]
\newcommand{\RE}{{\rm Re}}
\newcommand{\diff}[2]{\frac{d #1}{d #2}}
\newcommand{\IM}{{\rm Im}}
\newcommand{\ARG}{{\rm arg}}
\newcommand{\eps}{\varepsilon}
\makeatletter\@addtoreset {equation}{section}\makeatother
\begin{document}
\title[Lattice solitons in DNLS equation with saturation]{\bf Standing lattice solitons in the discrete NLS equation with saturation}

\author{G.L. Alfimov}
\address[G.L. Alfimov]{National Research University of Electronic Technology MIET, Zelenograd, Moscow 124498, Russia}
\address[G.L. Alfimov]{Institute of Mathematics, Ufa Research Center, Chernyshevskii~str. 112, Ufa 450008, Russia}

\author{A.S. Korobeinikov}
\address[A.S. Korobeinikov]{National Research University of Electronic Technology MIET, Zelenograd, Moscow 124498, Russia}

\author{C.J.  Lustri}
\address[C.J.  Lustri]{Department of Mathematics, Macquarie University, Sydney, NSW, Australia}

\author{D.E. Pelinovsky}
\address[D.E. Pelinovsky]{Department of Mathematics, McMaster University, Hamilton, Ontario, L8S 4K1, Canada}
\address[D.E. Pelinovsky]{Department of Applied Mathematics, Nizhny Novgorod State Technical University, 603950, Russia}

\keywords{Discrete nonlinear Schr\"{o}dinger equation, lattice solitons, oscillatory integrals, beyond-all-order methods}


\maketitle

\begin{abstract}
We consider standing lattice solitons for discrete nonlinear Schr\"{o}dinger equation with saturation (NLSS), where
so-called {\it transparent points} were recently discovered. These transparent points are the values of
the governing parameter (e.g., the lattice spacing) for which the  Peierls--Nabarro barrier vanishes.
In order to explain the existence of transparent points, we study a solitary wave solution in
the continuous NLSS and analyse the singularities of its analytic continuation in the complex plane.
The existence of a quadruplet of logarithmic singularities nearest to the real axis is proven
and applied to two settings: (i) the fourth-order differential equation arising as the next-order
continuum approximation of the discrete NLSS and (ii) the advance-delay version of the discrete NLSS.

In the context of (i), the fourth-order differential equation generally does not have solitary wave solutions 
due to small oscillatory tails. Nevertheless, we show that solitary waves solutions exist for specific values
of governing parameter that form an infinite sequence. We present an asymptotic formula for the distance
between two subsequent elements of the sequence in terms of the small parameter of lattice spacing.
To derive this formula, we used two different analytical techniques: the semi-classical limit of
oscillatory integrals and the beyond-all-order asymptotic expansions. Both produced the same result
that is in excellent agreement with our numerical data.

In the context of (ii), we also derive an asymptotic formula for values of lattice spacing
for which approximate standing lattice solitons can be constructed. The asymptotic formula
is in excellent agreement with the numerical approximations of transparent points. However, we show that
the asymptotic formulas for the cases (i) and (ii) are essentially different and that
the transparent points do not generally imply existence of continuous standing lattice solitons
in the advance-delay version of the discrete NLSS.
\end{abstract}

\newpage

\section{Introduction}
\label{intro}

Lattice differential equations in the form of the discrete nonlinear Schr\"{o}dinger (NLS) equations
are commonly met in applications since they express the leading-order balance between
the nonlinear and periodic properties of many physical systems \cite{pelin-book}. Lattice solitons
represent elementary excitations in nonlinear lattices which appear naturally in many physical experiments \cite{panos-book}.

Since continuous translational invariance is broken in the lattice differential equations,
travelling waves do not usually propagate steadily. Instead they slow down and stop near a particular lattice site.
The related Peierls--Nabarro (PN) energy barrier is the energy difference between two pinned lattice solitons,
one of which is symmetric about a lattice site and the other one is symmetric about the midpoint between two nearest
lattices sites. The two families of standing lattice solitons can be pinned to any lattice site thanks to the discrete translational invariance of the lattice differential equations.

The cubic discrete NLS equation has the two pinned standing lattice solitons \cite{QinXiao}
and exhibit no other single-humped solutions at least for sufficiently small values of lattice spacing
\cite{PelRothos}. In the past few years, there have been many attempts
to construct generalizations of the cubic discrete NLS equation, which have continuous families
of standing lattice solitons \cite{dmitriev1,dmitriev2,pel1} (see also \cite{pel2} for travelling
lattice solitons in the same models). Such continuous families are parameterized by
the spatial translation parameter which provides a continuous deformation between the two pinned lattice solitons.
The PN energy barrier is identically zero for the continuous families of standing lattice solitons.
The main problem of the discrete NLS models exhibiting continuous families of standing lattice solitons
is that these models do not typically arise in physical applications.

One possible generalization of the cubic NLS equation arising in many optical applications
is the NLS equation with saturation (NLSS) \cite{GH1991}. With a suitable normalization,
the discrete NLSS is written as the following lattice differential equation for the sequence of complex amplitudes
$\{ \psi_n(t) \}_{n \in \mathbb{Z}} \in \mathbb{C}^{\mathbb{Z}}$ evolving in time $t \in \mathbb{R}$:
\begin{eqnarray}
i \frac{d \psi_n}{dt} + \frac{1}{h^2} (\psi_{n+1} - 2 \psi_n + \psi_{n-1}) + \psi_n -\frac{\theta \psi_n}{1+|\psi_n|^2} = 0,
\quad n \in \mathbb{Z},
\label{dNLS}
\end{eqnarray}
where $h$ is the lattice spacing parameter and $\theta$ is the saturation parameter.
When the saturable nonlinearity is expanded in power series and the quintic and higher-order powers are truncated,
one can obtain the cubic discrete NLS equation for the amplitude $\phi_n(t) = \psi_n(t) e^{i (\theta - 1) t}$:
\begin{eqnarray}
i \frac{d \phi_n}{dt} + \frac{1}{h^2} (\phi_{n+1} - 2 \phi_n + \phi_{n-1}) + \theta |\phi_n|^2 \phi_n = 0, \quad
\quad n \in \mathbb{Z},
\label{dNLS-cubic}
\end{eqnarray}
which is focusing if $\theta > 0$.

Numerical studies of the discrete NLSS showed existence of
standing lattice solitons with zero PN energy barrier \cite{Maulskov,Melvin1}
as well as existence of travelling lattice solitons  \cite{Melvin1,Melvin2,barashenkov2007}.
It was observed in \cite{Melvin1,Melvin2} that standing lattice solitons with zero PN energy barrier
exist for a set of points with respect to a governing parameter (called {\em transparent points}),
whereas the travelling lattice solitons exist on a set of bifurcation curves in the velocity-frequency
parameter plane. It was conjectured in \cite{Melvin1} that
the sequence of such transparent points or bifurcation curves is unbounded, although
the numerical results only captured the first few transparent points or bifurcation curves.
More recent numerical studies \cite{Syafwan} showed stability of travelling lattice solitons
in the discrete NLSS.

The purpose of this work is to explain the phenomenon of a countable sequence of
transparent points for standing lattice solitons in the discrete NLSS. Standing
lattice solitons satisfy the following second-order difference equation:
\begin{gather}\label{DiscreteU}
\frac1{h^2} (u_{n+1}-2u_n+u_{n-1}) + u_n-\frac{\theta u_n}{1+u_n^2}=0, \quad n \in \mathbb{Z}.
\end{gather}
Two particular solutions to the difference equation (\ref{DiscreteU}) are generally known \cite{QinXiao}:
on-site soliton $\{u_n^{os}\}_{n \in \mathbb{Z}}$ and inter-site soliton $\{u_n^{is}\}_{n \in \mathbb{Z}}$,
according to the following symmetry conditions:
\begin{equation}
\label{on-site-inter-site}
u_{-n}^{os}=u_{n}^{os}, \quad u_{-n}^{is}=u_{n-1}^{is}, \quad n \in \mathbb{Z}.
\end{equation}
Both lattice solitons decay to zero as $|n| \to \infty$ and the transparent point is the value
of $h$ (for fixed $\theta$) for which the PN energy barrier vanishes \cite{Maulskov,Melvin1}\footnote{It was
shown in \cite{Melvin1} that the energy of the lattice soliton must be modified
by the mass term in order to get correct conclusions on the PN energy barrier compared
to the earlier work \cite{Maulskov}.}. It was shown in \cite{Melvin1} that zeros of
the PN energy barrier occur roughly at the values of $h$ for which linearization
of the difference equation (\ref{DiscreteU}) at the on-site and inter-site solitons
(\ref{on-site-inter-site}) admits zero eigenvalue. Interchange between stability of the on-site
and inter-site solitons in the time-evolution problem (\ref{dNLS}) occur at these values of $h$,
although the two sets are not necessary the same. Thanks to these observations,
we adopt the following definition of the transparent points in the discrete NLSS.

\begin{definition}
\label{def-1}
We say that $h=h_0^{os}$ (or $h=h_0^{is}$) is the transparent point of the difference equation (\ref{DiscreteU})
at the on-site (inter-site) soliton satisfying (\ref{on-site-inter-site}) if
the Jacobian operator at the corresponding soliton admits a zero eigenvalue.
\end{definition}

Continuous generalization of the difference equation (\ref{DiscreteU}) is the following advance-delay equation:
\begin{eqnarray}
\frac{1}{h^2} \left[ u(x+h) - 2 u(x) + u(x-h) \right] + u(x) - \frac{\theta u(x)}{1+u(x)^2}=0, \quad x \in \mathbb{R}.
\label{advance-delay-intro}
\end{eqnarray}
On-site and inter-site discrete solitons satisfying (\ref{on-site-inter-site}) do not generally
correspond to continuous solutions to the advance-delay equation (\ref{advance-delay-intro}). Indeed,
the difference equation (\ref{DiscreteU}) is formulated as a two-dimensional discrete map with the saddle zero equilibrium;
stable and unstable manifolds of this equilibrium intersect generally at a {\it discrete} set.
Therefore, unless the two manifolds coincide like in the two-dimensional discrete maps considered in \cite{HPS,P11},
no continuous standing lattice solitons exist in the advance-delay equation (\ref{advance-delay-intro})  at a transparent point $h$
of Definition \ref{def-1}.

The only exception from the general observation above is the point $h = h_1 := \sqrt{2}$,
for which an exact solution $u \in C(\mathbb{R})$ exists because the advance-delay equation (\ref{advance-delay-intro})
with $h = \sqrt{2}$ corresponds to the integrable Ablowitz--Ladik lattice with a large class of exact solutions \cite{Khare2005}.
This particular value of $h$ was found in \cite{Melvin1,Melvin2} to be the first one
in the sequence $\{ h_m \}_{m \in \mathbb{N}}$ of the numerically detected transparent points of Definition \ref{def-1}.
The prediction of the transparent points was confirmed by direct numerical simulation of the lattice solitons
in the discrete NLS equation (\ref{dNLS}) that exhibited radiationless propagation \cite{Melvin1,Melvin2}.

In order to explain the numerical results in \cite{Melvin1,Melvin2}, we analyze
the following second-order differential equation:
\begin{eqnarray}
\frac{d^2 u}{dx^2} + u -\frac{\theta u}{1+u^2}=0,
\label{e=0-intro}
\end{eqnarray}
which is the formal limit of the advance-delay equation (\ref{advance-delay-intro}) as $h \to 0$.
A solitary wave solution decaying to zero at infinity
exists for every $\theta > 1$.  We extend the solution analytically off the real line and
prove that the nearest singularities in the analytic continuation of solutions are located
symmetrically as a quadruplet in the complex plane. The following theorem represents
the main result of this analysis.

\begin{theorem}
\label{theorem-main}
For every $\theta > 1$, there exists a unique positive and decaying solution
$U \in C^{\infty}(\mathbb{R})$ to the second-order equation (\ref{e=0-intro})
which is continued analytically off the real line until the nearest singularities
at $\pm \alpha \pm i \beta$, where $\alpha, \beta > 0$. For every $z \in \mathbb{C}$
close to $z_0 = -\alpha + i \beta$ with $\arg(z_0-z) \in \left(-\frac{\pi}{2},\frac{3\pi}{2}\right)$, the solution $U$
satisfies
\begin{equation}
\label{singular-behavior-intro}
U(z) = i + \sqrt{\theta} (z - z_0) \sqrt{\log(z_0-z)} \left[ 1 +
\mathcal{O}\left(\frac{\log|\log|z-z_0||}{\log|z-z_0|}\right) \right] \quad \mbox{\rm as} \quad z \to z_0,
\end{equation}
whereas the behavior of $U$ at other singularity points is obtained from the symmetry conditions
\begin{eqnarray}
U(\bar z)=\overline{U(z)},\quad U(-z)=U(z), \quad z \in \mathbb{C}.
\label{Symm-intro}
\end{eqnarray}
\end{theorem}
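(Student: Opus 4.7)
The plan is to extract everything from the first integral of the ODE, which will encode both the real-line existence of $U$ and the singularity structure of its complex continuation. Multiplying the equation by $U'$ and using decay at infinity gives
\[
(U')^2 = F(U), \qquad F(u) := \theta\log(1+u^2) - u^2.
\]
A standard phase-plane analysis shows that for $\theta>1$ the function $F$ has a unique positive zero $U_0$, is positive on $(0, U_0)$, and the curve $v^2 = F(u)$ contains a homoclinic orbit at the saddle $(0,0)$, producing the unique positive, even, smooth, decaying real solution with $U(0)=U_0$. Since the nonlinearity $-u+\theta u/(1+u^2)$ is meromorphic in $u$ with simple poles only at $u=\pm i$, the classical analytic-dependence theorem extends $U$ to an analytic function on any simply connected domain $\Omega\supset\mathbb{R}$ on which $U(z)\neq\pm i$. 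The invariances of the ODE under $x\mapsto-x$ and under complex conjugation, combined with uniqueness of the Cauchy problem at $z=0$, yield the symmetries \eqref{Symm-intro}; in particular $U$ is real on both $\mathbb{R}$ and $i\mathbb{R}$, so no preimage of $\pm i$ lies on either axis.

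The nearest singularities are then located via the quadrature representation of the inverse map. On $(0, U_0)$, separation of variables gives $z(u)=\int_u^{U_0}du'/\sqrt{F(u')}$ with the positive branch, and I continue the integration contour into the upper half $u$-plane toward the branch point $u=i$ of $F$. Writing $w=u-i$ and using $1+u^2=w(2i+w)$ gives
\[
F(i+w)=\theta\log w+\bigl(1+\theta\log(2i)\bigr)+O(w\log w),
\]
so $1/\sqrt{F(i+w)}=O(1/\sqrt{|\log w|})$, which is integrable at $w=0$. Hence $z_0:=z(i)=\int_i^{U_0}du'/\sqrt{F(u')}$ is a finite complex number, and by the off-axis observation above it can be written as $z_0=-\alpha+i\beta$ with $\alpha,\beta>0$; the companions $\alpha\pm i\beta,\,-\alpha-i\beta$ follow from \eqref{Symm-intro}. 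A homotopy argument on the Riemann surface of $\sqrt{F}$, exploiting the monodromy of the integrand around the branch points at $\pm U_0$ and $\pm i$, shows that no other path from the real axis to $\{\pm i\}$ yields a preimage with smaller $|\operatorname{Im}(z)|$, so $\pm\alpha\pm i\beta$ are the four nearest singularities of the continuation.

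For the local behaviour at $z_0$, setting $w=U-i$ the quadrature gives
\[
z_0-z=\int_0^w\frac{dw'}{\sqrt{F(i+w')}}=\frac{w}{\sqrt{\theta\log w}}\bigl[1+O(1/\log|w|)\bigr]
\]
(using integration by parts on $(\log w')^{-1/2}$). Solving this implicit relation by Newton-type iteration produces $w=\sqrt{\theta}\,(z-z_0)\sqrt{\log w}\,[1+o(1)]$, with the overall sign absorbed into the branch of $\sqrt{F}$ tracked continuously from the real-axis segment and the cut $\arg(z_0-z)\in(-\pi/2,3\pi/2)$ recording the sheet on which the continuation arrives; substituting $\log w=\log(z_0-z)+\tfrac12\log\log(z_0-z)+O(1)$ on the right then produces \eqref{singular-behavior-intro}. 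I expect the main technical obstacles to be (i) the monodromy argument of the previous step, which rules out a closer singularity on a non-principal sheet of $\sqrt{F}$, and (ii) rigorously propagating the subleading logarithmic correction through the Newton iteration to obtain the specific error rate $O(\log|\log|z-z_0||/\log|z-z_0|)$; the former is the geometric heart of the argument, while the latter is a delicate but standard book-keeping exercise from beyond-all-orders asymptotics.
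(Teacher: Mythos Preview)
Your strategy is the paper's: use the first integral $(U')^2=F(U)$, invert via the quadrature $z(u)=\int_u^{U_0}du'/\sqrt{F(u')}$, identify the singularity as $z_0=z(i)$, and recover \eqref{singular-behavior-intro} by inverting the integral near $u=i$ with an implicit-function argument. The existence, symmetry, and local-expansion pieces are essentially what the paper does (its Lemma~2.4 is precisely your Newton-iteration step, carried out via the implicit function theorem in the auxiliary variables $\mu=\log\log(z_0-z)/\log(z_0-z)$ and $\nu=1/\log(z_0-z)$).

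The genuine gap is in your middle step. When you invoke ``the Riemann surface of $\sqrt{F}$'' with branch points only at $\pm U_0$ and $\pm i$, you are tacitly assuming that $F(u)=\theta\log(1+u^2)-u^2$ has \emph{no complex zeros} in the plane cut from $\pm i$. This is not automatic for a transcendental function, and any such zero would add a square-root branch point to $1/\sqrt{F}$, making $z(U)$ genuinely path-dependent and undermining both your monodromy argument and the claim that $\pm\alpha\pm i\beta$ are nearest. The paper spends a separate result on exactly this (its Theorem~2.1): it applies the argument principle on a large keyhole contour around the cut, computes the image of the cut under $F$ explicitly, and verifies that the total winding number is four---accounted for entirely by the double zero at $0$ and the simple zeros at $\pm U_0$. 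Only with this in hand does the paper carry out what you call the homotopy argument, but in a much more concrete form: it shows directly that $z(\cdot)$ maps the cut $u$-plane conformally onto a region containing the strip $\{0\le\operatorname{Im} z\le 2\beta\}$ minus two vertical slits from $\pm\alpha+i\beta$, which simultaneously proves analyticity of $U$ in that strip and pins down the branch-cut geometry. A smaller point: your off-axis argument shows only $z_0\notin\mathbb{R}\cup i\mathbb{R}$, not that $\operatorname{Re} z_0<0$; since the theorem asserts $U\to i$ (not $-i$) specifically at the second-quadrant singularity, you still need an explicit sign computation, which the paper supplies by evaluating $\alpha=J_1+J_2>0$ and $\beta=\pi/(2\sqrt{\theta-1})$ along a concrete contour.
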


Theorem \ref{theorem-main} is applied to the study of solitary wave solutions
in the following fourth-order differential equation
\begin{eqnarray}
\varepsilon^2 \frac{d^4 u}{d x^4} + \frac{d^2 u}{dx^2} + u -\frac{\theta u}{1+u^2}=0,
\label{ode-intro}
\end{eqnarray}
where $\varepsilon$ is a small parameter. The fourth-order equation (\ref{ode-intro})
arises from the advance-delay equation (\ref{advance-delay-intro}) in the next order
to the second-order equation (\ref{e=0-intro}) thanks to
the formal power expansion:
\begin{equation}
\label{expansion-difference}
u \in C^{\infty}(\mathbb{R}) : \quad
\frac{1}{h^2} \left[ u(x+h) - 2 u(x) + u(x-h) \right] = \frac{d^2 u}{dx^2} + \frac{h^2}{12} \frac{d^4 u}{d x^4} + \mathcal{O}(h^4),
\end{equation}
with the correspondence $\varepsilon := h/(2 \sqrt{3})$.
A solitary wave solution decaying to zero at infinity does not typically
exist in the fourth-order equation (\ref{ode-intro}) because of exponentially small oscillatory tails at infinity
\cite{GrimshawJoshi,PRG}. Exponential asymptotic expansions (also known as {\em beyond-all-order asymptotics})
were developed to analyze these exponentially small oscillatory tails both
for the differential equations \cite{Tovbis1,Tovbis2}, advance-delay equations of the Henon type \cite{Tovbis},
and the differential advance-delay equations \cite{Iooss,CMP2009,barashenkov2007}.

In the method of beyond-all-order asymptotics, the existence of solitary waves decaying to zero at infinity
can be justified by computations of a scalar function called {\em the Stokes constant}.
In many cases, the Stokes constant is either nonzero \cite{Tovbis1,Tovbis}
or vanishes on a finite set of isolated points of the one-parameter line \cite{CMP2009}.
This situation occurs typically in the case when the analytic continuation of the solitary wave solution
has a pair of symmetric singularities nearest to the real line. It was realized some time ago \cite{Gel1,Gel2} that
if the analytic continuation of the solitary wave solution has a quadruplet of symmetric singularities nearest
to the real line, the oscillations on the solution's tail may be suppressed at a countable set of isolated points
on the one-parameter line.

This phenomenon was recently studied in the context of the lattice differential equations.
By analyzing oscillatory integrals in the semi-classical limit, the very similar explanation for
the onset of a countable sequence of travelling lattice solitons was proposed and illustrated for a number
of physically relevant examples including the Klein--Gordon lattice with
the cubic--quintic nonlinearity \cite{prl-alfimov}.
By analyzing the beyond-all-order asymptotics, travelling lattice solitons in
the diatomic Fermi--Pasta--Ulam lattice were explained similarly in \cite{Lustri}.
These travelling lattice solitons arises as a result of co-dimension one bifurcations
among more general travelling solutions with exponentially small oscillatory tails \cite{Hoffman,Lustri,Vainstein}.

In the present work, we demonstrate analytically and numerically
that the symmetric location of the branch point singularities
in the solitary wave solution to the second-order equation (\ref{e=0-intro}) explains
the onset of a countable sequence of co-dimension one bifurcations
for the solitary waves of the fourth-order equation (\ref{ode-intro}) with
$\varepsilon$ near $\{ \varepsilon_m \}_{m \in \mathbb{N}}$. The sequence
for the lattice spacings $\{ h_m \}_{m \in \mathbb{N}}$ with $h_m = 2 \sqrt{3} \varepsilon_m$
accumulates to zero as $m \to \infty$ according to the asymptotic representation:
\begin{equation}
\label{spacing-odeintro}
h_m \sim \frac{4 \sqrt{3} \alpha}{\pi (2m-1)}, \quad m \in \mathbb{N},
\end{equation}
where $\alpha > 0$ is a numerical parameter in Theorem \ref{theorem-main}.

Compared to the previous works in \cite{prl-alfimov,Lustri}, the technical challenge
of our work is caused by the fact that the solitary wave solution to the second-order equation (\ref{e=0-intro})
is not available in the closed analytical form. Another challenge is that the asymptotic behavior
involves the logarithmic singularity.
We show that both analytical techniques developed independently in \cite{prl-alfimov,Lustri}
lead to the same predictions for the fourth-order equation (\ref{ode-intro}).

One can anticipate a similar sequence $\{ h_m \}_{m \in \mathbb{N}}$
to arise in the advance-delay equation (\ref{advance-delay-intro}), for which the
fourth-order equation (\ref{ode-intro}) is the first-order approximation.
Indeed, we show existence of a countable sequence $\{ h_m \}_{m \in \mathbb{N}}$,
for which the first Stokes constant vanishes in the advance-delay equation (\ref{advance-delay-intro}).
However, there are two important differences between predictions for the advance-delay equation (\ref{advance-delay-intro})
and the fourth-order equation (\ref{ode-intro}). First,
the sequence $\{ h_m \}_{m \in \mathbb{N}}$ accumulates to zero as
$m \to \infty$ according to a different asymptotic representation:
\begin{equation}
\label{eps-advance-delay-intro}
h_m \sim \frac{4 \alpha}{(2m-1)}, \quad m \in \mathbb{N},
\end{equation}
where $\alpha > 0$ is the same as in (\ref{spacing-odeintro}).
The reason for the discrepancy is a different dispersion relation between the advance-delay equation
(\ref{advance-delay-intro}) and the fourth-order equation (\ref{ode-intro}).

Second and mostly important, no existence of continuous solution $u \in C(\mathbb{R})$
to the advance-delay equation (\ref{advance-delay-intro}) with $h$
near $\{ h_m \}_{m \in \mathbb{N}}$ can be demonstrated because there are infinitely many
resonant roots of the dispersion relation in (\ref{advance-delay-intro})
compared to only one root in (\ref{ode-intro}). This corresponds to the necessity
of checking infinitely many Stokes constants for computations of standing
lattice solitons. The result (\ref{eps-advance-delay-intro}) is deduced
from the first Stokes constant, whereas all others are expected to be
nonzero near $\{ h_m \}_{m \in \mathbb{N}}$ with the exception of $h_1 = \sqrt{2}$,
for which the exact solution exists and ensures that all Stokes constants vanish simultaneously.
Therefore, our results for the advance-delay equation (\ref{advance-delay-intro})
only allow us to predict {\em an approximate standing lattice soliton
with a single hump at the center and the smallest oscillatory tails in the far-field}.
Such approximate standing lattice solitons arise roughly at same values of $h$
corresponding to the transparent points in Definition \ref{def-1}.

The countable sequence of transparent points is related to the phenomenon of snaking
of standing lattice solitons discussed for the cubic--quintic discrete NLS equation
in \cite{Chong1,Chong2} and for the Allen--Cahn lattice in \cite{Dawes}. Indeed, the snaking
is induced by the existence of two countable sequences of instability bifurcations for on-site and
inter-site lattice solitons, which are typically located at different points in the governing
parameter (see Figure 3 in \cite{Chong2}). At each instability bifurcation, two branches of
either on-site or inter-site lattice solitons merge in a fold bifurcation, where they exchange their stabilities.
In addition, asymmetric lattice solitons bifurcate from the same fold points and connect branches of
the on-site and inter-site solitons. If each branch of asymmetric lattice solitons existed
at the {\em same} point of the instability bifurcation for the limiting on-site and inter-site
solitons, this would suggest the existence of continuous solutions to the advance-delay
equation at this point. However, the asymmetric lattice solitons are typically
connected to the on-site and inter-site solitons at {\em different} points. As a result, the transparent points
do not guarantee bifurcations of continuous solutions in the advance-delay equation.

The paper is organized as follows. Section 2 is devoted to analysis of singularities
in the second-order equation (\ref{e=0-intro}) and gives the proof of Theorem \ref{theorem-main}.
Validity of the asymptotic formula (\ref{spacing-odeintro}) for the fourth-order equation (\ref{ode-intro})
is shown in Section 3 analytically and numerically. Section 4 reports analogous
results for validity of the asymptotic formula (\ref{eps-advance-delay-intro})
for the advance-delay equation (\ref{advance-delay-intro}).
Section 5 concludes the paper with a summary.

\section{Solitary wave solution to the second-order equation}
\label{sec-2}

Here we study the second-order differential equation:
\begin{eqnarray}
\frac{d^2 u}{dx^2} + u -\frac{\theta u}{1+u^2}=0
\label{e=0}
\end{eqnarray}
where $\theta$ is the model parameter. Solutions to the second-order equation (\ref{e=0})
can be obtained from the first-order invariant
\begin{equation}
\label{first-order}
E = \left( \frac{du}{dx} \right)^2 + u^2 - \theta \log(1 + u^2),
\end{equation}
where the value of $E$ is a constant in $x$. The implicit formula for a solution to
the initial-value problem
\begin{eqnarray*}
u(x_0)=u_0,\quad \frac{du}{dx}(x_0)=\sqrt{E+\theta\log(1+u_0^2)-u_0^2}
\end{eqnarray*}
is given by
\begin{eqnarray}
x-x_0=\int_{u_0}^{u}\frac{d\xi}{\sqrt{E+\theta \log(1+\xi^2)-\xi^2}}.
\label{RealSol}
\end{eqnarray}
Solitary wave solutions satisfy the decay conditions
$u(x)\to 0$ as $x \to \pm \infty$
and correspond to the level $E=0$.
The exponential decaying solutions
exist in (\ref{e=0}) if $\theta > 1$.

Zeros of the denominator in (\ref{RealSol}) with $E = 0$ determine the turning points for
the second-order equation (\ref{e=0}).
In particular, the real root of transcendental equation
\begin{eqnarray}
\theta \log(1+u^2)-u^2 = 0
\label{Trans_eq}
\end{eqnarray}
corresponds to the maximum value of the solitary wave. Complex roots are
important for analytic continuation of the solitary wave solutions into the complex plane. Hence, we define the function
\begin{equation}
\label{function-f}
f(u) := \theta \log(1+u^2)-u^2.
\end{equation}
and  analyze its real and complex zeros in Section \ref{Zeros}.
Solitary wave solutions and their analytic continuations in the complex plane
are studied in Section \ref{Soliton_e=0}. Asymptotic properties
of the analytic continuation and the proof of Theorem \ref{theorem-main}
are given in Section \ref{AsPropU}.

\subsection{Zeros of the function $f(u)$}\label{Zeros}

Let us start with the following lemma:
\begin{lemma}
\label{lemma-0}
For every $\theta > 1$ there exists only one positive root of
the nonlinear equation (\ref{Trans_eq}) denoted by $u^*$, and this root is simple.
\end{lemma}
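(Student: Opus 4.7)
The plan is to use elementary calculus on $f$ restricted to $u \geq 0$: analyze the sign of $f'$, identify a unique critical point, determine the value of $f$ at that critical point, and conclude existence and uniqueness of a positive root via the intermediate value theorem, with simplicity following from the sign of $f'$ at the root.

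First I would compute
\[
f'(u) = \frac{2\theta u}{1+u^2} - 2u = \frac{2u(\theta - 1 - u^2)}{1+u^2}.
\]
Since $\theta > 1$, this shows that on $(0,\infty)$ the derivative vanishes only at $u_c := \sqrt{\theta - 1}$, with $f'(u) > 0$ for $u \in (0,u_c)$ and $f'(u) < 0$ for $u \in (u_c, \infty)$. Thus $f$ is strictly increasing on $(0,u_c)$ and strictly decreasing on $(u_c, \infty)$, so $f$ can have at most two positive roots and at most one positive root in $(u_c, \infty)$. Moreover $f(0) = 0$, and the strict monotonicity on $(0,u_c)$ then precludes any root in $(0,u_c]$ other than as a limit from $0$.

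Next I would evaluate $f$ at the critical point and at infinity. At $u_c$,
\[
f(u_c) = \theta \log \theta - (\theta - 1),
\]
and a quick check with $g(\theta) := \theta \log \theta - \theta + 1$ shows $g(1) = 0$ and $g'(\theta) = \log \theta > 0$ for $\theta > 1$, hence $f(u_c) > 0$. On the other hand, since $\log(1+u^2) = o(u^2)$ as $u \to \infty$, one has $f(u) \to -\infty$ as $u \to \infty$. By continuity and the intermediate value theorem applied on $(u_c, \infty)$, there is exactly one $u^* \in (u_c,\infty)$ with $f(u^*) = 0$; combined with the preceding paragraph, this $u^*$ is the only positive root.

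Finally, simplicity is immediate from the monotonicity analysis: since $u^* > u_c$, the sign analysis of $f'$ gives $f'(u^*) < 0 \neq 0$, so $u^*$ is a simple zero of $f$. There is no real obstacle in this argument; the only mildly nontrivial step is verifying the positivity of $f(u_c) = \theta \log \theta - \theta + 1$ for $\theta > 1$, which is handled by the one-line derivative check just described.
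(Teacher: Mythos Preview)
Your proof is correct and follows essentially the same approach as the paper: analyze the sign of $f'$ on $(0,\infty)$, identify the unique critical point $u_c=\sqrt{\theta-1}$, use $f(0)=0$ together with the monotonicity and $f(u)\to-\infty$ to locate exactly one positive root in $(u_c,\infty)$, and deduce simplicity from $f'(u^*)\neq 0$. The only minor difference is that you explicitly verify $f(u_c)=\theta\log\theta-(\theta-1)>0$, whereas this is actually unnecessary: since $f(0)=0$ and $f$ is strictly increasing on $(0,u_c)$, positivity of $f(u_c)$ is automatic, which is how the paper (implicitly) argues.
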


\begin{proof}
Consider roots of $f(u) : \mathbb{R}^+ \mapsto \mathbb{R}$. We have $f(0) = f'(0) = 0$, $f'(\sqrt{\theta-1}) = 0$,
$$
\left\{ \begin{array}{l} f'(u) > 0, \quad u \in (0,\sqrt{\theta-1}), \\
f'(u) < 0, \quad u \in (\sqrt{\theta-1},\infty), \end{array} \right.
$$
and $\lim_{u \to +\infty} f(u) = -\infty$. This implies that there is
exactly one root of $f(u) : \mathbb{R}^+ \mapsto \mathbb{R}$ denoted by $u^*$.
It is straightforward to check that $u^* \neq \sqrt{\theta - 1}$ for any $\theta > 1$
so that the root $u^*$ is simple.
\end{proof}

\begin{corollary}
\label{cor-0}
For every $\theta > 1$, the nonlinear equation (\ref{Trans_eq}) has only three real roots given by
two simple roots $\pm u^*$ and the double root at $0$.
\end{corollary}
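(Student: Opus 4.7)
The corollary is a straightforward consequence of Lemma 2.1 together with the evenness of $f$, so the plan is really just to assemble these two observations carefully. I would proceed as follows.

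First, I would observe that $f(u) = \theta \log(1+u^2) - u^2$ depends only on $u^2$, hence $f(-u) = f(u)$ for all $u \in \mathbb{R}$. Consequently, the set of real roots of $f$ is symmetric about $0$: if $u_0$ is a real root, so is $-u_0$, and the multiplicities at $u_0$ and $-u_0$ coincide. Therefore it suffices to classify the nonnegative real roots of $f$.

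Next, I would combine Lemma 2.1 (applied on the open half-line $\mathbb{R}^+$) with the behavior at the origin. Lemma 2.1 asserts that $f$ has exactly one positive root $u^*$, and that this root is simple. Reflecting across $0$ by evenness yields a corresponding simple root at $-u^*$, and no other nonzero real roots exist. At the origin, $f(0) = 0$ and $f'(0) = 0$ (both noted in the proof of Lemma 2.1), so $u=0$ is a root of multiplicity at least two. To rule out higher multiplicity, I would Taylor expand
\begin{equation*}
f(u) = \theta\bigl(u^2 - \tfrac{1}{2} u^4 + \mathcal{O}(u^6)\bigr) - u^2 = (\theta - 1)\, u^2 - \tfrac{\theta}{2}\, u^4 + \mathcal{O}(u^6),
\end{equation*}
which for $\theta > 1$ has nonvanishing quadratic coefficient, so $0$ is exactly a double root.

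Putting these pieces together accounts for all real roots of $f$, namely the simple roots $\pm u^*$ and the double root at $0$, proving the corollary. There is no real obstacle here; the only minor care is in verifying that the double root at $0$ is not of higher multiplicity, which is handled by the explicit Taylor expansion above.
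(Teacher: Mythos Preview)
Your proof is correct and follows essentially the same approach as the paper: use the evenness $f(-u)=f(u)$ to reflect the unique positive root $u^*$ from Lemma~2.1 to $-u^*$, and identify $0$ as a double root. The only difference is that you supply the Taylor expansion to confirm the root at $0$ has multiplicity exactly two, whereas the paper simply declares this ``obvious.''
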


\begin{proof}
It is obvious that $0$ is also a root of $f(u) : \mathbb{R} \mapsto \mathbb{R}$, and it is a double root.
By the symmetry $f(-u) = f(u)$, there exists also a simple root at $-u^*$.
\end{proof}

Consider now the function $f(u)$ for $u$ complex. The function $f(u)$ has two branch points at $u=\pm i$.
We restrict the consideration by one sheet of the Riemann surface of $f(u)$. Specifically, we consider $f(u)$
on the set $P$ defined as the entire complex plane with two horizontal cuts $\mathcal{Q}_1= (-\infty+ i, i]$,
$\mathcal{Q}_2= (-\infty- i, -i]$. The function $f(u)$ is holomorphic in $P$ and has in $P$
at least three zeros $u=u^*$, $u=-u^*$ and $u=0$. The following theorem states that $f(u)$ has no other zeros in $P$.

\begin{theorem}
\label{theorem-00}
For every $\theta > 1$, the nonlinear equation (\ref{Trans_eq}) has only three roots in $P$ given by
two simple roots $\pm u^*$ and the double root at $0$.
\end{theorem}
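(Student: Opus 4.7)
The plan is to apply the argument principle to $f$ on the bounded domain $D_R := P \cap \{|u| < R\}$ for large $R > 0$ and then let $R \to \infty$. Since $P$ is simply connected (being $\mathbb{C}$ minus two disjoint closed half-lines extending to infinity), $f$ is single-valued and holomorphic on $P$. Combined with the preceding Corollary, which exhibits four zeros counted with multiplicity (a double root at $u = 0$ and simple roots at $\pm u^*$), it will suffice to show that $\oint_{\partial D_R} d(\arg f) = 8\pi$ for all sufficiently large $R$.

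The boundary $\partial D_R$, oriented so that $D_R$ lies on the left, is a single closed curve consisting of the circle $|u|=R$ together with two ``slit detours'', one around each cut $\mathcal{Q}_j$. The $\mathcal{Q}_1$ detour proceeds along the upper side from $u_A := -\sqrt{R^2-1}+i$ to the branch point $i$, executes a small loop around $i$, and returns along the lower side from $i$ back to $u_A$; the $\mathcal{Q}_2$ detour is analogous. The winding of $\arg f$ around $\partial D_R$ splits into the circle contribution and the two detour contributions.

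For the circle, the asymptotic $f(u) = -u^2 + \mathcal{O}(\log|u|)$ yields $\arg f(u) \approx \pi + 2\arg u$ on $|u|=R$, so a complete counterclockwise traversal contributes $\Delta \arg f \to 4\pi$ (equivalent to two zeros). For the $\mathcal{Q}_1$ detour, the key input is the $2\pi i$ jump of $\log(1+u^2)$ across $\mathcal{Q}_1$ induced by the logarithmic branch point at $u=i$, so that $f$ jumps by $2\pi i\theta$. Near $u=i$, the expansion $1+u^2 \sim 2i(u-i)$ gives $f(u) \sim \theta\log(u-i) + \mathrm{const}$, forcing $|f| \to \infty$ with $\arg f \to \pi$; hence the small loop around $i$ contributes no winding. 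Explicit formulas for $f_\pm(x+i)$ with $x \in (-\sqrt{R^2-1}, 0)$ then show that $\mathrm{Im}\, f_+ > 0$ throughout the upper side (so $\arg f_+$ dips into the first quadrant and returns to $\pi$ with zero net winding), whereas $\arg f_-$ along the lower side decreases by $2\pi$ as $x$ increases from $-\sqrt{R^2-1}$ to $0^-$. Traversed in the direction from $i$ back to $u_A$ required by the orientation of $\partial D_R$, this yields a $+2\pi$ contribution to $\oint d(\arg f)$. The $\mathcal{Q}_2$ detour contributes another $+2\pi$ by the conjugate symmetry $f(\bar u) = \overline{f(u)}$.

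Summing, $\oint_{\partial D_R} d(\arg f) \to 4\pi + 2\pi + 2\pi = 8\pi$, so $f$ has exactly four zeros in $P$ counted with multiplicity. Since the preceding Corollary already produces four such zeros, no others can exist and the theorem follows. The main technical obstacle will be the careful sign-tracking of $\arg f_\pm$ along each side of each cut, in particular justifying that precisely one side contributes the full $2\pi$ winding; the underlying mechanism is the asymmetry in the signs of $\mathrm{Im}\, f_\pm$ along the two sides, itself produced by the $2\pi i\theta$ jump of $f$ across the cut.
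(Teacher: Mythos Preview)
Your proposal is correct and follows essentially the same approach as the paper: apply the argument principle to $f$ on a large disk in $P$, splitting the total winding into the contribution from the large circle (two turns, since $f(u)\sim -u^2$) and from the keyhole detours around each cut (one turn apiece). The paper organizes the detour analysis into a preparatory lemma describing the image curve $f(\gamma_u)$---in particular establishing that $f(\mathbf{u}^-)$ crosses the positive real axis exactly once while $f(\mathbf{u}^+)=f(\mathbf{u}^-)+2\pi i\theta$ stays in the upper half-plane---and then invokes the conjugation symmetry $f(\bar u)=\overline{f(u)}$ to halve the computation, whereas you track $\arg f_\pm$ directly on both cuts; the substance is the same.
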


In order to prove Theorem \ref{theorem-00} we need the following technical lemma,
the proof of which is a straightforward exercise.

\begin{lemma}
\label{lemma-00}
Assume that $\rho>0$ is small enough and $\theta>1$. Let $\gamma_u$ be the path in the complex plane
shown in Fig.~\ref{Cut_map}(a).
Assume that $f(u)$ is represented on ${\bf u}^-$  by  the main branch of the logarithm,
$\log u=\log|u|+i~\ARG~ u$ for $\ARG~u \in(-\pi,\pi)$.
Then the function $w=f(u)$ maps $\gamma_u$ into $f(\gamma_u)=f({\bf u}^+)\cup f(C^\rho_i) \cup f({\bf u}^-)$
shown in Fig.~\ref{Cut_map}(b), where
\begin{itemize}
\item[(i)] $f({\bf u}^-)$ is a $U$-shape curve given in the parametric form by
\begin{eqnarray}
f({\bf u}^-) = \theta \log(1 + (i-t)^2) - (i-t)^2, \quad t \geq \rho.
\label{W_x_y}
\end{eqnarray}
$f({\bf u}^-)$ intersects the real axis once, at some point of {\em positive} semi-axis;

\item[(ii)] $f({\bf u}^+)$ is a copy of $f({\bf u}^-)$ shifted by $2\pi\theta i$.
$f({\bf u}^+)$ does not cross the real axis;

\item[(iii)] $f(C^\rho_i)$ is a path that connects the endpoint of $f({\bf u}^-)$ corresponding to $t=\rho$ with the corresponding endpoint of $f({\bf u}^+)$.  $f(C^\rho_i)$ crosses the real axis once, at some point of negative semi-axis.
\end{itemize}
\end{lemma}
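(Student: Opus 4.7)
The plan is to parametrize the three pieces of $\gamma_u$ explicitly and analyze $f(u)$ on each. The ray ${\bf u}^-$ requires the most work; parts (ii) and (iii) then follow quickly from monodromy and from a small-$\rho$ expansion near the branch point.

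For part (i), I set $u = i - t$ with $t \in [\rho, \infty)$. Using the identity $1 + (i - t)^2 = t(t - 2i)$ together with the principal branch of $\log$ gives $f(u) = A(t) + i B(t)$ with
\[
A(t) = \theta \log t + \frac{\theta}{2} \log(t^2 + 4) + 1 - t^2, \qquad B(t) = 2t - \theta \arctan(2/t).
\]
Direct differentiation yields $B'(t) = 2 + 2\theta/(t^2 + 4) > 0$, so $B$ is strictly increasing from $-\theta \pi/2$ at $t = 0^+$ to $+\infty$; hence $B$ has a unique zero $t^* > 0$. The derivative $A'(t)$ equals $-2[t^4 + (4 - \theta) t^2 - 2\theta]/[t(t^2 + 4)]$, and the bracket is a quadratic in $t^2$ with product of roots $-2\theta < 0$, so it has a unique positive root. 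Combined with $A(0^+) = A(+\infty) = -\infty$, this gives $A$ a unique maximum and produces the U-shape of $f({\bf u}^-)$ in the $f$-plane. The remaining claim that the single real-axis crossing lies on the positive semi-axis reduces to $A(t^*) > 0$; I would verify this by substituting $\theta = 2t^*/\arctan(2/t^*)$ into $A(t^*)$ and bounding the resulting expression from below on the admissible range of $t^*$.

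For part (ii), analytic continuation around the branch point does the work. Since $1 + u^2 = (u - i)(u + i)$ has a simple zero at $u = i$ and $u + i$ does not vanish on $C^\rho_i$, traversing $C^\rho_i$ once counterclockwise increases $\arg(u - i)$ by $2\pi$, hence $\log(1 + u^2)$ by $2\pi i$ and $f$ by $2\pi \theta i$. Therefore $f({\bf u}^+) = f({\bf u}^-) + 2\pi \theta i$, so the imaginary part of $f$ on ${\bf u}^+$ is $B(t) + 2\pi\theta \in [3\theta\pi/2, +\infty)$ and never vanishes.

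For part (iii), I parametrize $C^\rho_i$ as $u = i + \rho e^{i\phi}$ with $\phi \in [-\pi, \pi]$ traversed counterclockwise. Using $1 + u^2 = 2i \rho e^{i\phi}(1 - i\rho e^{i\phi}/2)$ and $u^2 = -1 + O(\rho)$ yields the uniform expansion
\[
f(u) = \theta \log(2\rho) + 1 + i \theta (\phi + \pi/2) + O(\rho).
\]
The leading imaginary part increases monotonically from $-\theta\pi/2$ at $\phi = -\pi$ (matching the endpoint of ${\bf u}^-$ at $t = \rho$) to $3\theta\pi/2$ at $\phi = +\pi$ (matching that of ${\bf u}^+$), crossing zero at $\phi = -\pi/2 + O(\rho)$; at that crossing the real part is $\theta \log(2\rho) + 1 + O(\rho)$, which is strongly negative for $\rho$ small. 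The one genuine obstacle is the positivity $A(t^*) > 0$ in part (i): although the implicit relation between $t^*$ and $\theta$ is clean, producing a single analytic inequality valid for every $\theta > 1$ is delicate and likely requires a case split on the size of $t^*$ (equivalently, on the distance of $\theta$ from $1$).
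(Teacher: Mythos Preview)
The paper does not actually supply a proof of this lemma: it is introduced with the phrase ``the proof of which is a straightforward exercise'' and then used directly. Your explicit parametrization $u=i-t$ on ${\bf u}^-$, together with the monodromy argument for ${\bf u}^+$ and the small-$\rho$ expansion on $C^\rho_i$, is exactly the kind of direct verification the authors have in mind, and your computations of $A(t)$, $B(t)$, $B'(t)$, $A'(t)$, and the $C^\rho_i$ asymptotics are all correct.

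You are also right that the only substantive point is the sign $A(t^\ast)>0$. One way to organize this last step cleanly: parametrize the zero set of $B$ by $t$ itself via $\theta=\theta(t)=2t/\arctan(2/t)$, so that the claim becomes positivity of the single-variable function
\[
g(t)=\frac{t\,\log\bigl(t^2(t^2+4)\bigr)}{\arctan(2/t)}+1-t^2
\]
on the half-line $\{t:\theta(t)>1\}$. Since $\theta(t)$ is strictly increasing (indeed $\theta'(t)=B'(t)/\arctan(2/t)>0$), this range is an interval $(t_0,\infty)$ with $t_0$ determined by $2t_0=\arctan(2/t_0)$. Checking $g(t_0)>0$ numerically and showing $g(t)\to+\infty$ as $t\to\infty$ is immediate; a monotonicity or convexity argument for $g$ on $(t_0,\infty)$ then avoids any case split in $\theta$. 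This is still an elementary one-variable estimate, consistent with the paper's ``straightforward exercise'' designation, so your proposal is effectively complete.
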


\begin{figure}[h]
{\centerline{\includegraphics [scale=0.6]{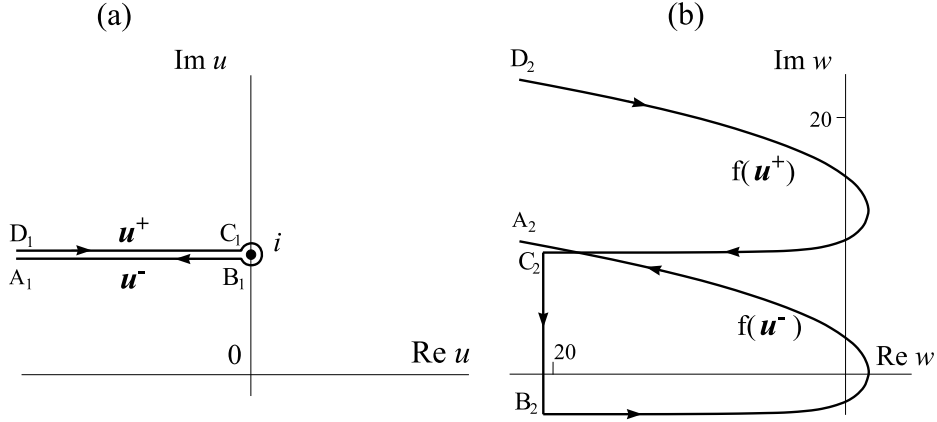}}}
\caption{The path $\gamma_u$ (a) and its image $f(\gamma_u)$ (b). The plot for (b) was computed numerically
for $f(u)$ with $\theta=5$. The points $A_2,\ldots,D_2$ are images of $A_1,\ldots,D_1$.}
\label{Cut_map}
\end{figure}

\noindent {\em Proof of Theorem \ref{theorem-00}.} Consider the contour $\Gamma$ shown in Fig.~\ref{Arg_principle}.
We assume that $R$ is large enough and $\rho>0$ is arbitrarily small. The argument principle states that
the number of zeros of $f(u)$ (taking into account their multiplicity) within $\Gamma$  is equal to
the number of turns around the origin that makes $f(u)$ when $u$ goes around $\Gamma$.
Due to symmetry $u\to \bar{u}$ of the contour $\Gamma$ and since $f(-R)=f(R)$ the numbers of
turns of $f(u)$ are equal for the two parts of $\Gamma$ situated in the upper and lower half-planes.

Consider the part of $\Gamma$ in upper half-plane between the points $u=R$ and $u=-R$.
Along this part of $\Gamma$, $f(u)$ makes one complete turn clockwise when passing along the big semi-circle $|u|=R$
and, due to Lemma \ref{lemma-00}, one more complete turn when getting round the cut $\mathcal{Q}_1$.
Therefore the total number of turns of $f(u)$ for $\Gamma$ is equal to 4. However $f(u)$ has already
three zeros within $\Gamma$: the simple zeros $u=u^*$, $u=-u^*$ and the double zero $u=0$. Therefore
$f(u)$ has no other zeros in $\Gamma$. Since $R$ is arbitrarily large and $\rho$ is arbitrarily small,
we arrive at the desired result. $\Box$

\begin{figure}[h]
{\centerline{\includegraphics [scale=0.6]{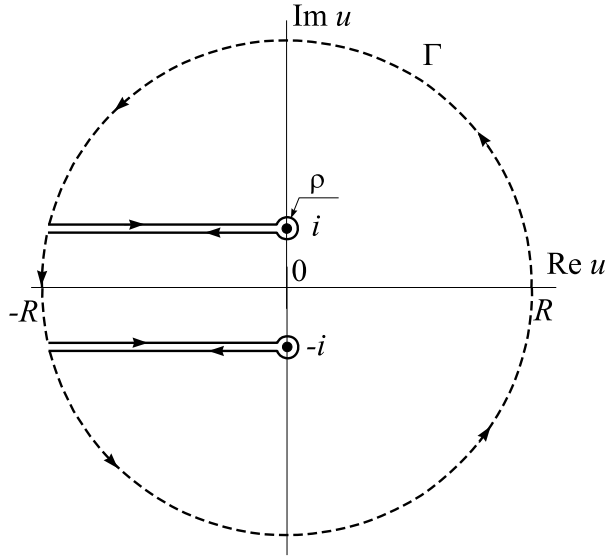}}}
\caption{The contour $\Gamma$ for the proof of Theorem \ref{theorem-00}.} \label{Arg_principle}
\end{figure}

\subsection{Analytical continuation of the solitary wave solution}
\label{Soliton_e=0}

Simple analysis of the phase plane $(u,u')$ for the second-order equation (\ref{e=0}) yields the following.
For $\theta>1$, $(0,0)$ is a saddle point on the phase plane $(u,u')$ with eigenvalues $\lambda=\pm\sqrt{\theta-1}$.
The solitary wave solution corresponds to the homoclinic loop of this equilibrium. Due to Lemma \ref{lemma-0},
there exists unique (up to the involution $u\to-u$) symmetric homoclinic loop of $(0,0)$. Therefore there exists
unique (up to transformation $u\to -u$) even solution $U(x)$ such that $U(x) > 0$ for every $x\in\mathbb{R}$
and $U(x) \to 0$ as $|x| \to \infty$. The function $U(x)$ can be written in an implicit form as follows:
\begin{eqnarray}
\int_{u^*}^U \frac{du}{\sqrt{\theta\log(1+u^2)-u^2}} = -|x|, \quad x \in \mathbb{R},
\label{exact}
\end{eqnarray}
where $u^*$ is the unique positive root in Lemma \ref{lemma-0}. The solution $U(x)$ decays to zero exponentially fast as $|x| \to \infty$,
\begin{eqnarray}
U(x)\sim C e^{-\sqrt{\theta-1} |x|},\quad |x|\to \infty\label{ExpAsym_U}
\end{eqnarray}
where $C$ is a constant that depends on $\theta$ only.
Two profiles of the solution $U(x)$  are presented in Fig.~\ref{U_profile} for $\theta = 2$ and $\theta = 5$.

\begin{figure}[h]
{\centerline{\includegraphics [scale=0.6]{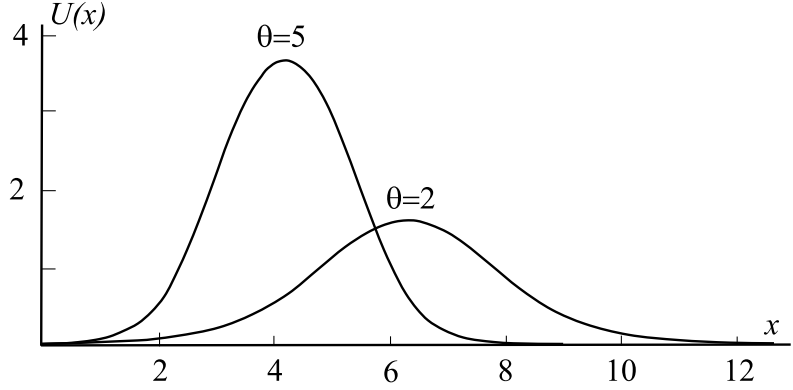}}} \caption{Plots of the function $U(x)$ versus real $x$
for $\theta=2$ and $\theta=5$} \label{U_profile}
\end{figure}


Denote the analytic continuation of $U(x)$ into the complex plane $z = x + i y$ by $U(z)$.
Since $U$ is real and even on the real axis, then $U$ in the complex plane satisfies the conditions
\begin{eqnarray}
U(\bar z)=\overline{U(z)},\quad U(-z)=U(z).
\label{Symm}
\end{eqnarray}
Implicit formula for $U(z)$ is obtained from formula (\ref{exact}) as follows:
\begin{eqnarray}
z(U) = \int_{\gamma} \frac{du}{\sqrt{\theta\log(1+u^2)-u^2}},
\label{exact-complex}
\end{eqnarray}
where $\gamma$ is a path that connects the points $u=u^* \in \mathbb{R}$ and $u=U \in \mathbb{C}$ in $P$ that
does not cross the branch cuts $\mathcal{Q}_1$ and $\mathcal{Q}_2$.
We choose in (\ref{exact-complex}) the branch for the square root such that $\sqrt{r e^{i\phi}}=\sqrt{r}e^{i\phi/2}$
for $\phi\in(-\pi,\pi)$. The integrand has a pole at $u=0$ and square root branching points at $u=\pm u^*$.
We introduce one more cut along the real axis, $\mathcal{Q}_3=(-\infty, u^*]$, and define
the set $Q$ on Fig.~\ref{Tilde_Q}(a).

\begin{lemma}\label{lemma-symm}
Let $z(U)$ be defined for $U \in Q$ by formula (\ref{exact-complex}). Then
\begin{eqnarray}
z(\bar U)=-\overline{z(U)}\label{z-symmetry}
\end{eqnarray}
\end{lemma}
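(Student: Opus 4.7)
The plan is to use complex conjugation as a symmetry of the integration path and to track carefully how it interacts with the prescribed square-root branch.

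First, I would check that $Q$ is invariant under $u\mapsto\bar u$: the cut $\mathcal{Q}_3\subset\mathbb{R}$ is self-conjugate and the cuts $\mathcal{Q}_1,\mathcal{Q}_2$ are complex conjugates of one another. Hence, for any path $\gamma$ in $Q$ from $u^*$ to $U$, the conjugate path $\tilde\gamma(t):=\overline{\gamma(t)}$ lies in $Q$ and runs from $u^*=\overline{u^*}$ to $\bar U$. I would use $\tilde\gamma$ as the path of integration for $z(\bar U)$ in (\ref{exact-complex}); the substitution $u=\overline{\gamma(t)}$, $du=\overline{\gamma'(t)}\,dt$, then yields
\[
z(\bar U)=\int_0^1\frac{\overline{\gamma'(t)}\,dt}{\sqrt{f(\overline{\gamma(t)})}},\qquad f(u):=\theta\log(1+u^2)-u^2.
\]

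Next I would establish the two symmetry identities
\[
f(\bar u)=\overline{f(u)},\qquad \sqrt{f(\bar u)}=-\overline{\sqrt{f(u)}}\qquad\text{for all }u\in Q.
\]
The first is routine from $\overline{1+u^2}=1+\bar u^2$ and the Schwarz-reflection property of the principal branch of the logarithm on $P$. For the second, I would observe that, thanks to the first identity, both $\sqrt{f(\bar u)}$ and $-\overline{\sqrt{f(u)}}$ are analytic choices of a square root of $f(\bar u)=\overline{f(u)}$, so their ratio is a locally constant map into $\{\pm 1\}$ and, by connectedness of $Q$, globally constant. To fix the sign I would evaluate at a point $u_0\in(u^*,\infty)\subset Q$, where $\bar u_0=u_0$ and $f(u_0)<0$: the value $\sqrt{f(u_0)}$ is purely imaginary, so $\sqrt{f(u_0)}=-\overline{\sqrt{f(u_0)}}$, which pins down the minus sign throughout $Q$.

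Combining the two identities gives
\[
z(\bar U)=\int_0^1\frac{\overline{\gamma'(t)}\,dt}{-\overline{\sqrt{f(\gamma(t))}}}
=-\overline{\int_0^1\frac{\gamma'(t)\,dt}{\sqrt{f(\gamma(t))}}}=-\overline{z(U)},
\]
which is the desired identity. The main obstacle is the sign in the second symmetry relation. The convention $\sqrt{re^{i\phi}}=\sqrt{r}e^{i\phi/2}$ with $\phi\in(-\pi,\pi)$ fails to commute naively with complex conjugation whenever $f(u)$ lies on the negative real axis, which happens on the ray $(u^*,\infty)\subset Q$; it is exactly the behavior of the square root across such points that produces the global minus in the lemma. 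The one-point check at $u_0\in(u^*,\infty)$ bypasses the need to follow the branch analytically along $\gamma$.
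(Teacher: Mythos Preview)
Your proof is correct and hinges on the same mechanism as the paper's: the Schwarz symmetry $f(\bar u)=\overline{f(u)}$ together with the observation that the chosen square-root branch flips sign under conjugation, the sign being pinned down on the real segment $(u^*,\infty)\subset Q$ where $f<0$. The paper checks this sign locally, by expanding $f(u)\approx f'(u^*)(u-u^*)$ with $f'(u^*)<0$ on the conjugate paths $\gamma,\bar\gamma$ near $u^*$ and noting that continuity of $\sqrt{f}$ across $(u^*,u^*+\epsilon)$ forces opposite signs; you instead argue globally, using that $Q$ is connected so the $\{\pm1\}$-valued ratio is constant, and then evaluate at a single $u_0\in(u^*,\infty)$. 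Both routes are equivalent; yours is slightly cleaner but tacitly uses that $\sqrt{f}$ is a single-valued holomorphic function on all of $Q$ (not merely the pointwise principal branch), which holds because $Q$ is simply connected and $f$ is zero-free there by Theorem~\ref{theorem-00}. It would strengthen the write-up to make that point explicit.
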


\begin{proof}
Consider the points $U\in Q$ and $\bar U\in Q$. Link $U$ and $u^*$ with
some path $\gamma$ in $Q$ and consider $z(U)$ defined by (\ref{exact-complex})
with this $\gamma$. Link $\bar U$ and $u^*$ with the path $\bar \gamma$ that is symmetric to $\gamma$
with respect to the real axis and consider $z(\bar U)$ defined by (\ref{exact-complex}) with this $\bar \gamma$.
In small vicinity of $u^*$ the path $\gamma$ has the parametrization $u=u^*+re^{i\phi(r)}+\mathcal{O}(r^2)$
and the path $\bar \gamma$ has the parametrization $u=u^*+re^{-i\phi(r)}+\mathcal{O}(r^2)$. Then in this vicinity of $u^*$
\begin{eqnarray*}
\sqrt{f(u)}&=&\sqrt{f'(u^*)re^{i\phi(r)}+\mathcal{O}(r^2)}\quad \mbox{at}\quad \gamma\\
\sqrt{f(u)}&=&\sqrt{f'(u^*)re^{-i\phi(r)}+\mathcal{O}(r^2)}\quad \mbox{at}\quad \bar\gamma
\end{eqnarray*}
Note that $f'(u^*)<0$ for all $\theta>1$. This implies that the signs of $\sqrt{f(u)}$
are opposite on the pathes $\gamma$ and $\bar \gamma$ (otherwise, the function $\sqrt{f(u)}$ defined
in vicinity of $u^*$ in $Q$ has a discontinuity on the real axis). This proves the symmetry formula (\ref{z-symmetry}).
\end{proof}

\begin{figure}[h]
{\centerline{\includegraphics [scale=0.7]{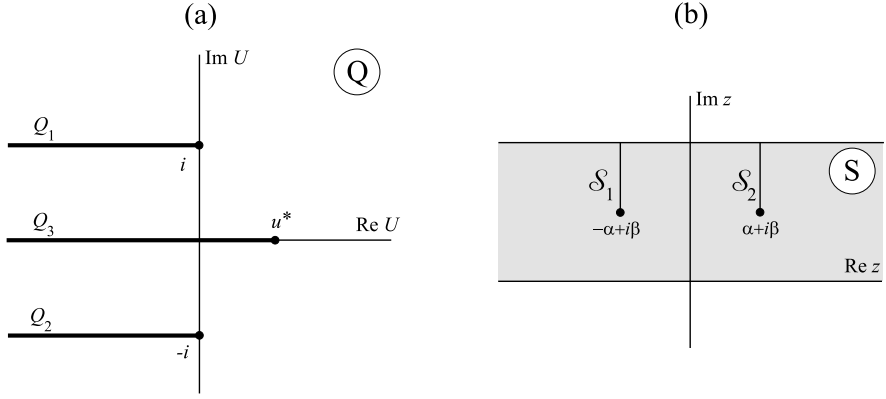}}}
\caption{The set $Q$ (a) and the strip $S$ (b).}
\label{Tilde_Q}
\end{figure}

Now we are in position to prove the following result.

\begin{theorem}\label{theorem-001}
The function $z(U)$ given by (\ref{exact-complex}) defines a conformal mapping of $Q$ such that:
\begin{itemize}
\item[(a)] For $r>0$
\begin{eqnarray}
z(r)=\pm\frac{\log r}{\sqrt{\theta-1}}+\mathcal{O}(1),\quad r\to 0, \label{Small_r01}
\end{eqnarray}
where ``$+$'' and ``$-$'' correspond to upper and lower edge of $\mathcal{Q}_3$ respectively, and
\begin{eqnarray}
z\left(re^{i\phi}\right)-z(r)=\frac{i|\phi|}{\sqrt{\theta-1}}+\mathcal{O}(r),
\quad \phi\in(-\pi,\pi)\label{Small_r02}
\end{eqnarray}
\item[(b)]  If $U\in(0,u^*)$ for the upper and lower edges of $\mathcal{Q}_3$, then
\begin{equation}
z(-U)=z(U)+\frac{\pi i}{\sqrt{\theta-1}};
\label{Symmetry_U}
\end{equation}
\item[(c)] The points $U=\pm i$ map into the points $z=\mp\alpha+i\beta$
where $\alpha,\beta > 0$ are given by
\begin{equation}
\label{alpha-beta}
\alpha = J_1 + J_2, \quad \beta = \frac{\pi}{2\sqrt{\theta-1}},
\end{equation}
with
\small
\begin{eqnarray*}
J_1 & := & \int_{1}^{u^*}\frac{du}{\sqrt{\theta \log(1+u^2)-u^2}}, \\
J_2 & := & -\int_0^1 \frac{\theta(\log(1+u^2) + \log(1-u^2))}
{\sqrt{\theta \log(1+u^2)-u^2}\sqrt{-\theta \log(1-u^2)-u^2}(\sqrt{\theta \log(1+u^2)-u^2} + \sqrt{-\theta \log(1-u^2)-u^2})}.
\end{eqnarray*}
\normalsize
\item[(d)] the image of $Q$ shown on Fig.\ref{Tilde_Q}(b) includes the set $S$ that consists of the strip $\{0\leq\IM~z\leq 2\beta\}$
with two vertical cuts $\mathcal{S}_1=[-\alpha+i\beta,-\alpha+2i\beta]$ and $\mathcal{S}_2=[\alpha+i\beta,\alpha+2i\beta]$.
\end{itemize}
\end{theorem}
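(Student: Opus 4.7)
The strategy is to evaluate the integral (\ref{exact-complex}) along carefully chosen paths in $Q$ and exploit the local structure of $f$ near its zeros and branch points. Parts (a)--(c) are computed somewhat independently by local/global analysis of the integrand $1/\sqrt{f(u)}$, and (d) then follows by boundary correspondence using the first three parts.

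For (a), the local expansion $f(u) = (\theta-1)u^2 + O(u^4)$ near $u=0$ gives $1/\sqrt{f(u)} = \pm 1/(\sqrt{\theta-1}\,u) + g(u)$ with $g$ holomorphic near 0, the sign depending on branch. Integrating from $u^*$ along the upper (respectively lower) edge of $\mathcal{Q}_3$ down to $u=r$ isolates the logarithmic divergence $\pm \log r/\sqrt{\theta-1}$ with bounded remainder, giving (\ref{Small_r01}). Integrating along the arc $u = re^{i\psi}$ produces $\int_0^{\phi} i\, d\psi/\sqrt{\theta-1}$ at leading order. The absolute value $|\phi|$ in (\ref{Small_r02}) reflects the fact that $\sqrt{f}$ has opposite sign on the two edges of $\mathcal{Q}_3$ (one turn around the branch point $u^*$ flips the sign), so that the arc in either upper or lower half-plane contributes $+i|\phi|/\sqrt{\theta-1}$.

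For (b), I deform the path from $U^+$ to $(-U)^+$ (both on the upper edge of $\mathcal{Q}_3$, $U \in (0,u^*)$) into the segment $[U,r]$, a semicircle of radius $r$ in the upper half-plane around $u=0$, and the segment $[-r,-U]$. The semicircle contributes $i\pi/\sqrt{\theta-1} = 2i\beta$ as $r\to 0$ by (\ref{Small_r02}) with $\phi=\pi$. The analytic continuation of $\sqrt{f(u)} \sim \sqrt{\theta-1}\,u$ through the upper semicircle flips its sign between the two sides of $u=0$, so the logarithmic contributions from the two straight segments cancel, yielding (\ref{Symmetry_U}).

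For (c), I deform the path from $u^*$ to $i$ into four pieces: $[u^*,1]$ on the upper edge of $\mathcal{Q}_3$ (contributing $-J_1$), $[1,r]$ on the upper edge, the quarter-circle $u = re^{i\psi}$ for $\psi\in[0,\pi/2]$ (contributing $i\pi/(2\sqrt{\theta-1}) = i\beta$ by (a)), and the imaginary-axis segment $[ir,i]$. On the last piece $u=it$ gives $f(it) = -b(t)$ with $b(t) = -\theta\log(1-t^2) - t^2 > 0$, and continuity of the branch from the quarter-circle forces $\sqrt{f(it)} = i\sqrt{b(t)}$, so this piece reduces to $\int_r^1 dt/\sqrt{b(t)}$. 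Combining with the second piece $\int_1^r du/\sqrt{a(u)}$, where $a(u) = \theta\log(1+u^2) - u^2$, and letting $r\to 0$, the logarithmic singularities at $u=0$ cancel; the rationalization identity $1/\sqrt{a} - 1/\sqrt{b} = (a-b)/[\sqrt{a}\sqrt{b}(\sqrt{a}+\sqrt{b})]$ together with $a-b = \theta\log(1-u^4)$ identifies the finite remainder as $-J_2$. This yields $z(i) = -\alpha + i\beta$, and $z(-i) = \alpha + i\beta$ then follows from Lemma~\ref{lemma-symm} applied to $U = i$.

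For (d), since $dz/dU = 1/\sqrt{f(U)}$ is finite and nonvanishing throughout $Q$ (all zeros of $f$ lie on the boundary by Theorem~\ref{theorem-00}), the map is locally conformal. Tracing the images of the boundary of $Q$ using (a)--(c) produces the oriented boundary of $S$: the upper edge of $\mathcal{Q}_3 \cap (0, u^*)$ covers the real axis $(-\infty, 0)$; the upper edge of $\mathcal{Q}_3 \cap (-u^*, 0)$ covers the line $\IM z = 2\beta$ by (b); and the two sides of $\mathcal{Q}_1, \mathcal{Q}_2$ map to the vertical cuts $\mathcal{S}_1, \mathcal{S}_2$ at $\RE z = \mp \alpha$, with lower endpoints $\mp\alpha + i\beta$ given by (c) and upper endpoints shifted by $2i\beta$ from the analog of (b) for the logarithmic branch point. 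The argument principle then gives $S \subset z(Q)$. The main technical obstacle will be (c): producing the precise form of $J_2$ requires both the right choice of path (quarter-circle followed by imaginary axis) and careful branch tracking of $\sqrt{f(u)}$ through the quarter-circle onto the imaginary axis where $f$ becomes negative real; the rationalization step is the algebraic key that makes the stated formula emerge.
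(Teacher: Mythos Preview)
Your treatment of parts (a), (b), and (c) is essentially the same as the paper's: the paper also derives (\ref{Small_r02}) by integrating over a small arc, proves (b) by a semicircle detour around $u=0$ together with the cancellation of the two straight pieces (via the representation $\sqrt{f(u)}=u\,\Phi_\theta(u)$), and proves (c) by the quarter-circle path $[u^*,r]\cup C_r\cup[ir,i]$ followed by the same rationalization producing $J_2$.

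The genuine gap is in (d). You assert that the two edges of the branch cut $\mathcal{Q}_1$ map onto the vertical segment $\mathcal{S}_1$. They do not. With the parametrization $u=i-t$, $t>0$, on the two edges $\mathbf{u}^\pm$ one has
\[
\frac{dz^-}{dt}=-\frac{1}{\sqrt{\theta\log(1+(i-t)^2)-(i-t)^2}},\qquad
\frac{dz^+}{dt}=-\frac{1}{\sqrt{\theta(\log(1+(i-t)^2)+2\pi i)-(i-t)^2}},
\]
and the point of Lemma~\ref{lemma-00} is precisely to locate $f(\mathbf{u}^\pm)$ in the complex plane so that one can conclude $\RE\sqrt{f(\mathbf{u}^-)}<0$ and $\RE\sqrt{f(\mathbf{u}^+)}>0$. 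Thus $\RE z$ \emph{increases} along \emph{both} edges as one moves away from the branch-point image $-\alpha+i\beta$; the images of $\mathbf{u}^\pm$ are curves heading to the right, not a vertical segment. The boundary image therefore does not trace out $\partial S$: it is multi-sheeted, and the cut $\mathcal{S}_1$ appears as the segment left uncovered by the winding boundary, not as the image of an edge of $\mathcal{Q}_1$. Without this sign analysis your boundary-correspondence/argument-principle step cannot be closed. You also omit tracking the remaining boundary pieces (the large circle, the interval $E_1F_1\subset(-\infty,-u^*]$ on $\mathcal{Q}_3$, and the positive-real-axis segment $A_1B_1$), which the paper needs to complete the contour $\Gamma^+$ and verify that the left half-strip with cut $\mathcal{S}_1$ is fully covered before invoking the symmetry (\ref{z-symmetry}) for the right half.
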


\begin{proof}
By Theorem \ref{theorem-00} the denominator of the integrand in (\ref{exact-complex}) has
no zeros in the interior of $Q$. Therefore the integrand is holomorphic in the interior of $Q$
and the result of integration in (\ref{exact-complex}) does not depend on $\gamma$. Hence,
the function $z(U)$ in (\ref{exact-complex}) is also holomorphic in $Q$.\\

\begin{figure}[h]
\centerline{\includegraphics [scale=0.75]{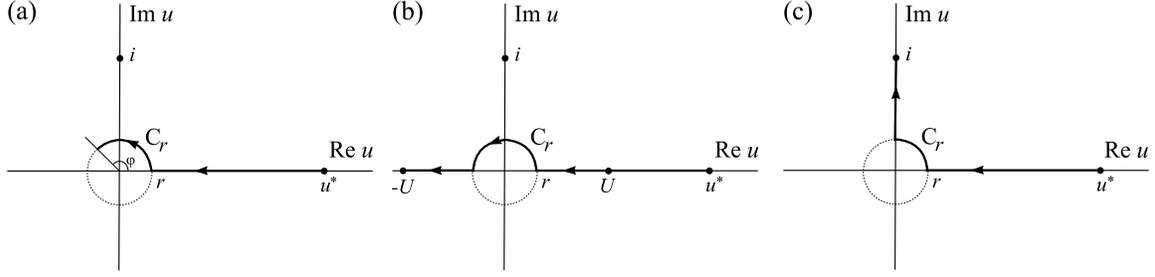}}
\caption{Pathes of integration in the proof of Theorem \ref{theorem-001} used
for formula (\ref{Small_r02}) (a), formula (\ref{Symmetry_U}) (b), and formula (\ref{alpha-beta}) (c).} \label{Path}
\end{figure}

\noindent {\em Proof of (a).} Formula (\ref{Small_r01}) follows immediately from formula (\ref{ExpAsym_U}).
In order to prove formula (\ref{Small_r02}) assume that $\phi\in(0,\pi)$ and
consider the path shown in Fig.~\ref{Path}(a). Let $u=re^{i\phi}$ be the parametrization on $C_r$.
We have
\begin{eqnarray*}
z\left(re^{i\phi}\right)=z(r)+\int_0^\phi\frac{ire^{i\phi}}{\sqrt{(\theta-1)r^2e^{2i\phi}-\theta r^4e^{4i\phi}/2+\mathcal{O}(r^4)}}=
z(r)+\frac{i\phi}{\sqrt{\theta-1}}+\mathcal{O}(r)
\end{eqnarray*}
that yields (\ref{Small_r02}). The same formula arises for $\phi\in(-\pi,0)$ if the symmetry (\ref{z-symmetry}) is used. $\Box$ \\

\noindent {\em Proof of (b).}  Note that for every $\theta > 1$ and $|\IM~u|<1$ the following representation holds,
\begin{equation}
\label{representation-D}
\sqrt{\theta\log(1+u^2)-u^2} = u \Phi_{\theta}(u),
\end{equation}
where $\Phi_{\theta}$ is a holomorphic function of $u$. In the circle $|u|<1$ the function $\Phi_{\theta}(u)$ can be represented
by the following Taylor series:
\begin{equation*}
\Phi_{\theta}(u)=\sqrt{\theta-1}\left[ 1 + \frac{\theta}{\theta-1} \sum_{k=1}^{\infty} \frac{(-1)^k u^{2k}}{k+1} \right]^{1/2}.
\end{equation*}
Therefore, $\Phi_{\theta}(u)$ is an even function and $\Phi_{\theta}(0) =\sqrt{\theta - 1} > 0$.
Consider the path in Fig.~\ref{Path}(b) that connects the points $u=-U$ and $u=u^*$,
passes along the upper edge of the cut $\mathcal{Q}_3$, and includes $C_r$,
the arc of the circle of radius $r > 0$ situated in the upper half of the complex plane.
We obtain
\begin{eqnarray*}
& \phantom{t} & \int_{u^*}^{-U}\frac{du}{\sqrt{\theta\log(1+u^2)-u^2}} - \int_{u^*}^{U}\frac{du}{\sqrt{\theta\log(1+u^2)-u^2}} \\
& = & \int_{U}^{-U}\frac{du}{\sqrt{\theta\log(1+u^2)-u^2}}=
\left( \int_{U}^{r} + \int_{C_r} + \int_{-r}^{-U} \right) \frac{du}{\sqrt{\theta\log(1+u^2)-u^2}}.
\end{eqnarray*}
By the representation (\ref{representation-D}), the third integral is equivalent to
$$
\int_{-r}^{-U} \frac{du}{\sqrt{\theta\log(1+u^2)-u^2}} =
\int_{r}^{U} \frac{du}{\sqrt{\theta\log(1+u^2)-u^2}} = - \int_{U}^r \frac{du}{\sqrt{\theta\log(1+u^2)-u^2}},
$$
which implies that the first and third integrals in the decomposition formula cancel out.
Since the total integral does not depend on $r$, its value is computed from the second integral in the limit $r\to 0$:
\begin{eqnarray*}
& \phantom{t} & \int_{u^*}^{-U}\frac{du}{\sqrt{\theta\log(1+u^2)-u^2}} - \int_{u^*}^{U}\frac{du}{\sqrt{\theta\log(1+u^2)-u^2}} \\
& = & \int_{C_r} \frac{du}{\sqrt{\theta\log(1+u^2)-u^2}} \to \frac{\pi i}{\sqrt{\theta - 1}} \quad \mbox{\rm as} \quad r \to 0.
\end{eqnarray*}
This implies formula (\ref{Symmetry_U}). Applying the symmetry property (\ref{z-symmetry})
we obtain the same formula (\ref{Symmetry_U}) for $U\in (0,u^*)$ and the lower edge of the cut $\mathcal{Q}_3$. $\Box$ \\

\noindent {\em Proof of (c).} The function $z(U)$ maps the point $U=i$ into
\begin{eqnarray}
z_0 = \int_{\gamma_0} \frac{du}{\sqrt{\theta\log(1+u^2)-u^2}},
\label{z_0}
\end{eqnarray}
where $\gamma_0$ is a path that connects the points $u^*$ and $i$,
and lies in $Q$. We take the path $\gamma_0$ in Fig.\ref{Path}(c) as a union of interval of real axis $[r,u^*]$,
arc $C_r$ of the circle of radius $r$, and the interval on imaginary axis $[ir, i]$,
where $r$ can be taken arbitrarily small. For this choice of $\gamma_0$ one has
\begin{eqnarray}
z_0=I_r(r)+I_C(r)+I_i(r),\label{z_0_sum}
\end{eqnarray}
where
\small
\begin{eqnarray*}
I_r(r)= \int_{u^*}^r\frac{du}{\sqrt{\theta\log(1+u^2)-u^2}},\quad
I_C(r) = \int_{C_r}\frac{du}{\sqrt{\theta\log(1+u^2)-u^2}}, \quad
I_i(r) = \int_{ir}^i\frac{du}{\sqrt{\theta\log(1+u^2)-u^2}}.
\end{eqnarray*}
\normalsize
The value of $z_0$ does not depend on $r$, whereas each of summands in (\ref{z_0_sum}) does.

Consider the limit $r\to0$. Both the integrals $I_r(r)$ and $I_i(r)$ diverge as $r \to 0$. However, let us show that the sum $I_r(r)+I_i(r)$ has a finite limit as $r \to 0$. By means of parametrization $u=i\xi$, integral $I_i(r)$ can be rewritten in the form
\begin{eqnarray*}
I_i(r)=\int_{r}^1\frac{d\xi}{\sqrt{-\theta\log(1-\xi^2)-\xi^2}},
\end{eqnarray*}
therefore, both $I_r(r)$ and $I_i(r)$ are real. Summing up $I_r(r)$ and $I_i(r)$ yields
\small
\begin{eqnarray*}
I_r(r) + I_i(r) & = & -\int_{r}^{u^*}\frac{du}{\sqrt{\theta\log(1+u^2)-u^2}}+
\int_r^1\frac{du}{\sqrt{-\theta\log(1-u^2)-u^2}}=\\[2mm]
& = & -\int_{1}^{u^*}\frac{du}{\sqrt{\theta\log(1+u^2)-u^2}}+\int_r^1\frac{\sqrt{\theta\log(1+u^2)-u^2}-\sqrt{-\theta\log(1-u^2)-u^2}}
{\sqrt{\theta\log(1+u^2)-u^2}\sqrt{-\theta\log(1-u^2)-u^2}}.
\end{eqnarray*}
\normalsize
By multiplying the numerator and denominator of the last integrand by
\begin{eqnarray*}
\sqrt{\theta\log(1+u^2)-u^2}+\sqrt{-\theta\log(1-u^2)-u^2}
\end{eqnarray*}
we conclude that the last integral converges when $r\to 0$.
Passing to the  limit $r\to 0$ yields the real-valued coefficient
\begin{eqnarray}
\label{alpha}
\alpha := -\lim_{r \to 0} \left[ I_r(r) + I_i(r) \right] = J_1 + J_2,
\end{eqnarray}
where $J_1$ and $J_2$ are defined below (\ref{alpha-beta}).

Consider now the integral $I_C(r)$. In the limit $r\to 0$
the logarithm can be replaced by its Taylor expansion and the integral can be calculated explicitly
\begin{eqnarray}
\label{beta}
i \beta := \lim_{r \to 0} I_C(r) = \frac{\pi i}{2\sqrt{\theta-1}}.
\end{eqnarray}
Limits (\ref{alpha}) and (\ref{beta}) recover the expressions (\ref{alpha-beta}) for $\alpha, \beta > 0$. So, the point $U=i$ maps to $z_0=-\alpha+i\beta$ and due to (\ref{z-symmetry}) the point $U=-i$ maps to $-\overline{z}_0=\alpha+i\beta$. $\Box$ \\

\begin{figure}[h]
\centerline{\includegraphics [scale=0.75]{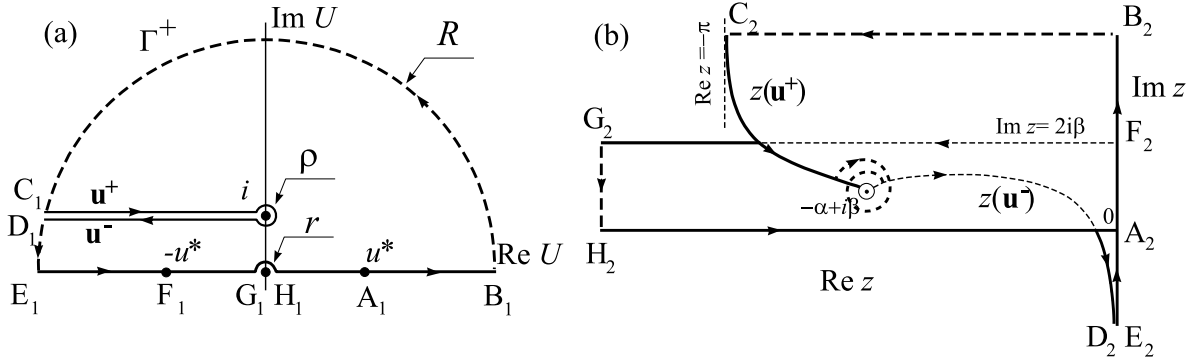}}
\caption{The contour $\Gamma_+$ (a) and its image on $z$-plane (b).
The points $A_2,\ldots,H_2$ are the images of the points $A_1,\ldots,H_1$. } \label{Plane_Z_map}
\end{figure}

\noindent {\em Proof of (d).}  Consider the upper part of the set $Q$ situated in upper half-plane.
Introduce the contour $\Gamma^+$ in Fig.~ \ref{Plane_Z_map}(a). It passes along the big circle
(arcs $B_1C_1$ and $D_1E_1$) that is  centered in the origin and has a large enough radius $R$,
includes the intervals of the real axis $E_1G_1$ and $H_1B_1$ and the semi-circle of a small radius $r$
that avoids the pole in the origin (the arc $G_1H_1$). The contour $\Gamma^+$ also includes the path getting
round the cut ${\mathcal Q}_1$ that consists of the line segments ${\bf u}^+$, ${\bf u}^-$, and the circle $C_i^\rho$.

Let us analyse the image of the contour $\Gamma^+$ in Fig.~\ref{Plane_Z_map} (b).
Consider the points $A_1$, $B_1$, $E_1$, $G_1$ $H_1$ on the real axis in the $U$-plane.
Evidently, $z(u^*)=0$, so the point $A_1$ maps into the origin
in the $z$-plane (the point $A_2$). It follows directly from formula (\ref{exact-complex}) that the
interval $A_1H_1$ situated on the real axis  maps into the interval $A_2H_2$ that lies on the real
negative semi-axis in $z$-plane. Next, since $f'(u^*)<0$, it is straightforward to check that $A_1B_1$
maps to the interval $A_2B_2$ of the positive imaginary semi-axis. According to formulas
(\ref{Small_r01})-(\ref{Small_r02}), the arc $H_1G_1$ of small semi-circle  maps into a distant
curve segment $H_2G_2$. The smaller is the radius $r$ of the semi-circle, the greater is the
distance of $H_2G_2$ from the origin in the $z$-plane. Due to (\ref{Symmetry_U}), the imaginary
part of $G_2$ is equal to $2\pi\beta$ and its real part tend to $-\infty$ as $r$ tends to zero.
Also due to (\ref{Symmetry_U}), $z(-u^*)=2i\beta$. Finally, the image of interval $E_1F_1$
lies on the imaginary axis in the $z$-plane.

Consider the great semi-circle (arcs $B_1C_1$ and $D_1E_1$). It follows directly from formula (\ref{exact-complex})
that when $R$ tends to infinity the image of the arc $B_1C_1$ tends to a distant line segment of length $\pi$
that is parallel to the real axis. Similarly, when $R$ tends to infinity the images of the points $D_1$ and $E_1$
tend to each other and their imaginary parts tend to infinity.

At last, consider the images of the line segments ${\bf u}^+$, ${\bf u}^-$, and the circle $C_i^\rho$.
As it was shown in (c), the point $i$ maps into $z=-\alpha+i\beta$ where $\alpha$ and $\beta$
are given by formulas (\ref{alpha-beta}). Let $u(t)=i-t$, $t>\rho$ be the parametrization
on ${\bf u}^\pm$. The images of ${\bf u}^\pm$  are given by
\begin{eqnarray*}
z^-(t)&=&-\alpha+i\beta+\int_i^{i-t}\frac{du}{\sqrt{\theta\log(1+u^2)-u^2}}\\
z^+(t)&=&-\alpha+i\beta+\int_i^{i-t}\frac{du}{\sqrt{\theta(\log(1+u^2)+2\pi i)-u^2}}
\end{eqnarray*}
where $t$ decreases at $z^+(t)$ and increases at $z^-(t)$. Then
\begin{eqnarray*}
\frac{d z^-}{dt}&=-&\frac1{\sqrt{\theta\log(1+(i-t)^2)-(i-t)^2}}\\
\frac{d z^+}{dt}&=-&\frac1{\sqrt{\theta(\log(1+(i-t)^2)+2\pi i)-(i-t)^2}}
\end{eqnarray*}
The behaviour of the functions
\begin{eqnarray*}
f({\bf u}^-) = \theta\log(1+(i-t)^2)-(i-t)^2,\quad f({\bf u}^+) = \theta(\log(1+(i-t)^2)+2\pi i)-(i-t)^2
\end{eqnarray*}
is described in Lemma \ref{lemma-00}, from which it follows that
\begin{eqnarray*}
\RE~[\sqrt{f({\bf u}^-)}]<0,\quad \RE~[\sqrt{f({\bf u}^+)}]>0,\quad t>\rho.
\end{eqnarray*}
When $U$ moves along ${\bf u}^-$ and ${\bf u}^+$ in  directions indicated by arrows on Fig.~\ref{Plane_Z_map}(a),
the corresponding point $\RE~[z(U)]$ on Fig.~\ref{Plane_Z_map}(b) increases in both cases.
This implies that the image of the area inside $\Gamma^+$ is multi-sheeted and covers completely the half-strip
$\{~\RE~z\leq0, \;\; 0\leq\IM~z\leq 2\beta\}$ with the cut $\mathcal{S}_1=[-\alpha+i\beta,-\alpha+2i\beta]$.

Passing to the limits $R\to\infty$, $\rho\to 0$ and $r\to 0$ and employing the symmetry property (\ref{z-symmetry}) we conclude that
the set $S$ belongs to the image of $Q$.
\end{proof}

Theorem \ref{theorem-001} implies the following corollary, which is important for further applications.

\begin{corollary}
\label{theorem-000}
Let $\theta>1$ and $\alpha,\beta$ are given by formulas (\ref{alpha-beta}).
The solitary wave solution $U(x)$ can be analytically continued to $S\subset\mathbb{C}$  where $S$  is
the strip $\{0\leq\IM~z\leq 2\beta\}$ with two vertical cuts $\mathcal{S}_1=[-\alpha+i\beta,-\alpha+2i\beta]$
and $\mathcal{S}_2=[\alpha+i\beta,\alpha+2i\beta]$ shown on Fig.~\ref{Tilde_Q}(b).
The resulting function $U(z)$ is single-valued in the interior of $S$ and
\begin{itemize}
\item[(a)] if $0\leq y \leq 2\beta$ then $\lim_{R\to\pm\infty} U(R+iy)=0$;
\item[(b)] if $x<-\alpha$ or $x>\alpha$ then $U(x+2i\beta)=-U(x)$.
\end{itemize}
\end{corollary}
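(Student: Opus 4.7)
The plan is to construct $U(z)$ on $S$ by inverting the conformal map $z:Q\to z(Q)$ established in Theorem~\ref{theorem-001}. By part~(d) of that theorem $S\subset z(Q)$; the derivative $dz/dU=1/\sqrt{f(U)}$ is holomorphic and nonvanishing on the interior of $Q$, since the only zeros of $f$ lie at $0$ and $\pm u^*$ on the boundary cut $\mathcal{Q}_3$ by Theorem~\ref{theorem-00}. Hence $z$ is locally biholomorphic and the inverse $U(z):=z^{-1}$ is holomorphic in a neighborhood of every interior point of $S$. Global single-valuedness on the interior of $S$ is then argued as follows: the multi-sheetedness recorded in the proof of part~(d) arises from winding around the $u$-plane cuts $\mathcal{Q}_{1,2}$, and the proof already shows that precisely this winding is recorded by the two vertical slits $\mathcal{S}_{1,2}$ removed from the strip. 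Removing these slits cuts the covering into single sheets, and the principal sheet is selected by matching the restriction of $U(z)$ to the real axis with the original solitary wave via formula~(\ref{exact}).

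For part~(a), fix $y\in[0,2\beta]$ and let $R\to\pm\infty$. From the boundary analysis in the proof of Theorem~\ref{theorem-001}(d), the only point of $\overline{Q}$ whose image under $z$ has $|\RE z|\to\infty$ with bounded imaginary part is $u=0$, where (\ref{Small_r01})--(\ref{Small_r02}) give the logarithmic blow-up $\RE z\sim\pm\log|u|/\sqrt{\theta-1}$. Every other boundary component of $Q$ (the branch points $\pm u^*$, the cut endpoints $\pm i$, and the large-circle arcs) produces a bounded real part or an unbounded imaginary part in the proof of (d). Consequently $U(R+iy)\to 0$.

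For part~(b), I would invoke the identity (\ref{Symmetry_U}), which, using $\pi/\sqrt{\theta-1}=2\beta$ from (\ref{alpha-beta}), reads $z(-U)=z(U)+2i\beta$ for $U\in(0,u^*)$ on either edge of $\mathcal{Q}_3$. The real axis $\{y=0\}$ is the $z$-image of $U\in(0,u^*)$ traversing the two edges of $\mathcal{Q}_3$, with the upper edge producing $\RE z\in(-\infty,0)$ and the lower edge producing $\RE z\in(0,\infty)$. For $x>\alpha$, set $U_0:=U(x)\in(0,u^*)$ on the lower edge; then $z(-U_0)=x+2i\beta$ and $-U_0\in(-u^*,0)$ lies on the same edge, so inverting gives $U(x+2i\beta)=-U_0=-U(x)$. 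The case $x<-\alpha$ is handled identically on the upper edge, or deduced from the even symmetry $U(-z)=U(z)$ of (\ref{Symm}).

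The main obstacle is the bookkeeping for single-valuedness in the first paragraph. The proof of Theorem~\ref{theorem-001}(d) traces the image of $\partial Q$ only enough to establish the containment $S\subset z(Q)$, while explicitly noting that the image of the full area inside $\Gamma^+$ is multi-sheeted. To invert unambiguously on the interior of $S$, one must verify that the extra sheets detach from the principal one exactly along the slits $\mathcal{S}_{1,2}$ and nowhere else. This is a topological continuation of the argument-principle analysis of part~(d), tracking which preimage in $Q$ corresponds to the principal branch, and it is this careful sheet-selection that underwrites both the analytic continuation claimed in the corollary and the logarithmic singular behavior at $z_0=-\alpha+i\beta$ needed later for Theorem~\ref{theorem-main}.
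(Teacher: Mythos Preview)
Your approach is essentially the same as the paper's: construct $U(z)$ on $S$ as the inverse of the map $z(U)$ from Theorem~\ref{theorem-001}, using the nonvanishing of $z'(U)=1/\sqrt{f(U)}$ on the interior of $Q$, and then read off (a) from (\ref{Small_r01})--(\ref{Small_r02}) and (b) from (\ref{Symmetry_U}). The paper's own proof is terser---it dispatches single-valuedness in two lines by asserting that local bijectivity at every interior point forces global injectivity of $U$ on ${\rm int}\,S$ and of $z$ on ${\rm int}\,U(S)$---so the sheet-bookkeeping you flag as the main obstacle is precisely what the paper leaves implicit rather than spells out.
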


\begin{proof}
The function $U(z)$ defined by implicit formula (\ref{exact-complex}) coincides with $U(x)$
on the real axis. By Theorem \ref{theorem-001}, the function $U(z)$ is defined in $S$.
This implies that $U(z)$ is an analytic continuation of $U(x)$ to $S$. Let $U(S)\subset Q$
be the image of $S$ on the $U$-plane. Note, that if $\tilde{z}$ is an arbitrary internal point of $S$
and $\tilde{U}=U(\tilde{z})$, then $z'(\tilde{U}) \ne 0$ and $U'(\tilde{z}) \ne 0$.
This means that there is one-to one-correspondence between some neighbourhood
of $\tilde{z}$ on the $z$-plane and some neighbourhood of $\tilde{U}$ on the $U$-plane. Therefore, (i) there are no
two different internal points $z_1,z_2 \in S$ such that $U(z_1)=U(z_2)$ and
(ii) there are no two different points $U_1,U_2$ in the interior of $U(S)$ such that $z(U_1)=z(U_2)$.
Hence $U(z)$ is a single-valued function in the interior of $S$ and $z(U)$ is a single-valued function in the interior of $U(S)$.
The assertion (a) follows from the formulas (\ref{Small_r01})-(\ref{Small_r02}).
The assertion (b) follows from (\ref{Symmetry_U}).
\end{proof}

\subsection{Asymptotic properties of $U(z)$}\label{AsPropU}

The local behavior of the solution $U(z)$ near the singularity $z_0$
with ${\rm Re}(z_0) < 0$ and ${\rm Im}(z_0) > 0$ is prescribed by the following result.

\begin{lemma}
For every $z$ near $z_0$ with $\arg(z_0 - z) \in \left(-\frac{\pi}{2},\frac{3\pi}{2}\right)$,
the solution $U$ satisfies
\begin{equation}
\label{singular-behavior}
U(z) = i + \sqrt{\theta} (z - z_0) \sqrt{\log(z_0-z)} \left[ 1 +
\mathcal{O}\left(\frac{\log|\log|z-z_0||}{\log|z-z_0|}\right) \right] \quad \mbox{\rm as} \quad z \to z_0,
\end{equation}
where $\sqrt{u}$ is defined at the main branch with $\arg(u) \in (0,2\pi)$.
\label{lemma-4}
\end{lemma}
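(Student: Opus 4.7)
The strategy is to invert the implicit relation (\ref{exact-complex}) in a neighborhood of $u=i$, where $U(z_0)=i$ by Theorem \ref{theorem-001}(c). The integrand develops a logarithmic branch singularity at $u=i$ because $1+u^2=(u-i)(u+i)$ vanishes simply there, so
$$\theta\log(1+u^2)-u^2 = \theta\log(u-i) + h(u),$$
where $h(u):=\theta\log(u+i)-u^2$ is analytic in a neighborhood of $u=i$ with $h(i)=\theta\log(2i)+1$. Setting $w:=u-i$ recasts (\ref{exact-complex}) as
$$z-z_0 \;=\; \int_0^{w}\frac{d\tilde w}{\sqrt{\theta\log\tilde w+\tilde h(\tilde w)}},$$
where $\tilde h(\tilde w):=h(i+\tilde w)$ is analytic at $\tilde w=0$.

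For $|w|$ small, $|\log\tilde w|$ is large on the integration path, so $\theta\log\tilde w$ dominates the bounded analytic perturbation $\tilde h(\tilde w)$, and a binomial expansion gives
$$\frac{1}{\sqrt{\theta\log\tilde w+\tilde h(\tilde w)}}=\frac{1}{\sqrt{\theta\log\tilde w}}\bigl[1+\mathcal{O}(1/\log\tilde w)\bigr].$$
The leading integral I would evaluate by the rescaling $\tilde w=ws$, $s\in[0,1]$:
$$\int_0^w\frac{d\tilde w}{\sqrt{\log\tilde w}} \;=\; w\int_0^1\frac{ds}{\sqrt{\log w+\log s}} \;=\; \frac{w}{\sqrt{\log w}}\bigl[1+\mathcal{O}(1/\log w)\bigr],$$
where the expansion inside the $s$-integral uses $\int_0^1\log s\,ds=-1$ to control the first correction. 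Combining the two estimates yields
$$z-z_0 \;=\; \frac{w}{\sqrt{\theta\log w}}\bigl[1+\mathcal{O}(1/\log w)\bigr].$$

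The core step is inverting this for $w$ in terms of $\zeta:=z-z_0$. From the implicit fixed-point equation $w=\sqrt{\theta\log w}\,\zeta\,[1+\mathcal{O}(1/\log w)]$, taking logarithms gives $\log w=\log\zeta+\tfrac12\log(\theta\log w)+\mathcal{O}(1/\log w)$, and one pass of iteration produces $|\log w-\log\zeta|=\mathcal{O}(\log|\log\zeta|)$, hence
$$\sqrt{\log w}=\sqrt{\log\zeta}\bigl[1+\mathcal{O}(\log|\log\zeta|/\log\zeta)\bigr].$$
Substituting back gives $w=\sqrt{\theta}\,\zeta\sqrt{\log\zeta}\,[1+\mathcal{O}(\log|\log\zeta|/\log\zeta)]$. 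Finally, $\log(z-z_0)$ and $\log(z_0-z)$ differ by the bounded constant $\pm i\pi$ (the sign dictated by $\arg(z_0-z)\in(-\pi/2,3\pi/2)$), which is absorbed into the error term for small $|z-z_0|$, producing (\ref{singular-behavior}).

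The principal obstacle is the double-logarithmic remainder: one cannot obtain it from a single Taylor expansion but must iterate carefully, since the natural small parameter $(\log w)^{-1}$ and the inversion-induced parameter $(\log\zeta)^{-1}$ agree only modulo a factor of order $\log|\log\zeta|/\log\zeta$. Checking that the branch choices ($\arg(z_0-z)\in(-\pi/2,3\pi/2)$ and $\arg u\in(0,2\pi)$ for $\sqrt u$) are consistent with the integration contour approaching $z_0$ from within the strip $S$ is a routine bookkeeping step underpinned by Theorem \ref{theorem-001}.
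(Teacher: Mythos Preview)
Your proof is correct and follows essentially the same strategy as the paper: isolate the dominant term $\theta\log(u-i)$ in the integrand of (\ref{exact-complex}) near $u=i$, obtain the asymptotic relation $z-z_0 \sim w/\sqrt{\theta\log w}$ for $w=U-i$, and then invert. The paper's execution differs only in tactical choices: it uses the rotated substitution $U=i(1-V)$ (so $V=iw$), evaluates the leading integral via the identity $(\log v)^{-1/2}=\tfrac{d}{dv}\bigl[v(\log v)^{-1/2}\bigr]+\tfrac{1}{2}(\log v)^{-3/2}$ instead of your rescaling $\tilde w=ws$, and carries out the inversion by an explicit implicit-function-theorem argument (substituting $V=i\sqrt{\theta\log(z_0-z)}(z-z_0)W$ and showing $W\to 1$) rather than by your logarithm iteration; both routes yield the same $\log|\log|/\log$ remainder.
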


\begin{proof}
Combining (\ref{exact-complex}) and (\ref{z_0}) yields the formula
$$
z - z_0 = \int_{i}^{U} \frac{du}{\sqrt{\theta\log(1+u^2)-u^2}}.
$$
Let us first define $U$ on the imaginary axis below $i$ so that we can write
$U = i(1-V)$ with $V$ real and positive. Using the similar representation
for the integration variable $u = i (1-v)$ yields
$$
z - z_0 = (-i) \int_{0}^V \frac{dv}{\sqrt{\theta \log v + \theta \log(2 - v) + (1-v)^2}}.
$$
In the limit $V \to 0$, the integrand can be expanded as
$$
\int_{0}^V \frac{dv}{\sqrt{\theta\log v}} \left[ 1 + \mathcal{O}\left(\frac{1}{|\log v|} \right) \right].
$$
Since
$$
\frac{1}{\sqrt{\log v}} = \frac{d}{dv} \left[ \frac{v}{\sqrt{\log v}} \right] + \frac{1}{2 \sqrt{(\log v)^3}},
$$
the integral is represented asymptotically as
\begin{eqnarray}
z - z_0 = \frac{-i V}{\sqrt{\theta \log V}} \left[ 1 + \mathcal{O}\left(\frac{1}{|\log V|}\right) \right] \quad \mbox{\rm as} \quad V \to 0. \label{Label}
\end{eqnarray}
Define $\sqrt{u}$ at the main branch with $\arg(u) \in (0,2\pi)$ so that
if $V$ is real and positive, then $z - z_0$ is real and negative.
If $\arg(V) \in \left(-\frac{\pi}{2},\frac{3\pi}{2}\right)$ so that
$\arg(U - i) \in (-\pi,\pi)$ like on Fig. \ref{Tilde_Q}(a),
then $\arg(z-z_0) \in \left(-\frac{3\pi}{2},\frac{\pi}{2} \right)$
like on Fig. \ref{Tilde_Q}(b). Hence, the function
(\ref{Label}) is continued in the open region with $\arg(z_0 - z) \in \left(-\frac{\pi}{2},\frac{3\pi}{2}\right)$.

It remains to justify the asymptotic expansion (\ref{singular-behavior}).
To do so, we use the implicit function theorem. By substitution
\begin{equation}
\label{V-representation}
V = i \sqrt{\theta \log(z_0-z)} (z-z_0) W,
\end{equation}
we convert the expansion (\ref{Label}) to the nonlinear equation:
\begin{eqnarray*}
W \left[ 1 + \mathcal{O}\left(\frac{1}{|\log W| + |\log(z_0-z)|}\right) \right]
= \sqrt{1 + \frac{\log W}{\log(z_0-z)} + \frac{\log(\log(z_0-z)) + \log \theta  - \pi}{2 \log(z_0-z)}}.
\end{eqnarray*}
Let us define
\begin{eqnarray*}
\mu := \frac{\log(\log(z_0-z))}{\log(z_0-z)}, \quad \nu := \frac{1}{\log(z_0-z)},
\end{eqnarray*}
so that $\mu \to 0$ and $\nu \to 0$ as $z \to z_0$ along any path in the domain on Fig. \ref{Tilde_Q}(b).
Since $\mu = -\nu \log(\nu)$, the two variables are dependent of each other and $|\nu| \ll |\mu|$.
Fix the path $z \to z_0$ and invert the map $\mathbb{C} \ni \nu \mapsto \mu := -\nu \log(\nu) \in \mathbb{C}$
to obtain the map $\mu \mapsto \nu$ satisfying $\lim_{\mu \to 0} \nu(\mu) = \lim_{\mu \to 0} \nu'(\mu) = 0$.
The nonlinear equation for $W$ can then be rewritten as the root-finding problem $F(W,\mu) = 0$,
where
\begin{eqnarray*}
F(W,\mu) :=
W \left[ 1 + \mathcal{O}\left(\frac{|\nu(\mu)|}{1 + |\nu(\mu)| |\log W|}\right) \right]
- \sqrt{1 + \frac{1}{2} \mu + \frac{1}{2} (\log \theta  - \pi + \log W) \nu(\mu)}.
\end{eqnarray*}
The function $F(W,\mu) : \mathbb{C} \times \mathbb{C} \to \mathbb{C}$ is $C^1$ in $W$ at $W = 1$
and $C^1$ in $\mu$ along the path $\mu \to 0$ with $\lim_{\mu \to 0} F(1,\mu) = 0$,
$\lim_{\mu \to 0} \partial_W F(1,\mu) = 1$, and $\lim_{\mu \to 0} \partial_{\mu} F(1,\mu) = -1$. By the implicit function theorem,
there is an unique $C^1$ map $\mu \mapsto W$ along the path $\mu \to 0$ such that $\lim_{\mu \to 0} W(\mu) = 1$ and
$\lim_{\mu \to 0} W'(\mu) = 1$, which is written in the original variables as follows:
$$
W(z) = 1 + \mathcal{O}\left(\frac{\log|\log|z-z_0||}{\log|z-z_0|} \right).
$$
Substitution of this expansion to $U = i (1 - V)$ with $V$ given by (\ref{V-representation})
yields expansion (\ref{singular-behavior}) for every $z$ near $z_0$ with
$\arg(z_0 - z) \in \left(-\frac{\pi}{2},\frac{3\pi}{2}\right)$,
\end{proof}

The symmetry reflection (\ref{Symm}) yields the local behaviour of the solution near the symmetric
singularity $z_0^* = -\bar{z}_0$ with ${\rm Re}(z_0^*) > 0$.

\begin{corollary}
For every $z$ close to $z_0^* = -\bar{z}_0$ with $\arg(z-z_0^*) \in \left(-\frac{3\pi}{2},\frac{\pi}{2}\right)$,
the solution $U$ satisfies $U(z) = \overline{U(-\bar{z})}$ with
\begin{equation}
\label{singular-behavior-2}
U(z) = -i - \sqrt{\theta} (z - z_0^*) \sqrt{\log(z_0^* - z)} \left[ 1 +
\mathcal{O}\left(\frac{\log|\log|z-z_0^*||}{\log|z-z_0^*|}\right) \right] \quad
\mbox{\rm as} \quad  z \to z_0^*.
\end{equation}
\label{corollary-4}
\end{corollary}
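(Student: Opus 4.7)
The plan is to obtain this corollary as an immediate consequence of Lemma \ref{lemma-4} combined with the two symmetries $U(\bar z)=\overline{U(z)}$ and $U(-z)=U(z)$ from (\ref{Symm}). Composing these symmetries gives the identity $U(z)=\overline{U(-\bar z)}$, which is the first assertion of the corollary. Since $z_0^{*}=-\bar z_0$, whenever $z$ lies close to $z_0^{*}$ the reflected point $-\bar z$ lies close to $z_0$, so Lemma \ref{lemma-4} is applicable to $U(-\bar z)$. I would then read off (\ref{singular-behavior-2}) by conjugating the expansion (\ref{singular-behavior}) evaluated at $-\bar z$.

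Before conjugating, I would translate the angular hypothesis. Set $w := z-z_0^{*}$ and compute directly that $z_0-(-\bar z)=\overline{z-z_0^{*}}=\bar w$ and $-\bar z-z_0=\overline{z_0^{*}-z}=-\bar w$. Hence $\arg(z_0-(-\bar z))=-\arg(z-z_0^{*})$, so the stated range $\arg(z-z_0^{*})\in\bigl(-\tfrac{3\pi}{2},\tfrac{\pi}{2}\bigr)$ is equivalent to $\arg(z_0-(-\bar z))\in\bigl(-\tfrac{\pi}{2},\tfrac{3\pi}{2}\bigr)$, which is exactly the region in which Lemma \ref{lemma-4} supplies the expansion.

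With this in place, I would substitute into
\[
U(-\bar z)=i+\sqrt{\theta}\,(-\bar z-z_0)\sqrt{\log(z_0-(-\bar z))}\left[1+\mathcal{O}\!\left(\frac{\log|\log|\bar w||}{\log|\bar w|}\right)\right],
\]
rewrite $-\bar z-z_0=\overline{z_0^{*}-z}$ and $z_0-(-\bar z)=\overline{z-z_0^{*}}$, and then take complex conjugates. The estimate in the remainder is invariant under conjugation because $|\bar w|=|z-z_0^{*}|$, so that factor transfers unchanged. The $i$ becomes $-i$ and the polynomial factor becomes $-(z-z_0^{*})$, producing the overall minus sign in (\ref{singular-behavior-2}).

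The only real bookkeeping issue, and the one I expect to be the main obstacle, is verifying that the outer $\sqrt{\cdot}$ on the principal branch $\arg(u)\in(0,2\pi)$ interacts correctly with conjugation. Under this branch one has $\overline{\sqrt{u}}=-\sqrt{\bar u}$, so $\overline{\sqrt{\log(\overline{z-z_0^{*}})}}=-\sqrt{\log(z-z_0^{*})}$; together with the identity $\sqrt{\log(z-z_0^{*})}$ versus $\sqrt{\log(z_0^{*}-z)}$ (which differ by a branch shift consistent with the domain of validity above), the extra minus sign is absorbed into the $-\sqrt{\theta}(z-z_0^{*})$ factor displayed in (\ref{singular-behavior-2}). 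I would check this by following the argument of $\log$ around the singularity, using the region $\arg(z-z_0^{*})\in(-\tfrac{3\pi}{2},\tfrac{\pi}{2})$ to certify that we remain on a single branch throughout. Once the signs are verified, the corollary follows.
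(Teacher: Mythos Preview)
Your approach is exactly the one the paper takes: the paper does not give a separate proof of this corollary, but simply remarks before stating it that ``the symmetry reflection (\ref{Symm}) yields the local behaviour of the solution near the symmetric singularity $z_0^* = -\bar{z}_0$''. Your proposal fills in precisely the details the paper omits --- the translation of the angular hypothesis via $z_0-(-\bar z)=\overline{z-z_0^*}$, the conjugation of the expansion from Lemma~\ref{lemma-4}, and the branch bookkeeping for $\sqrt{\cdot}$ on $\arg(u)\in(0,2\pi)$ --- so there is nothing to add.
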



Finally, we define the Fourier transform of the solitary wave solution $U$ by
\begin{equation}
\label{FT-of-U}
I(\varkappa) := \int_{\mathbb{R}} U(x) e^{i \varkappa x} dx, \quad \varkappa \in \mathbb{R}.
\end{equation}
The following lemma computes the asymptotic behavior of the Fourier integral
$I(\varkappa)$ as $\varkappa \to \infty$.

\begin{lemma}
\label{lemma-5}
It is true that
\begin{gather}\label{I_varkappa}
I(\varkappa) = \frac{2\pi\sqrt{\theta}}{\varkappa^2\sqrt{\log \varkappa}}e^{-\beta\varkappa}\cos(\alpha\varkappa)
\left[ 1 + \mathcal{O}\left(\frac{1}{\log\varkappa}\right) \right],\quad \varkappa\to\infty.
\end{gather}
\end{lemma}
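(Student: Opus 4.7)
The strategy is the classical Darboux-type contour shift, exploiting that $U(z)$ extends analytically to the strip $S$ minus the cuts $\mathcal{S}_1,\mathcal{S}_2$ whose bottom endpoints $z_0=-\alpha+i\beta$ and $z_0^*=\alpha+i\beta$ are the singularities of $U$ nearest to $\mathbb{R}$. Apply Cauchy's theorem to $U(z)e^{i\varkappa z}$ on the rectangle with corners $\pm R$ and $\pm R+i(\beta+\eta)$ for small $\eta>0$, indented by hairpins around the cuts. Corollary \ref{theorem-000}(a) forces the two vertical sides to vanish as $R\to\infty$ and bounds the horizontal segment at $\IM z=\beta+\eta$ by $\mathcal{O}(e^{-(\beta+\eta)\varkappa})$, which is exponentially smaller than the claimed leading term. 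Hence
\begin{equation*}
I(\varkappa)=\mathcal{H}_1(\varkappa)+\mathcal{H}_2(\varkappa)+\mathcal{O}(e^{-(\beta+\eta)\varkappa}),
\end{equation*}
where $\mathcal{H}_j$ is the hairpin integral around $\mathcal{S}_j$.

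Parametrize $\mathcal{S}_1$ by $z=z_0+is$, $s\in(0,\eta)$, and let $U^\pm(s)$ denote the boundary values on the right and left sides of the cut. The small arc around $z_0$ contributes zero in the limit since the local behavior (\ref{singular-behavior}) is integrable, so
\begin{equation*}
\mathcal{H}_1(\varkappa)=-i\,e^{i\varkappa z_0}\int_0^{\eta}\bigl[U^+(s)-U^-(s)\bigr]e^{-s\varkappa}\,ds.
\end{equation*}
Only the factor $\sqrt{\log(z_0-z)}$ in (\ref{singular-behavior}) jumps: from the branch prescription of Lemma \ref{lemma-4}, $\arg(z_0-z)=3\pi/2$ on the right and $\arg(z_0-z)=-\pi/2$ on the left of $\mathcal{S}_1$, so $\log(z_0-z)$ jumps by $2\pi i$. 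Using the linearization $\sqrt{L+2\pi i}-\sqrt{L}\sim\pi i/\sqrt{L}$ together with $\sqrt{L}\sim i\sqrt{|\log s|}$ at the main branch $\arg\in(0,2\pi)$ yields
\begin{equation*}
U^+(s)-U^-(s)=i\pi\sqrt{\theta}\,\frac{s}{\sqrt{|\log s|}}\left[1+\mathcal{O}\!\left(\frac{1}{|\log s|}\right)\right],\qquad s\to 0^+.
\end{equation*}

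Substituting this expansion and changing variables $s=t/\varkappa$ reduces $\mathcal{H}_1$ to a Watson-lemma integral. Since $|\log(t/\varkappa)|=\log\varkappa-\log t=\log\varkappa\bigl[1+\mathcal{O}(1/\log\varkappa)\bigr]$ uniformly on the effective support of the exponential weight $e^{-t}$,
\begin{equation*}
\int_0^{\eta}\frac{s\,e^{-s\varkappa}}{\sqrt{|\log s|}}\,ds=\frac{1}{\varkappa^2\sqrt{\log\varkappa}}\int_0^\infty t\,e^{-t}\,dt\cdot\left[1+\mathcal{O}\!\left(\frac{1}{\log\varkappa}\right)\right]=\frac{1}{\varkappa^2\sqrt{\log\varkappa}}\left[1+\mathcal{O}\!\left(\frac{1}{\log\varkappa}\right)\right],
\end{equation*}
so that $\mathcal{H}_1(\varkappa)=\pi\sqrt{\theta}\,\varkappa^{-2}(\log\varkappa)^{-1/2}e^{-\beta\varkappa}e^{-i\alpha\varkappa}[1+\mathcal{O}(1/\log\varkappa)]$. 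The same computation at $\mathcal{S}_2$, using Corollary \ref{corollary-4} or equivalently the symmetries (\ref{Symm}) applied to $\mathcal{H}_1$, produces $\mathcal{H}_2(\varkappa)=\pi\sqrt{\theta}\,\varkappa^{-2}(\log\varkappa)^{-1/2}e^{-\beta\varkappa}e^{+i\alpha\varkappa}[1+\mathcal{O}(1/\log\varkappa)]$. Summing and invoking $e^{i\alpha\varkappa}+e^{-i\alpha\varkappa}=2\cos(\alpha\varkappa)$ gives (\ref{I_varkappa}).

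The principal technical obstacle is that the singularity is non-algebraic: the jump of $\sqrt{\log(z_0-z)}$ across the cut is not of power type but decays as $1/\sqrt{|\log s|}$, and the local expansion (\ref{singular-behavior}) carries the slow error $\mathcal{O}(\log|\log|z-z_0||/\log|z-z_0|)$. Verifying that this error, together with the substitution error $\log t/\log\varkappa$ in the Watson step, combine into the single correction $\mathcal{O}(1/\log\varkappa)$ in (\ref{I_varkappa}) is the delicate part; one must also keep the branch choices of $\log$ and $\sqrt{\cdot}$ consistent with the conventions fixed in Lemma \ref{lemma-4} and Corollary \ref{corollary-4} throughout the jump computation.
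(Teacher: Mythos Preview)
Your argument is essentially the same as the paper's: both shift the integration contour upward into the strip $S$, reduce $I(\varkappa)$ to the two hairpin integrals around the cuts $\mathcal{S}_1,\mathcal{S}_2$, compute the jump of $U$ across each cut from the local expansion of Lemma~\ref{lemma-4}/Corollary~\ref{corollary-4}, and finish with a Laplace--Watson asymptotic for the resulting integral $\int_0^{\cdot} s\,|\log s|^{-1/2}e^{-s\varkappa}\,ds$. The only cosmetic difference is that the paper pushes the horizontal segment all the way to $\IM z=2\beta$ and then invokes the reflection $U(x+2i\beta)=-U(x)$ from Corollary~\ref{theorem-000}(b) to bound that segment by $e^{-2\beta\varkappa}\int_{\mathbb{R}}|U|$, whereas you stop at $\IM z=\beta+\eta$ and appeal to integrability of $U$ along that line; both give an error that is exponentially subdominant to $e^{-\beta\varkappa}$, so the conclusion is unaffected.
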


\begin{proof}
Consider the contour $\Gamma_1=A_1B_1C_1D_1E_1F_1G_1$ shown in Fig.\ref{Contour} and the integral
\begin{gather*}
\int_{\Gamma_1} U(z) e^{i \varkappa z} dz.
\end{gather*}
By Theorem \ref{theorem-001}, the integrand is analytic inside $\Gamma_1$, hence the integral is equal to zero.
Therefore, we decompose the integral into the sum of integrals
\begin{align}
0 = \int_{\Gamma_1} U(z) e^{i \varkappa z} dz&=\int_{-R}^R+\int_{[E_1F_1]}+\int_{[F_1G_1]}+\int_{{\it   l}_\downarrow^+}+
\int_{C_\rho^+}+\int_{{\it l}_\uparrow^+}+\nonumber\\[2mm]
&+\int_{\alpha+2i\beta}^{-\alpha+2i\beta}+
\int_{{\it l}_\downarrow^-}+
\int_{C_\rho^-}+\int_{{\it l}_\uparrow^-}+\int_{[A_1B_1]}+\int_{[B_1C_1]}U(z) e^{i \varkappa z}~dz
\label{SumInt}
\end{align}
and consider each integral consecutively. We have
\begin{equation}
\label{relat-1}
\lim_{R \to \infty} \int_{-R}^R U(z) e^{i \varkappa z} dz = I(\varkappa).
\end{equation}
Thanks to (a) in Corollary \ref{theorem-000}, we obtain
\begin{equation}
\label{relat-2}
\lim_{R \to \infty} \int_{[E_1F_1]} U(z) e^{i \varkappa z} dz
+ \int_{[B_1C_1]} U(z) e^{i \varkappa z} dz = 0.
\end{equation}
By using the parametrization $z=t+2i\beta$ and the symmetry property in (b) of Corollary \ref{theorem-000},
we obtain
\begin{align}
& \phantom{t} \lim_{R \to \infty} \left|\int_{[A_1B_1]} U(z) e^{i \varkappa z}~dz\right|+\left|\int_{[F_1G_1]} U(z) e^{i \varkappa z}~dz\right| \nonumber \\[2mm]
& =  \left|\int_{-\infty}^{-\alpha} U(t+2i\beta)e^{-2\beta\varkappa} e^{it\varkappa}~dt\right|+\left|\int_{\alpha}^{\infty} U(t+2i\beta)e^{-2\beta\varkappa} e^{it\varkappa}~dt\right| \nonumber\\[2mm]
&\leq \int_{-\infty}^{-\alpha}\left|U(t)\right|e^{-2\beta\varkappa} ~dt+\int_{\alpha}^{\infty} \left|U(t)\right|e^{-2\beta\varkappa} ~dt
\leq e^{-2\beta\varkappa}\int_{\mathbb{R}} \left|U(t)\right| dt.
\label{UpSidesEst}
\end{align}
Because the function $U$ is bounded on the interval $[-\alpha+2i\beta;\alpha+2i\beta]$, we obtain
\begin{gather}\label{Central}
\left|\int_{\alpha+2i\beta}^{-\alpha+2i\beta} U(z)e^{i\varkappa z}~dz\right|\leq 2\beta \max_{[-\alpha+2i\beta;\alpha+2i\beta]}|U(z)|e^{-2\beta\varkappa}.
\end{gather}
It remains to estimate the integrals
\begin{gather*}
I_+(\varkappa)=\int_{{\it   l}_\downarrow^+}+
\int_{C_\rho^+}+\int_{{\it l}_\uparrow^+} U(z)e^{i\varkappa z}~dz,
\quad
I_-(\varkappa)=\int_{{\it   l}_\downarrow^-}+
\int_{C_\rho^-}+\int_{{\it l}_\uparrow^-} U(z)e^{i\varkappa z}~dz.
\end{gather*}

\begin{figure}
{\centerline{\includegraphics [scale=0.6]{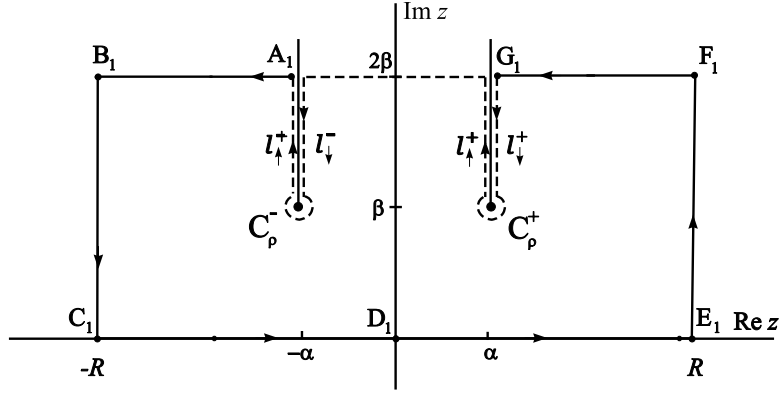}}}
\caption{The contour $\Gamma_1$  for the proof of (\ref{I_varkappa}).} \label{Contour}
\end{figure}

Formulas (\ref{SumInt}), (\ref{relat-1}), (\ref{relat-2}), (\ref{UpSidesEst}), and (\ref{Central}) imply that
\begin{equation}
\label{I-representation}
I(\varkappa)=-I_+(\varkappa)-I_-(\varkappa)+\mathcal{O}\left(e^{-2\beta\varkappa}\right),
\end{equation}
hence we need to determine the asymptotical behavior of $I_{\pm}(\varkappa)$ as $\varkappa\to\infty$.
Thanks to the singular behavior (\ref{singular-behavior-2}), the integral
\begin{gather*}
\int_{C_\rho^+}U(z)e^{i\varkappa z}~dz
\end{gather*}
tends to zero as $\rho\to 0$. Therefore
\begin{gather*}
I_+(\varkappa)=\int_{\alpha+i\beta}^{\alpha+2i\beta} U_1(z)e^{i\varkappa z}~dz+\int_{\alpha+2i\beta}^{\alpha+i\beta} U_2(z)e^{i\varkappa z}~dz,
\end{gather*}
where $U_1(z)$ and $U_2(z)$ are the values of $U(z)$ on both sides of the branch cut at ${\rm Re}(z) = \alpha$ and
${\rm Im}(z) > \beta$. Introducing parametrization $z=\alpha+i(\beta+t)$ on the path of the integration  one has
\begin{align*}
I_+(\varkappa) = i e^{(-\beta+i\alpha)\varkappa} \int_{0}^{\beta}
\left[ U_1(\alpha+i(\beta+t))-U_2(\alpha+i(\beta+t)) \right] e^{-t\varkappa}~dt,
\end{align*}
where the boundary values satisfy the singular behavior from (\ref{singular-behavior-2}):
\begin{align}
U_1(\alpha+i(\beta+t))&=-i-i\sqrt{\theta}t\sqrt{\log t + \frac{\pi i}2} \left[ 1 +
\mathcal{O}\left(\frac{\log|\log t|}{|\log t|}\right) \right],\quad t\to +0, \label{As_U1} \\[2mm]
U_2(\alpha+i(\beta+t))&=-i-i\sqrt{\theta}t\sqrt{\log t + \frac{5\pi i}2} \left[ 1 +
\mathcal{O}\left(\frac{\log|\log t|}{|\log t|}\right) \right],\quad t\to +0. \label{As_U2}
\end{align}
Therefore, we obtain as $t\to +0$:
\begin{eqnarray}
\nonumber \tilde U(t) & := & U_1(\alpha+i(\beta+t))-U_2(\alpha+i(\beta+t)) \\
& = & -i\sqrt{\theta}t\left(\sqrt{\log t + \frac{\pi i}2}-\sqrt{\log t + \frac{5\pi i}2}\right)\left[ 1 +
\mathcal{O}\left(\frac{\log|\log t|}{|\log t|}\right) \right] \nonumber\\
&=& - \frac{2\pi\sqrt{\theta} t}{\sqrt{\log t + \frac{\pi i}2}+\sqrt{\log t + \frac{5\pi i}2}}\left[ 1 +
\mathcal{O}\left(\frac{\log|\log t||}{|\log t|}\right)\right] \nonumber \\
&=& \frac{i\pi\sqrt{\theta} t}{\sqrt{|\log t|}}\left[ 1 +
\mathcal{O}\left(\frac{\log|\log t|}{\sqrt{\log|t|}}\right)\right].
\label{I-integrand}
\end{eqnarray}
The integral $I_+(\varkappa)$ computed at the integrand (\ref{I-integrand})
is the Laplace integral with logarithmic singularity at $t=0$.
The asymptotical behavior of $I_+(\varkappa)$ as $\varkappa\to\infty$ is found by the Laplace method,
see formula (1.38) on p. 48 in \cite{Fedoryuk},
\begin{gather}\label{FedForm}
\int_0^{t_0}t^{b-1}|\log t|^ce^{-\varkappa t} f(t)~dt\sim\varkappa^{-b}(\log \varkappa)^c
\sum_{k=0}^\infty a_k \; (\log \varkappa)^{-k},\quad
a_0 = \Gamma(b) f(0),
\end{gather}
where $f(t)\in C^1[0,t_0]$, $b>0$ and $c\in \mathbb{R}$. Making use of the asymptotic
formula (\ref{FedForm}) with $b=2$ and $c=-1/2$ yields
\begin{gather*}
I_+(\varkappa)=-\frac{\pi\sqrt{\theta}}{\varkappa^2\sqrt{\log \varkappa}}e^{(-\beta+i\alpha)\varkappa}
\left[1 + \mathcal{O}\left(\frac{1}{\log\varkappa}\right) \right],\quad \varkappa\to\infty.
\end{gather*}
In the same way, we obtain
\begin{gather*}
I_-(\varkappa)=-\frac{\pi\sqrt{\theta}}{\varkappa^2\sqrt{\log \varkappa}}
e^{(-\beta-i\alpha)\varkappa} \left[ 1 + \mathcal{O}\left(\frac{1}{\log\varkappa}\right) \right],\quad \varkappa\to\infty.
\end{gather*}
By using (\ref{I-representation}) and neglecting the smaller exponential terms, we finally obtain (\ref{I_varkappa}).
\end{proof}

\section{Solitary wave solution to the fourth-order equation}

Here we consider the fourth-order differential equation
\begin{eqnarray}
\varepsilon^2 \frac{d^4 u}{d x^4} + \frac{d^2 u}{dx^2} + u -\frac{\theta u}{1+u^2}=0,
\label{ode}
\end{eqnarray}
where $\varepsilon$ is a small positive parameter. Equation (\ref{ode}) arises as the next-order continuous
approximation for the advance-delay equation (\ref{advance-delay-intro}) taking into account the expansion
(\ref{expansion-difference}) with the correspondence
\begin{equation}
\varepsilon = \frac{h}{2 \sqrt{3}}.\label{h-eps}
\end{equation}
The main goal of this section is
to describe a countable sequence of solitary wave solutions with $\varepsilon$ near $\{ \varepsilon_m \}_{m \in \mathbb{N}}$,
where the sequence $\{ \varepsilon_m \}_{m \in \mathbb{N}}$
accumulates to zero as $m \to \infty$ according to the asymptotic representation:
\begin{equation}
\label{eps-ode}
\eps_m \sim \frac{2 \alpha}{\pi (2m-1)}, \quad m \in \mathbb{N}.
\end{equation}
where $\alpha > 0$ is defined by (\ref{alpha-beta}).
In particular, the spacing between two consequent values of the sequence is asymptotically given by
\begin{equation}
\label{spacing-ode}
\frac{1}{\varepsilon_{m+1}} - \frac{1}{\varepsilon_m} \to \frac{\pi}{\alpha} \quad \mbox{\rm as} \quad m \to \infty.
\end{equation}
With the correspondence (\ref{h-eps}), the asymptotic formula (\ref{eps-ode}) is equivalent to (\ref{spacing-odeintro}).

We obtain the asymptotic values (\ref{eps-ode}) by means of two analytical methods, one relies on the semi-classical analysis
of oscillatory integrals (Section \ref{sec-3-1}) and the other one relies on the beyond-all-order asymptotic expansions
(Section \ref{sec-3-2}). Neither method is rigorous and has been fully justified. Nevertheless, the outcomes of the two methods
are identical and these outcomes are confirmed by the numerical results (Section \ref{NumResults}).

\subsection{Analysis of oscillatory integrals}
\label{sec-3-1}

Let $U$ be the even, positive, and exponentially decaying solution
to the second-order equation (\ref{e=0}) defined in the implicit form by (\ref{exact}).
We are looking for an even solution to the fourth-order equation (\ref{ode}) in the perturbed form $u = U + v$.
Substitution yields the following persistence problem for $v$:
\begin{equation}
\label{perturbed-problem}
L_{\eps} v = H_{\eps} + N(v),
\end{equation}
where
\begin{equation}
\label{operator-L}
L_{\eps} := -\varepsilon^2 \frac{d^4}{d x} - \frac{d^2}{dx^2} + \theta - 1
\end{equation}
is the linearization operator at the zero solution,
\begin{equation}
\label{source-term-H}
H_{\eps} := \eps^2 \frac{d^4 U}{d x^4}
\end{equation}
is the source term, and
\begin{equation}
\label{nonlinear-term-N}
N(v) = -\theta \frac{U^2 (3 + U^2)}{(1+U^2)^2} v +
\theta v^2 \frac{U(3-U^2) + v (1-U^2)}{(1+U^2)^2 (1 + U^2 + 2 U v + v^2)}
\end{equation}
include both linear and nonlinear terms in $v$. If the source term $H_{\eps}$ is zero (if $\eps = 0$),
there exists a solution $v = 0$, hence one can hope that
small $H_{\eps}$ for small $\eps \neq 0$ generates small $v$ in ${\rm Dom}(L_{\eps}) = H^4(\mathbb{R})$
satisfying equation (\ref{perturbed-problem}).
Unfortunately, $L_{\eps}$ is not a Fredholm operator in $L^2(\mathbb{R})$ because $0 \in \sigma(L_{\eps})$.
Since $\sigma(L_{\eps})$ is purely continuous, a bounded solution $v$ of the inhomogeneous equation
\begin{equation}
\label{linear-problem}
L_{\eps} v = H_{\eps}
\end{equation}
in the space of even functions with $v(-x) = v(x)$ for $x \in \mathbb{R}$
develops generally oscillations in $x$ as $|x| \to \infty$ \cite{Volpert,Volpert2}.
The only possibility to avoid oscillations in the bounded solution $v$
solving the inhomogeneous equation (\ref{linear-problem}) is to satisfy the constraint
$I_{\eps} = 0$, where
\begin{equation}
\label{integral-I}
I_{\eps} := \int_{\mathbb{R}} H_{\eps}(x) e^{i k_{\eps} x} dx.
\end{equation}
Here $k_{\eps}$ is the only real positive root of $D_{\eps}(k) = 0$, where
$$
D_{\eps}(k) := -\eps^2 k^4 + k^2 + \theta - 1, \quad k \in \mathbb{R}
$$
is the dispersion relation for the operator $L_{\eps}$.
It is clear that $k_{\eps} = \eps^{-1} + \mathcal{O}(1)$ as $\eps \to 0$
and in particular, $k_{\eps} \to \infty$ as $\eps \to 0$. As is shown in \cite{Volpert},
if $I_{\eps} = 0$, then $v = L_{\eps}^{-1} H_{\eps} \in H^4(\mathbb{R})$.
As is argued heuristically in \cite{prl-alfimov}, if $I_{\eps_0} = 0$
for some small $\eps_0$, then there exists a unique solution $v \in H^4(\mathbb{R})$
to the persistence problem (\ref{perturbed-problem}) for $\eps$ near $\eps_0$.

Hence, we are looking for zeros of $I_{\eps}$ as $\eps \to 0$.
Integrating (\ref{integral-I}) by parts four times yields the equivalent expression for $I_{\eps}$:
\begin{equation}
\label{I-aux}
I_{\eps} = k^4_{\eps}\eps^2\int_{\mathbb{R}} U(x) e^{i k_{\eps} x} dx \equiv k^4_{\eps}\eps^2 I(k_{\eps}),
\end{equation}
where $I(\varkappa)$ with $\varkappa = k_{\eps}$ is given by (\ref{FT-of-U}).
Since $k_{\eps} = \eps^{-1} + \mathcal{O}(1)$ as $\eps \to 0$,
substituting the asymptotic behaviour (\ref{I_varkappa}) into (\ref{I-aux}) yields the asymptotic behavior
\begin{eqnarray}
I_{\eps} \sim \frac{2\pi\sqrt{\theta}}{\sqrt{\log (1/\eps)}}e^{-\beta/\eps}\cos(\alpha/\eps) \quad
\mbox{\rm as} \quad \eps \to 0.
\label{main-ode}
\end{eqnarray}
The leading order of $I_{\eps}$ vanishes at $\{ \eps_m \}_{m \in \mathbb{N}}$
given by (\ref{eps-ode}).

\subsection{Beyond-all-order asymptotics}
\label{sec-3-2}

By studying the fourth-order equation (\ref{ode}) using beyond-all-order methods,
we can recover the asymptotic result \eqref{eps-ode}. In particular, we will show that
the asymptotic solution contains two Stokes lines, each of which switches on an exponentially small
contribution which does not decay in the far field. Solitary wave solutions are associated with
the special cases in which the two contributions cancel.

The central idea of exponential asymptotics is that a divergent asymptotic series expansion
can be truncated optimally, and when this occurs, the truncation remainder is exponentially
small in the small parameter \cite{Berry0,Berry1,Boyd2}. By rescaling the problem to obtain
an equation for the remainder, it is possible to isolate exponentially small contributions
to the asymptotic solution behaviour, which are typically invisible to classical asymptotic power series methods.

The process we use for identifying Stokes lines is based on the matched asymptotic expansion technique described in \cite{Daalhuis1}. We incorporate the use of late-order term analysis, devised by \cite{Chapman1}, which extends the matched asymptotic expansion technique so that it may be applied to nonlinear differential equations. The steps of this method are as follows:
\begin{itemize}
\item Determine the behaviour of the \textit{late-order} asymptotic terms of the solution; that is, expand the solution as an asymptotic power series in a small parameter and then obtain an asymptotic approximation for the $j$th series term in the limit that $j \rightarrow \infty$.
\item Use the asymptotic form of the late-order terms to optimally truncate the asymptotic series. Rescale the equation to obtain an expression for the remainder term.
\item Perform a local asymptotic analysis of the remainder term in the neighbourhood of Stokes lines, and apply matched asymptotic expansions in order to determine the exponentially small quantity that is switched on as the Stokes line is crossed.
\end{itemize}

Using this method, we will establish that the exponentially small oscillations present in the solution $u(x)$, denoted by $u_{\mathrm{osc}}(x)$,
have the following asymptotic behaviour as $\epsilon \rightarrow 0$:
\begin{equation}
u_{\mathrm{osc}}(x) \sim \left\{
        \begin{array}{ll}
            0, & \quad x < -\alpha - \delta, \\
            \frac{\pi\sqrt{\theta}}{2\sqrt{\log{1/\epsilon}}}e^{-\beta/\epsilon}\sin\left(\frac{x + \alpha}{\epsilon}\right),  & \quad
            x \in (-\alpha + \delta, \alpha - \delta), \\
            \frac{\pi\sqrt{\theta}}{2\sqrt{\log{1/\epsilon}}}e^{-\beta/\epsilon}\left[\sin\left(\frac{x + \alpha}{\epsilon}\right)+\sin\left(\frac{x - \alpha}{\epsilon}\right) \right], & \quad x>  \alpha + \delta,
        \end{array}
    \right.\label{1.RNfinal}
\end{equation}
where $\delta = \mathcal{O}(\eps^{1/2})$ describes a neighbourhood of a special line in the complex plane
known as a Stokes curve. The two exponentially small sinusoidal contributions switch on rapidly in this neighbourhood
as the associated Stokes curves are crossed at $x = \pm\alpha$. We will identify the asymptotic approximation \eqref{eps-ode}
by requiring that the solution tends to zero as $x \rightarrow +\infty$.

The particular solution satisfying (\ref{1.RNfinal}) is obtained by requiring that the solution tend to zero as
$x \rightarrow -\infty$, indicating that all exponentially small contributions are zero in this limit.
It is possible to obtain different solutions by imposing conditions such as symmetry about $x = 0$,
or requiring that the solution tend to zero as $x \rightarrow +\infty$.
Each of these choices produces the same result \eqref{eps-ode}.

We begin by expressing $u(x)$ in terms of an asymptotic power series
\begin{equation}
u(x) \sim \sum_{j=0}^{\infty} \epsilon^{2j}u_j(x; \log1/\epsilon),\label{1.series}
\end{equation}
where $u_j$ only contain logarithmic terms in $\epsilon$. In general, including
the logarithmic behaviour requires a nested power series with multiple length scales;
however, this complication can be avoided in the present study
by permitting the series terms $u_j$ to vary logarithmically in $\epsilon$.

By applying the asymptotic series \eqref{1.series} to the governing equation \eqref{ode}, we obtain at leading order
\begin{equation}
\frac{d^2 u_0}{dx^2} + u_0 - \frac{\theta u_0}{1+u_0^2} = 0,
\end{equation}
which is the second-order eqution \eqref{e=0}. We therefore set $u_0(x) = U(x)$,
where $U$ is defined and studied in Section \ref{sec-2}.

By substituting \eqref{1.series} into \eqref{ode} and matching at $\mathcal{O}(\eps^{2j})$
as $\epsilon\rightarrow 0$, we can determine a recurrence relation for the series terms, given by
\begin{equation}
\diff{^4 u_{j-1}}{x^4} + \diff{^2 u_{j}}{x^2} + u_j - \frac{\theta u_j (U^2-1)}{(U^2+1)^2} + \ldots= 0,\qquad j > 1.\label{1.recur}
\end{equation}
The omitted terms are proportional to $u_{j-k}$ with $k > 1$. These terms are smaller
compared to the retained terms in the limit that $j \rightarrow \infty$ due to the divergence
of the asymptotic series \eqref{1.series}. Consequently, these terms do not play a role in
the exponential asymptotic analysis. By Theorem \ref{theorem-001}, $U(x)$ has singularities
in its analytic continuation at $x = \pm \alpha \pm i \beta$, with the signs chosen independently.
We see that obtaining $u_j$ from $u_{j-1}$ requires taking four derivatives of the $u_{j-1}$ term,
and two integrations, This indicates that any singularity in $u_{j-1}$ must also appear in $u_j$,
with a strength that has increased by two. This repeated differentiation causes the series \eqref{1.series} to diverge.

For singularly-perturbed problems, it was observed by \cite{Dingle1} that asymptotic behaviour of the terms
of a divergent asymptotic series obtained by repeated differentiation are given as a sum of factorial-over-power
contributions, containing the most singular terms present at each order of the asymptotic expansion.
Motivated by this observation, we attempt to write the global form of the series terms $u_j$
as a sum of terms with the factorial-over-power expression given by
\begin{equation}
u_j(x) \sim \frac{F(x; \log 1/\epsilon) \Gamma(2 j -1)}{\chi(x)^{2 j-1}}\qquad \mathrm{as}\qquad j \rightarrow \infty,
\label{1.ansatz0}
\end{equation}
where $G$ and $\chi$ are to be defined subject to the condition $\chi(x_0) = 0$, where $x_0$
is a singularity of $U(x)$ nearest to the real axis. Since $U(x)$ has four singularities,
located at $x = \pm \alpha \pm i \beta$, the asymptotic behaviour of the late-order terms $u_j$ is therefore
given by a sum of four factorial-over-power ansatz terms \eqref{1.ansatz0}.

By substituting the ansatz \eqref{1.ansatz0} into the recurrence relation \eqref{1.recur},
it is possible to determine the form of $G$, $\gamma$ and $\chi$ associated with each singularity.
We see that as $j \rightarrow \infty$, $u_j$ is dominant compared to $u_{j-k}$ for $k > 1$.
This confirms that the omitted terms in \eqref{1.recur} will not contribute at any of the orders
required to determine the late-order behaviour of the system. We will perform this analysis
to determine the late-order terms associated with the singularity at $x_0 = -\alpha + i \beta$, and state
the remaining contributions without derivation. In particular, we find at $\mathcal{O}(u_{j+1})$ that
\begin{equation}
\left(\diff{\chi}{x}\right)^2+ 1 = 0, \qquad \chi(x_0) = 0,\label{1.sing1}
\end{equation}
which yields
\begin{equation}
\chi(x) = \pm i (x - x_0).\label{1.chipm}
\end{equation}
We recall that the Stokes phenomenon describes the switching of exponentially small solution components,
and can only occur if $\mathrm{Re}(\chi) > 0$. Therefore, we disregard the negative choice of sign in \eqref{1.chipm}
and write $\chi(x) = i(x-x_0)$.

At $\mathcal{O}(u_{j+1/2})$ we obtain
\begin{equation}
\diff{F}{x} = 0,
\end{equation}
which yields constant $F$. In order to obtain the constant values of $F$ and $\gamma$, we must match
the global behaviour of the late-order ansatz \eqref{1.ansatz0} with the local behaviour of the solution
for $U(x)$ in the neighbourhood of the singularity at $x_0$.

We therefore define a scaled variable $\eta$, defined by $\epsilon \eta = x - x_0$, and
match a local solution in the neighbourhood of the singularity with the inner limit of
the outer solution for the series term ansatz. The technical details of this process
are illustrated in detail in \cite{Daalhuis1}. The asymptotic matching reveals that
\begin{equation}
F = \frac{\sqrt{\theta}}{2\sqrt{\log1/\epsilon}},
\end{equation}
which we present here despite the actual value of $F$ is not used in the subsequent analysis.

Repeating this procedure for the three remaining singularities and adding the results gives as
$j \to \infty$:
\begin{align}
\nonumber u_j(x) \sim \frac{F \Gamma(2j -1)}{[i(x+\alpha-i \beta )]^{2j-1}} +&\frac{F \Gamma(2j -1)}{[-i(x+\alpha+i\beta)]^{2j-1}} \\&+\frac{ F\Gamma(2j -1)}{[i(x-\alpha-i \beta)]^{2j-1}}
+\frac{ F\Gamma(2j -1)}{[-i(x-\alpha + i \beta)]^{2j-1}}.
\label{1.LOT}
\end{align}

Once the late-order terms have been obtained, there exist several methods that may be used
to find the Stokes structure of the solution, and to determine the exponentially small behaviour
that is switched as the Stokes lines are crossed. One can use Borel summation \cite{Bennett1, Berry1, Berry2, Howls1, Howls2}
or matched asymptotic expansions \cite{Chapman1, Daalhuis1} in order to determine the Stokes line contributions.

In both cases, the critical idea is that the divergent asymptotic series may be truncated in an optimal fashion,
which minimizes the approximation error. This optimal truncation point is controlled by the form of
the late-order terms, and may be determined simply from this asymptotic series term behaviour.
We will again concentrate on the contribution due to the singularity at $x_0 = -\alpha + i \beta$.
The corresponding analysis for the remaining contributions is omitted, as they may be obtained in similar fashion.

We truncate the asymptotic series \eqref{1.series} after $N$ terms to obtain
\begin{equation}
u(x) = \sum_{j = 0}^{N-1} \epsilon^{2j} u_j(x) + R_N(x),\label{1.truncated}
\end{equation}
where $R_N$ is the exact remainder after truncation. As is discussed in \cite{Boyd2}, the optimal truncation point
typically occurs at the value of $j$ for which the $j$th term of the asymptotic series is smallest.
We therefore require the value
of $N$ which minimizes $\epsilon^{2N}u_N$. If we assume this occurs after a large number of terms,
we may apply the ansatz \eqref{1.ansatz0} to $u_N$, and then minimize the resultant expression
in order to show that the minimum value is obtained for $N \sim |x+\alpha-i \beta|/2\epsilon$.
We write $N = |x+\alpha-i \beta|/2\epsilon + \omega$, where $0 \leq \omega < 1$, in order
to ensure that $N$ takes integer value.

Substituting the truncated series \eqref{1.truncated} into the governing equation \eqref{ode}
and using the recurrence relation \eqref{1.recur} when necessary, gives as $\epsilon \rightarrow 0$
\begin{equation}
\epsilon^2 \diff{^4 R_N}{x^4} + \diff{^2 R_N}{x^2} + \ldots \sim \epsilon^{2N} \diff{^2 u_N}{x^2},\label{1.remeq}
\end{equation}
where the omitted terms are small in the asymptotic limit.

The solution behaviour for \eqref{1.remeq} in regions where the right-hand side is small,
and the problem may therefore be considered homogeneous, can be obtained using
the Liouville-Green (JWKB) method in the limit that $\epsilon \rightarrow 0$, giving
$R_{N}(x) \sim C  e^{-i (x + \alpha - i\beta)/{\epsilon}}$, where $C$ is some constant.
Importantly, near the Stokes line, the right-hand side of \eqref{1.remeq} will not be negligible,
and this solution is not valid. Consequently, to determine $R_{N}$ near Stokes lines, we write
\begin{equation}
R_{N}(x) \sim \mathcal{S} e^{-i (x + \alpha - i\beta)/{\epsilon}}\qquad \mathrm{as} \qquad \epsilon \rightarrow 0,\label{1.RNswitch}
\end{equation}
where $\mathcal{S}$ is a Stokes multiplier, or a quantity that is constant away from the Stokes line,
but permitted to vary rapidly in the neighbourhood of the Stokes line. The remainder equation \eqref{1.remeq}
becomes, after some simplification
\begin{equation}
\diff{\mathcal{S}}{x} \sim -
\frac{\epsilon^{2N+1} F \Gamma(2N+1)}{4[i(x+\alpha-i\beta)]^{2N+1}}e^{i(x+\alpha-i \beta)/\epsilon},\label{1.Sx}
\end{equation}
Recalling that $N = |x +\alpha-i\beta|/2\epsilon + \omega$, we apply a change of variables,
expressing the singulant in polar coordinates to give $i (x +\alpha - i \beta) = r e^{i \vartheta}$.
Stokes lines typically follow radial directions in this coordinate system, so we restrict our attention
to angular variation. Noting that $N = r/2\epsilon + \omega$, and applying Stirling's formula, we reduce \eqref{1.Sx} to
\begin{equation}
\diff{\mathcal{S}}{\vartheta} \sim \frac{i F \sqrt{2\pi r }}{4\epsilon^{1/2}}\exp\left(\frac{r}{\epsilon}(e^{i \vartheta} - 1) -   i\vartheta\left(\frac{r}{\epsilon}+2 \omega \right)\right)
\end{equation}
as $\epsilon \rightarrow 0$. We see that the right-hand side of this expression is exponentially small,
except on $\vartheta = 0$, across which the Stokes multiplier varies rapidly. This is therefore the Stokes line
associated with the late-order behaviour, and corresponds to $\mathrm{Im}(\chi) = 0$ and $\mathrm{Re}(\chi) > 0$,
as expected. This condition defines a line extending vertically downwards from the singularity at $x_0 = - \alpha + i \beta$
along $\mathrm{Re}(x) = -\alpha$. In order to determine the quantity switched as this Stokes line is crossed,
we apply an inner expansion in the neighbourhood of this curve, given by $\vartheta = \epsilon^{1/2}\phi$. This gives
\begin{equation}
\diff{\mathcal{S}}{\phi} \sim \frac{i  F \sqrt{2\pi r}}{4}e^{-r\phi^2/2}.\label{1.Sdiff}
\end{equation}
We apply the condition that the Stokes contribution is zero as $\mathrm{Re}(x) \rightarrow -\infty$,
which implies that $\mathcal{S}$ is zero on the left-hand side of the Stokes line ($\phi \rightarrow -\infty$).
Solving \eqref{1.Sdiff} with this condition gives
\begin{equation}
\mathcal{S}(\phi) \sim \frac{i F \sqrt{2\pi}}{4}\int_{-\infty}^{\phi/\sqrt{r}} e^{-t^2/2} dt.
\end{equation}
Crossing the Stokes line in the positive $\vartheta$ direction is equivalent to taking the limit as $\phi \rightarrow \infty$.
Hence, as the Stokes line is crossed, the value of $\mathcal{S}$ varies smoothly in a region of
width $\mathcal{O}(\eps^{1/2})$ from zero to $i \mathcal{S}_{\mathrm{on}}$, where $\mathcal{S}_{\mathrm{on}} = { \pi  F}/{2}$.

Hence, using \eqref{1.RNswitch}, the contribution that is switched on across the Stokes line associated
with the singularity at $x_0 = -\alpha + i \beta$, which follows the curve $\mathrm{Re}(x) = -\alpha$, is given by
\begin{equation}
R_N(x) \sim i \mathcal{S}_{\mathrm{on}} e^{-i(x + \alpha-i \beta)}\qquad \mathrm{as} \qquad \epsilon \rightarrow 0.
\end{equation}
Using similar analysis, we find that the Stokes switching contribution associated with the singularity at
$\bar{x}_0 = -\alpha - i \beta$, which also follows the curve $\mathrm{Re}(x) = -\alpha$,
is given by the complex conjugate of this expression. Furthermore the contribution that is switched
on across the Stokes line associated with the singularity at $x_0^* = \alpha + i \beta$,
which follows the curve $\mathrm{Re}(x) = \alpha$ is given by
\begin{equation}
R_N(x) \sim - i \mathcal{S}_{\mathrm{on}}e^{i(x - \alpha-i \beta)}\qquad \mathrm{as} \qquad \epsilon \rightarrow 0,
\end{equation}
while the contribution associated with the singularity at $\overline{x}^*_0 = \alpha - i \beta$,
which is also switched on across the Stokes line $\mathrm{Re}(x) = \alpha$, takes the corresponding
conjugate behaviour. Combining the four contributions gives the composite exponentially small behaviour $R_N$ as
\begin{equation}
R_N(x) \sim i \mathcal{S}_1 e^{-\beta/\epsilon} (e^{-i(x+\alpha)} - e^{i(x+\alpha)}) +
i \mathcal{S}_2 e^{-\beta/\epsilon} (e^{-i(x-\alpha)} - e^{i(x-\alpha)}),
\end{equation}
where $\mathcal{S}_1$ switches rapidly from zero to $\mathcal{S}_{\mathrm{on}}$ in a region of width $\mathcal{O}(\eps^{1/2})$ about the Stokes line $\mathrm{Re}(x) = -\alpha$, while $\mathcal{S}_2$ switches from zero to $\mathcal{S}_{\mathrm{on}}$ about the Stokes line $\mathrm{Re}(x) = \alpha$.

\begin{figure}
\centering
\begin{tikzpicture}
[xscale=0.75,>=stealth,yscale=0.75]

\fill[black,opacity=0.075] (2,2.75) -- (-2,2.75) -- (-2,-2.75) -- (2,-2.75) -- cycle;
\fill[black,opacity=0.2] (2,2.75) -- (2,-2.75) -- (4.5,-2.75) -- (4.5,2.75) -- cycle;

\draw[decoration = {zigzag,segment length = 1.5mm, amplitude = 0.35mm},decorate, line width=0.35mm,black] (2,1.5)--(2,2.75);
\draw[decoration = {zigzag,segment length = 1.5mm, amplitude = 0.35mm},decorate, line width=0.35mm,black] (-2,1.5)--(-2,2.75);
\draw[decoration = {zigzag,segment length = 1.5mm, amplitude = 0.35mm},decorate, line width=0.35mm,black] (-2,-1.5)--(-2,-2.75);
\draw[decoration = {zigzag,segment length = 1.5mm, amplitude = 0.35mm},decorate, line width=0.35mm,black] (2,-1.5)--(2,-2.75);

\draw[gray,line width=0.65mm] (2,1.5) -- (2,-1.5);
\draw[gray,line width=0.65mm] (-2,1.5) -- (-2,-1.5);

\draw[line width=0.75mm] (1.85,1.35) -- (2.15,1.65);
\draw[line width=0.75mm] (2.15,1.35) -- (1.85,1.65);
\draw[line width=0.75mm] (1.85,-1.35) -- (2.15,-1.65);
\draw[line width=0.75mm] (2.15,-1.35) -- (1.85,-1.65);
\draw[line width=0.75mm] (-1.85,1.35) -- (-2.15,1.65);
\draw[line width=0.75mm] (-2.15,1.35) -- (-1.85,1.65);
\draw[line width=0.75mm] (-1.85,-1.35) -- (-2.15,-1.65);
\draw[line width=0.75mm] (-2.15,-1.35) -- (-1.85,-1.65);

\node at (2.2,1.5) [above right] {\scriptsize{$\alpha + i \beta$}};
\node at (-2.2,1.5) [above left] {\scriptsize{$-\alpha + i \beta$}};
\node at (-2.2,-1.5) [below left] {\scriptsize{$-\alpha - i \beta$}};
\node at (2.2,-1.5) [below right] {\scriptsize{$\alpha - i \beta$}};

\draw[->] (-5,0) -- (5,0) node[above] {\scriptsize{$\mathrm{Re}(x)$}};
\draw[->] (0,-3) -- (0,3) node[above] {\scriptsize{$\mathrm{Im}(x)$}};


\draw[thick] (6.25,0.25-0.25) -- (6.75,0.25-0.25) -- (6.75,0.75-0.25) -- (6.25,0.75-0.25) -- cycle;
\fill[black,opacity=0.075] (6.25,-0.25-0.25) -- (6.75,-0.25-0.25) -- (6.75,-0.75-0.25) -- (6.25,-0.75-0.25) -- cycle;
\draw[thick] (6.25,-0.25-0.25) -- (6.75,-0.25-0.25) -- (6.75,-0.75-0.25) -- (6.25,-0.75-0.25) -- cycle;
\fill[black,opacity=0.2] (6.25,-1.25-0.25) -- (6.75,-1.25-0.25) -- (6.75,-1.75-0.25) -- (6.25,-1.75-0.25) -- cycle;
\draw[thick] (6.25,-1.25-0.25) -- (6.75,-1.25-0.25) -- (6.75,-1.75-0.25) -- (6.25,-1.75-0.25) -- cycle;

\draw[gray,line width=0.65mm] (6.25,1.5-0.25) -- (6.75,1.5-0.25);
\draw[decoration = {zigzag,segment length = 1.5mm, amplitude = 0.35mm},decorate, line width=0.35mm,black]  (6.25,2.25-0.25) -- (6.75,2.25-0.25);
\draw[line width=0.75mm] (6.35,3.15-0.25) -- (6.65,2.85-0.25);
\draw[line width=0.75mm] (6.35,2.85-0.25) -- (6.65,3.15-0.25);

\node at (7,1.5-0.25) [right] {Stokes line};
\node at (7,2.25-0.25) [right] {Branch cut};
\node at (7,3-0.25) [right] {Singularity};

\node at (7,0.5-0.25) [right] {$R_N = 0$};

\node at (7,-0.5-0.25) [right] {$R_N \sim \pi  F e^{-\beta/\epsilon}\sin\left(\tfrac{x-\alpha}{\epsilon}\right) $};
\node at (7,-1.5-0.25) [right] {$R_N \sim \pi  F  e^{-\beta/\epsilon}\sin\left(\tfrac{x-\alpha}{\epsilon}\right) $};
\node at (9,-2.5-0.25) [right] {$+\pi \epsilon F  e^{-\beta/\epsilon}\sin\left(\tfrac{x+\alpha}{\epsilon}\right) $};

\end{tikzpicture}
\caption{Complete Stokes structure for $u(x)$. In the region $\mathrm{Re}(x) < -\alpha$,
there are no exponentially small oscillations. In the region $-\alpha < \mathrm{Re}(x) < \alpha$,
there is one oscillatory wave contribution. In the region $\mathrm{Re}(x) > \alpha$, there are
two oscillatory contributions. Each oscillatory contribution is switched on smoothly but rapidly
across the Stokes lines, which are depicted as thick gray lines at $\mathrm{Re}(x) = -\alpha$ and $\mathrm{Re}(x) = \alpha$.}
\label{F:StokesFig1}
\end{figure}
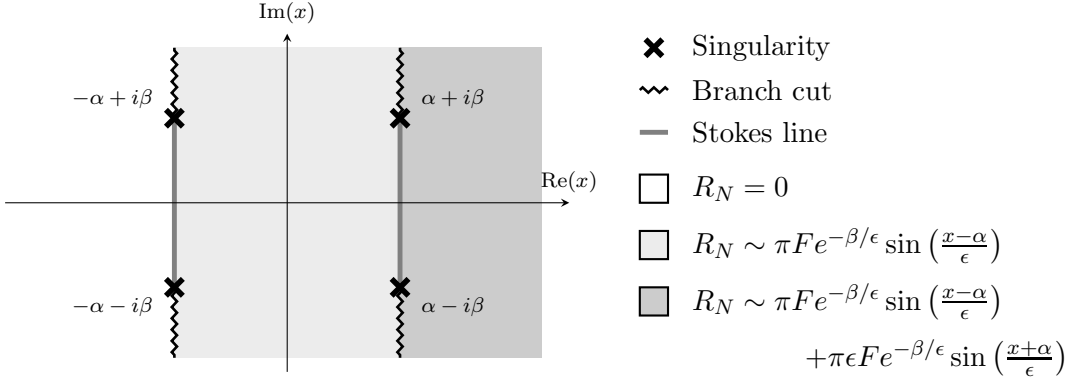

It is simple to rewrite this in terms of real-valued trigonometric functions, giving
\begin{equation}
R_N(x) \sim 2\mathcal{S}_1 e^{-\beta/\epsilon} \sin\left(\frac{x-\alpha}{\epsilon}\right) + 2\mathcal{S}_2 e^{-\beta/\epsilon} \sin\left(\frac{x+\alpha}{\epsilon}\right) ,\label{2.RNcos}
\end{equation}
This gives the asymptotic expression given in \eqref{1.RNfinal} and illustrated in Figure \ref{F:StokesFig1}.
The asymptotic result (\ref{eps-ode}) is recovered by considering the behaviour of solutions
in the region of the complex plane in which both oscillatory contributions have been switched on,
or $\mathrm{Re}(x) > \alpha$. It is clear from \eqref{1.LOT} that in the region
$\mathrm{Re}(x) > \alpha$, the oscillatory contribution may be rewritten as
\begin{equation}
R_{N}(x) \sim 2\pi  F e^{-\beta/\epsilon}\cos\left(\frac{\alpha}{\epsilon}\right) \sin\left(\frac{x}{\epsilon}\right) ,\qquad \mathrm{as} \qquad \epsilon \rightarrow 0.\label{1.RTOT2}
\end{equation}
When written in this form, it is clear that $R_{N}$ cancels in this region $\mathrm{Re}(x) > \alpha$ if
$\alpha/\epsilon = {\pi (2m-1)}/{2}$ with $m \in \mathbb{N}$. If we denote these choices of the small parameter
as $\epsilon_m$, this yields the asymptotic result \eqref{eps-ode}, corresponding to transparent points. We note that these transparent points are approximate solitary wave solutions, as we only demonstrated cancellation of the dominant contributions arising from \eqref{2.dchi} associated with $M = \pm1$. This is unlike the fourth-order equation, for which we found parameter values that cause that all oscillatory contributions to the solution to vanish, thereby producing solitary wave solutions.

\subsection{Numerical results}
\label{NumResults}

We confirm numerically the validity of the asymptotic formula (\ref{eps-ode})
by computing solutions to the fourth-order equation ~({\ref{ode}) that decays to zero at infinity.

Define a dynamical system in the phase space $(u,u',u'',u''')$ associated with the fourth-order
equation ~({\ref{ode}). The only equilibrium is $O=(0,0,0,0)$, hence the solitary wave solutions correspond to
homoclinic loops to this equilibrium. The system is conservative due to the first integral
\begin{gather}
\mathcal{E} = \varepsilon^2 \left[ 2 \frac{d^3 u}{dx^3} \frac{du}{dx} - \left( \frac{d^2 u}{dx^2} \right) ^2\right] +
\left( \frac{d u}{dx} \right)^2 + u^2 -\theta\log(1+u^2),\label{FirstInt}
\end{gather}
which generalizes the first integral (\ref{first-order}) of the second-order equation (\ref{e=0}).
The fourth-order equation (\ref{ode}) is invariant with respect to the transformation $x\to-x$,
therefore the dynamical system is invariant with respect to the involution
\begin{gather*}
\sigma_1:\quad (u,u',u'',u''')\quad \to\quad (u,-u',u'',-u''').
\end{gather*}
The invariant set of $\sigma_1$ is the 2D plane $S=\{u'=0, u'''=0\}$.
Since Eq.~(\ref{ode}) is also invariant with respect to the transformation $u\to -u$,
the dynamical system is invariant with respect to another involution
\begin{gather*}
\sigma_2:\quad (u,u',u'',u''')\quad \to\quad (-u,-u',-u'',-u''').
\end{gather*}

The equilibrium $O$ lies in $\mathcal{E}_0=\{\mathcal{E}=0\}$, the zero level of the first integral.
Evidently, $\mathcal{E}_0$ is a 3D set.  The four eigenvalues in the linearization of
the dynamical system at $O$ are given by two pairs $\pm \lambda_1$ and $\pm \lambda_2$,
where
\begin{align*}
\lambda_{1}=\frac1{\sqrt{2}\varepsilon} \sqrt{ \sqrt{1+4\varepsilon^2(\theta-1)} - 1},\quad
\lambda_{2}=\frac i{\sqrt{2}\varepsilon} \sqrt{ \sqrt{1+4\varepsilon^2(\theta-1)} + 1}.
\end{align*}
Therefore, $O$ is classified as the saddle-center point for any $\varepsilon$ and $\theta>1$.
This implies that there exist a pair of outgoing trajectories $\gamma^+_{1,2}$ of $O$ and a pair of incoming trajectories $\gamma^-_{1,2}$ of $O$. The pair of trajectories $\gamma^+_{1}$ and $\gamma^+_{2}$, as well as $\gamma^-_{1}$ and $\gamma^-_{2}$, are related with each other by the involution $\sigma_2$. Similarly, the trajectories $\gamma^+_{1}$ and $\gamma^-_{1}$, as well as $\gamma^+_{2}$ and $\gamma^-_{2}$ are related by the involution $\sigma_1$.

All the trajectories $\gamma^{\pm}_{1,2}$ with the connection to $O$ are also situated at the zero energy level $\mathcal{E}_0$.
The homoclinic orbit arises due to an intersection of $\gamma^+_{1}$ (or $\gamma^+_{2}$) with $\gamma^-_{1}$ (or $\gamma^-_{2}$).
The intersection of two trajectories within the 3D set $\mathcal{E}_0$ does not correspond to the generic case,
hence the homoclinic orbits are not generic. However, homoclinic orbits may exist for selected values
of the governing parameter $\varepsilon$ as a result of co-dimension one bifurcations.

In what follows we restrict the consideration by {\it even solutions} to
the fourth-order equation (\ref{ode}). They correspond to the homoclinic orbits of $O$ that are invariant with respect to the involution $\sigma_1$. In order to compute these orbits and the corresponding values of $\varepsilon$ we make use of the fact that a symmetric homoclinic orbit in a reversible system must intersect the invariant set of the involution (see, e.g., Lemma 3 in \cite{VF92}). Hence
the trajectory $\gamma^+_{1}$ (or $\gamma^+_{2}$) has to cross the plane $S$. Then $\gamma^-_{1}$ (or $\gamma^-_{2}$)
also crosses $S$ at the same point and the homoclinic loop is composed from the two pieces of these trajectories
before they hit the plane $S$. Since $\gamma^+_1$ and $\gamma^+_2$ are related by the involution $\sigma_2$, it is
sufficient to consider the trajectory $\gamma^+_1$ only and to detect numerically its intersections with the plane $S$.

\begin{figure}
{\centerline{\includegraphics [scale=0.6]{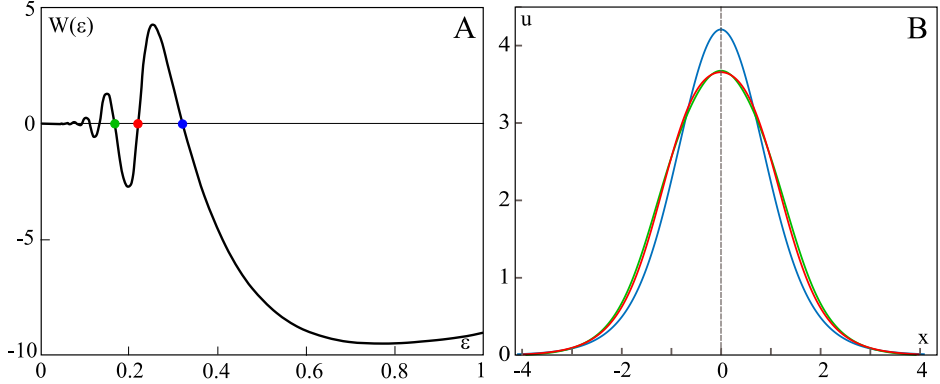}}} \caption{A: the plot of $W(\varepsilon)$ for $\theta=5$. By blue, red and green balls three greatest zeros of $W(\varepsilon)$ are shown: $\varepsilon_1\approx 0.32128$, $\varepsilon_2\approx 0.22152$ and $\varepsilon_3\approx 0.16684$. B: the solution profiles corresponding to $\varepsilon_{1}$, $\varepsilon_{2}$, $\varepsilon_{3}$. The colors (blue, red and green) of the profiles corresponds to the colors of the balls in panel A.} \label{W_Theta_5}
\end{figure}

\begin{figure}
{\centerline{\includegraphics [scale=0.6]{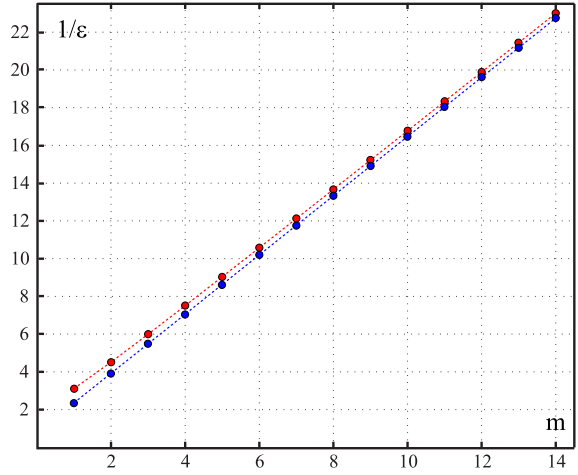}}} \caption{The numerical values of $\varepsilon_m^{-1}$
corresponding to homoclinic orbits (red balls) and their asymptotic values (blue balls) computed from
(\ref{spacing-ode}) for $\theta = 5$. } \label{C_Theta_5}
\end{figure}

For a given value of $\varepsilon$ we compute the trajectory $\gamma^+_1$ until the first point $P_0$ where $u' |_{P_0} =0$ and register the values
$W := u'''|_{P_0}$. Then we vary the value $\varepsilon$ and plot $W$ versus $\varepsilon$.
This plot for $\theta=5$ is shown in Fig.~\ref{W_Theta_5}, panel A. We can see that $W$ oscillates in $\varepsilon$
and has many zeros. For each zero of $W$, both $u' |_{P_0}$ and $u''' |_{P_0}$ vanishes, hence
$\gamma^+_1$ intersects $S$ at $P_0$ and represent the homoclinic orbit.

The numerical computation of $\gamma^+_1$ starts from  a vicinity of $O$ where
the components $u,u',u'',u'''$ are small. Then $\gamma^+_1$ can be extended
to larger values of $u,u',u'',u'''$ by means of the fourth-order Runge--Kutta method.
Fig.\ref{W_Theta_5}, panel B, represents three profiles of the solitons corresponding to three largest
zeros of $W$ at $\varepsilon_1\approx 0.32128$, $\varepsilon_2\approx 0.22152$ and $\varepsilon_3\approx 0.16684$.
More values of $\varepsilon$ for which $W$ is zero are shown in Table \ref{table:5}. It follows from Table \ref{table:5} that
the values $\{\varepsilon_m^{-1} \}_{m \in \mathbb{N}}$ are asymptotically equidistant with spacing
close to $\pi/\alpha$ where $\alpha$ is the real part of the singularity of $U(z)$.
For $\theta=5$, we have detected numerically $\alpha\approx2.003$, therefore $\pi/ \alpha \approx 1.56834$,
which is close to the numerical values in Table \ref{table:5}.

Fig.~\ref{C_Theta_5} presents the values $\{\varepsilon_m\}_{m \in \mathbb{N}}$
computed numerically and from the asymptotic formula (\ref{spacing-ode}).
The correspondence is fairly good. Similar agreement is observed for other values of $\theta$.

\begin{table}
\centerline{
\begin{tabular}{c|c|c|c}
 $m$ & $\varepsilon_m = \frac{2 \alpha}{(2m-1) \pi}$ & Computed $\varepsilon_m$ & $\varepsilon_m^{-1}-\varepsilon_{m-1}^{-1}$ \\ \hline
 1 & 0.42505 & 0.32128 & \\
				  2 & 0.25503 & 0.22152 & 1.40163\\
				  3 & 0.18216 & 0.16684 & 1.47497\\
				  4 & 0.14168 & 0.13322 & 1.51259\\
							  \vdots & \vdots & \vdots & \vdots\\
							  12& 0.05101 & 0.05029 & 1.55773\\
							  13& 0.04723 & 0.04663 & 1.55911\\
							  14& 0.04397 & 0.04347 & 1.56117\\
\end{tabular}
}
\caption{The values $\varepsilon$ corresponding to the homoclinic orbits of Eq. (\ref{ode-intro}) at $\theta = 5$.}
\label{table:5}
\end{table}

\section{Approximate solitary wave solutions to the advance-delay equation}

Here we consider the advance-delay equation:
\begin{eqnarray}
\frac{1}{h^2} \left[ u(x+h) - 2 u(x) + u(x-h) \right] + u(x) - \frac{\theta u(x)}{1+u(x)^2}=0,
\label{advance-delay}
\end{eqnarray}
where $h$ is a small positive parameter for the lattice spacing.
The main goal of this section is to show the existence of a countable sequence of approximate
solitary wave solutions to the advance-delay equation (\ref{advance-delay}) at $h$ near $\{ h_m \}_{m \in \mathbb{N}}$,
where the sequence $\{ h_m \}_{m \in \mathbb{N}}$ accumulates to zero as $m \to \infty$
according to the asymptotic representation:
\begin{equation}
\label{eps-advance-delay}
h_m \sim \frac{4 \alpha}{(2m-1)}, \quad m \in \mathbb{N},
\end{equation}
where $\alpha > 0$ is defined by (\ref{alpha-beta}).
If we use $h_m = 2 \sqrt{3} \varepsilon_m$
according to the correspondence (\ref{h-eps}),
then the spacing between two consequent values of $\{ \varepsilon_m\}_{m \in \mathbb{N}}$
is asymptotically given by
\begin{equation}
\label{spacing-advance-delay}
\frac{1}{\varepsilon_{m+1}} - \frac{1}{\varepsilon_m} \to \frac{\sqrt{3}}{\alpha} \quad \mbox{\rm as} \quad m \to \infty,
\end{equation}
which is different from the asymptotic result (\ref{spacing-ode}) for the fourth-order equation (\ref{ode}).

Approximate solitary wave solutions are again obtained by two equivalent methods
in Sections \ref{sec-4-1} and \ref{sec-4-2}. These approximate solutions are related to the
transparent points in Definition \ref{def-1}, which are computed numerically in Section \ref{sec-4-3}.

\subsection{Analysis of oscillatory integrals}
\label{sec-4-1}

Let $U$ be the even,  positive, and exponentially decaying solution
to the second-order equation (\ref{e=0}) defined in the implicit form by (\ref{exact}).
We are looking for a symmetric solution to the advance-delay equation (\ref{advance-delay}) in the perturbed
form $u = U + v$. Substitution yields the persistence problem for $v$:
\begin{equation}
\label{perturbed-problem-advance-delay}
L_h v = H_h + N(v),
\end{equation}
where $N(v)$ is the same as in (\ref{nonlinear-term-N}),
$L_h$ is a new linearization operator at the zero solution given by
\begin{equation}
\label{operator-L-advance-delay}
(L_h v)(x):= -\frac{1}{h^2} \left[ v(x+h) - 2 v(x) + v(x-h) \right] + (\theta - 1) v(x),
\end{equation}
and $H_h$ is a new source term given by
\begin{equation}
\label{source-term-H-advance-delay}
H_h := \frac{1}{h^2} \left[ U(x+h) - 2 U(x) + U(x-h) \right] - \frac{d^2 U}{d x^2}.
\end{equation}
Fourier transform for the operator $L_h$ yields the dispersion relation:
\begin{equation}
\label{DR}
D_h(k) := \frac{4}{h^2} \sin^2 \left( \frac{kh}{2}\right) + \theta - 1, \quad k \in \mathbb{R}.
\end{equation}
If $\theta > 1$, there exist no real roots of the transcendental equation $D_h(k) = 0$.
However, as $h \to 0$, there exists a countable sequence of roots
at $k_n = 2\pi n h^{-1} + \mathcal{O}(1)$, $n \in \mathbb{N}$,
where the $\mathcal{O}(1)$ correction is purely imaginary.

Although the dispersion relation $D_h(k) = 0$ does not exhibit real roots in $k$,
the inverse of $L_h$ on $L^2(\mathbb{R})$ is bounded but singular as $h \to 0$.
As a result, iterations for the fixed-point problem (\ref{perturbed-problem-advance-delay})
do not converge to a unique fixed point unless a countable number of solvability conditions is added.
For the solution of the linear inhomogeneous equation $L_h v = H_h$, the set of
solvability conditions is given by $\{I_h^{(n)} = 0\}_{n \in \mathbb{N}}$, where
\begin{equation}
\label{integral-I-advance-delay}
I_h^{(n)} := \int_{\mathbb{R}} H_h(x) e^{i k_n x} dx, \quad n \in \mathbb{N}.
\end{equation}
By change of variables and integration by parts, these integrals become
\begin{equation}
\label{I-n}
I_h^{(n)} = \left[ k_n^2 - \frac{4}{h^2} \sin^2\left( \frac{k_n h}{2}\right) \right] I(k_n),
\end{equation}
where $I(\varkappa)$ with $\varkappa = k_n$ is given by (\ref{FT-of-U}). Since
$k_n = 2 \pi n h^{-1} + \mathcal{O}(1)$ as $h \to 0$, substituting
the asymptotic approximation (\ref{I_varkappa}) into (\ref{I-n}) yields
the asymptotic result:
\begin{eqnarray}
I_h^{(n)} \sim \frac{2\pi\sqrt{\theta}}{\sqrt{\log (2 \pi n h^{-1})}}e^{-2\pi n \beta h^{-1}}\cos(2\pi n \alpha h^{-1}).
\label{main-advance-delay}
\end{eqnarray}
Since $\{ I_h^{(n)} \}_{n \in \mathbb{N}}$ forms a hierarchic sequence of
exponentially small terms, the dominant contribution is given by $I_h^{(1)}$.
The leading order of $I_h^{(1)}$ vanishes at $\{ h_m \}_{m \in \mathbb{N}}$
given by (\ref{eps-advance-delay}). This asymptotic computation defines
an approximate solitary wave solution to the advance-delay equation (\ref{advance-delay})
for $h$ near $\{ h_m \}_{m \in \mathbb{N}}$
\footnote{There are infinitely many zeros of the dispersion relation $D_h(k) = 0$ in $k$
and $k_1$ is only the smallest root. Even if $I_h^{(1)}$ vanishes at $h_m$, we know
from (\ref{main-advance-delay}) that $I_h^{(2)}$ does not vanish at this $h_m$, therefore,
we cannot predict that the continuous solutions $u \in C(\mathbb{R})$
to the advance-delay equation (\ref{advance-delay}) exist for $h$ near this $h_m$.
The only exception is the value $h_1 = \sqrt{2}$, for which
reduction of the advance-delay equation (\ref{advance-delay}) to the integrable
AL lattice yields an exact solution for $u \in C(\mathbb{R})$.}.

\subsection{Beyond-all-order asymptotics}
\label{sec-4-2}

By studying the advance-delay equation (\ref{advance-delay}) using beyond-all-order methods,
we can recover the asymptotic result \eqref{eps-advance-delay}. We will show that the asymptotic
solution again contains two Stokes lines, each of which switches on an exponentially small contribution
which does not decay in the far field. Approximate solitary wave solutions are associated with the special cases
in which the two contributions cancel.

We will establish that the exponentially small oscillations present in the solution $u(x)$, denoted by $u_{\mathrm{osc}}(x)$,
have the following asymptotic behaviour as $h \rightarrow 0$:
\begin{equation*}
u_{\mathrm{osc}}(x) \sim \left\{
        \begin{array}{ll}
            0, & x < -\alpha - \delta, \\
-\frac{8 \pi^3 \beta^2 F}{h} e^{-2\pi\beta/h}\sin\left(\frac{2\pi(x-\alpha)}{h}\right),  &
x \in (-\alpha + \delta,\alpha - \delta), \\
-\frac{8 \pi^3 \beta^2 F}{h} e^{-2\pi\beta/h}\left[\sin\left(\frac{2\pi(x-\alpha)}{h}\right)+\sin\left(\frac{2\pi(x+\alpha)}{h}\right) \right], &
 x > \alpha + \delta,
        \end{array}
    \right.\label{2.RNfinal}
\end{equation*}
where $\delta$ plays the same role as in \eqref{1.RNfinal}, and $F$ is a constant that is proportional to $1/\sqrt{\log{1/h}}$.
The two exponentially small sinusoidal contributions
switch on rapidly as the associated Stokes curves are crossed at $x = \pm\alpha$ respectively.

This analysis differs in some technical details from the fourth-order equation (\ref{ode})
due to the difference terms. We therefore follow the method established in
for differential-difference equations in \cite{King}, and subsequently utilised
for difference equations in \cite{Joshi1, Joshi2}.

We first apply a Taylor expansion about $h=0$ to smooth solutions of
the advance-delay equation \eqref{advance-delay}, giving
\begin{equation}
\frac{2}{h^2}\sum_{r=1}^{\infty} \frac{h^{2r}}{(2r)!} u^{(2r)} +  u - \frac{\theta u}{1+u^2} = 0,\label{2.ConEq}
\end{equation}	
where $u^{(r)}$ represents the $r$th derivative of $u(x)$ with respect to $x$. This is a differential equation
with infinite order, unlike the fourth-order equation (\ref{ode}). We expand $u$ as a power series, giving
\begin{equation}
u(x) \sim  \sum_{j=0}^{\infty} h^{2j} u_j(x;\log 1/h).\label{2.series}
\end{equation}
Applying this series to \eqref{2.ConEq} and matching at leading order gives the second-order equation \eqref{e=0}.
We therefore set again $u_0(x) = U(x)$, where $U$ is defined and studied in Section \ref{sec-2}.

Matching in the small $h$ limit at $\mathcal{O}(h^{2j})$ we obtain a recurrence relation for $k \geq 0$,
\begin{equation}
2\sum_{r=1}^{j+1}\frac{u_{j+1-r}^{(2r)}}{(2r)!} + u_{j}  - \frac{\theta u_{j} (U^2-1)}{(U^2+1)^2} + \ldots = 0,\label{2.recur}
\end{equation}
where the omitted terms are smaller than those terms retained in the limit that $j \rightarrow \infty$.
As in the analysis of the fourth-order equation (\ref{ode}), we may use this recurrence relation
to determine the asymptotic form of the series terms $u_j$ in the limit that $j \rightarrow \infty$.

In order to determine the late-order terms, we require an ansatz with similar form to \eqref{1.ansatz0}, however
the choice is made more complicated by the observation that the number of contributing terms grows as $r$ increases.
We therefore again apply the recursion relation \eqref{2.recur} to the leading-order solution in the neighbourhood
of the singularity at $x_0$ and obtain:
\begin{equation}
u_j(x) \sim \frac{F(x;\log{1/h})\Gamma(2j)}{\chi(x)^{2j-1}}\qquad \mathrm{as} \qquad j \rightarrow \infty,\label{2.ansatz}
\end{equation}
where $F$ and $\chi$ are to be defined and $\chi(x_0) = 0$. We note that the representation (\ref{2.ansatz})
differs from \eqref{1.ansatz0}, as the argument of the gamma function and the power of the singulant are no longer identical, due to the presence of the summation expression in \eqref{2.recur}, that introduces new terms into the expression for $u_j$ at each recursion.

Putting \eqref{2.ansatz} into \eqref{2.recur} and matching at $\mathcal{O}(u_{k})$, we obtain
\begin{equation}\label{2.finitesum}
2\sum_{r=1}^{k} \frac{(2k-2r-1)}{(2r)!} \left( -\frac{d\chi}{dx} \right)^{2r}= 0.
\end{equation}
Now, as late-order terms are only valid for $k$ being large, it is possible to show that
we introduce only exponentially small error into $\chi$ by taking the behaviour of this equation
as $k \rightarrow \infty$. We evaluate the finite sum, and take the leading-order behaviour in this limit.
Recalling that $\chi(x_0) = 0$ at the singular point $x_0$, this gives
\begin{equation}
\cosh \left( \frac{d\chi}{dx} \right) = 1,\qquad \chi(x_0) = 0.\label{2.dchi}
\end{equation}
This expression is easily solved to give $\chi(x) = 2 \pi i M (x - x_0)$, where $M\in\mathbb{Z}$.
Due to the form of the late-order ansatz \eqref{2.ansatz}, the dominant behaviour must associated
with nonzero values of $\chi$ that have smallest magnitude on the real axis, associated with $M = \pm 1$.
We therefore have
\begin{equation}\label{2.sing1}
\chi(x) = \pm 2 \pi i (x - x_0).
\end{equation}
As in the previous case, for each singularity, we will have one choice of $\chi$ that induces Stokes switching,
which yields $\chi(x) = 2\pi i(x - x_0)$ for $x_0 = -\alpha + i \beta$.

Putting \eqref{2.ansatz} into \eqref{2.recur} and matching at $\mathcal{O}(u_{k-1/2})$ gives
\begin{equation}\label{2.finitesumb}
2\sum_{r=1}^{k} \frac{(2k-2r-1)}{(2r-1)!} \left( -\frac{d\chi}{dx} \right)^{2r-1} \frac{dF}{dx} = 0,
\end{equation}
which is solved to leading-order in the limit that $k \rightarrow 0$, giving
\begin{equation}
\left[ \frac{d\chi}{dx} \cosh\left( \frac{d\chi}{dx} \right) + 2 \sinh\left( \frac{d\chi}{dx} \right) \right] \frac{dF}{dx} =
2\pi i \frac{dF}{dx} = 0.
\end{equation}
Consequently, we know that $F$ is constant. This constant may be determined using asymptotic matching in the same fashion as the fourth-order equation. This is a more complicated process for discrete problems, due to the complexity of the expression (see, for example, \cite{Joshi1, Joshi2}). Performing this analysis reveals that $F$ is a real constant proportional to $1/\sqrt{\log{1/h}}$, which can be obtained numerically. This analysis also validates the choice of ansatz \eqref{2.ansatz}. Furthermore, this constant is identical for each singularity.

Adding all four singularity contributions
and leaving the constant $F$ in the general form gives as $j \rightarrow \infty$,
\begin{align}
\nonumber u_j(x) \sim  \frac{F \Gamma(2j)}{[2 \pi i(x+\alpha-i \beta )]^{2j-1}} + &
\frac{F \Gamma(2j)}{[-2 \pi i(x+\alpha+i\beta)]^{2j-1}} \\&
+ \frac{F \Gamma(2j )}{[2\pi i(x-\alpha-i \beta)]^{2j-1}} +
\frac{F \Gamma(2j)}{[-2\pi i(x-\alpha + i \beta)]^{2j-1}}.\label{2.LOT}
\end{align}

We again determine the exponential contribution associated with the singularity at $x_0 = -\alpha + \beta i$.
We again truncate the asymptotic series \eqref{2.series} after $N$ terms and show that the optimal truncation point
is $N \sim \pi | x+\alpha-i \beta|/h$. We again write $N = \pi|x- \alpha - \beta i|/h + \omega$,
where $0 \leq \omega < 1$, in order to ensure that $N$ takes integer value, and denote the remainder term by $R_N$.
Substituting the truncated series into the governing equation \eqref{ode}, using the recurrence relation \eqref{2.recur}
when necessary, gives as $\epsilon \rightarrow 0$
\begin{equation}
\frac{2}{h^2}\sum_{r=0}^{\infty} \frac{h^{2r}}{(2r)!} R_N^{(2r)} + \ldots \sim h^{2N} \frac{d^2 u_N}{d x^2},\label{2.remeq}
\end{equation}
where the omitted terms are small in the asymptotic limit. Using the Liouville-Green (JWKB) method
on the homogeneous version of \eqref{2.remeq} gives the behaviour away from the Stokes line as
\begin{equation}
R_{N}(x) \sim (Ax + B)  e^{-2 \pi i (x + \alpha - i\beta)/{h}},\qquad \mathrm{as} \qquad \epsilon \rightarrow 0,\label{2.RNhom}
\end{equation}
where $A$ and $B$ are constants. It is clear from the boundary conditions of the problem that $A = 0$; however, we must determine $B$ using asymptotic matching. Had we not determined the value of $A$ here, it would have been obtained as part of the matching condition.
We set
\begin{equation}
R_{N}(x) \sim \mathcal{S} e^{-2 \pi i (x + \alpha - i\beta)/{h}}\qquad \mathrm{as} \qquad \epsilon \rightarrow 0.\label{2.RNswitch}
\end{equation}
where $\mathcal{S}$ is a Stokes multiplier. The remainder equation \eqref{2.remeq} becomes, after some simplification
\begin{equation}
2\sum_{r = 1}^{\infty} \frac{(-2\pi i)^{2r-2}}{(2r-2)!}\diff{^2\mathcal{S}}{x^2} e^{-2\pi i(x+\alpha-i \beta)/h} \sim
h^{2N} \frac{d^2 u_N}{d x^2}.\label{2.Sx0}
\end{equation}
By evaluating the series and applying the late-order ansatz, we obtain
\begin{equation}
\diff{^2\mathcal{S}}{x^2} \sim  h^{2N} \frac{(-2\pi i)^2 (2N-1) F \Gamma(2N+1)}{2 [2\pi i(x+\alpha-i\beta)]^{2N+1}}e^{2\pi i(x+\alpha-i \beta)/h},\label{2.Sx}
\end{equation}
Recalling that $N = |x +\alpha-i\beta|/2h + \omega$, we apply a change of variables, expressing the singulant
in polar coordinates to give $2 \pi i (x +\alpha - i \beta) = r e^{i \vartheta}$. Stokes lines typically follow
radial directions in this coordinate system, so we restrict our attention to variation in angle.
Calculating the variation in the angular direction, noting that $N = r/2h + \omega$, and applying Stirling's formula,
we are able to reduce \eqref{2.Sx} to
\begin{equation}
\diff{\mathcal{S}}{\vartheta} + i\diff{^2\mathcal{S}}{\vartheta^2}  \sim
-\frac{i r^{5/2} \sqrt{2\pi} F}{2 h^{3/2}}\exp\left(\frac{r}{h}(e^{i \vartheta} - 1)
+   i\vartheta\left(\frac{r}{h}+2 \omega +1\right)\right)
\end{equation}
as $\epsilon \rightarrow 0$. We see that the right-hand side of this expression is exponentially small,
except on $\vartheta = 0$, across which the Stokes multiplier varies rapidly. This is therefore the Stokes line
associated with the late-order behaviour, and corresponds to $\mathrm{Im}(\chi) = 0$ and $\mathrm{Re}(\chi) > 0$,
as expected. This condition defines a line extending vertically downwards from the singularity at $x_0 = - \alpha + i \beta$
along $\mathrm{Re}(x) = -\alpha$. In order to determine the quantity switched as this Stokes line is crossed,
we apply an inner expansion in the neighbourhood of this curve, given by $\vartheta = h^{1/2}\phi$. This gives
\begin{equation}
\diff{\mathcal{S}}{\phi} \sim -\frac{i r^{5/2}  \sqrt{2\pi} F}{2h} e^{-r\phi^2/2}.\label{2.Sdiff}
\end{equation}
As before, we find
\begin{equation}
\mathcal{S} \sim -\frac{i r^2 \sqrt{2\pi} F}{2 h} \int_{-\infty}^{\phi/\sqrt{r}} e^{-t^2/2} dt.
\end{equation}

We recall that $r = 2\pi |x + \alpha - i \beta|$, so as the Stokes line is crossed along the real axis
at ${\rm Re}(x) = -\alpha$ is given by $r = 2\pi\beta$. Consequently we see that $\mathcal{S}$ rapidly jumps from zero to $i \mathcal{S}_{\mathrm{on}}$ as the Stokes line is crossed, where $\mathcal{S}_{\mathrm{on}} =  -{4 \pi^3 \beta^2 F}/{h}$. The corresponding remainder contribution is given by
\begin{equation}
R_N(x) \sim i \mathcal{S}_{\mathrm{on}} e^{-2\pi i(x + \alpha-i \beta)}\qquad \mathrm{as} \qquad h \rightarrow 0.
\end{equation}

As before, we compute the remaining Stokes contributions and write this in terms of real-valued trigonometric functions, giving
\begin{equation}
R_N(x) \sim 2\mathcal{S}_1 e^{-2\pi \beta} \sin\left(\frac{2\pi(x-\alpha)}{h}\right) +
2\mathcal{S}_2 e^{-2\pi \beta} \sin\left(\frac{2\pi(x+\alpha)}{h}\right) ,\label{1.RNcos}
\end{equation}
where $\mathcal{S}_1$ switches rapidly as the Stokes line is crossed from zero to $\mathcal{S}_{\mathrm{on}}$ in a region of width $\mathcal{O}(h^{1/2})$ about the Stokes line $\mathrm{Re}(x) = -\alpha$. Similarly, $\mathcal{S}_2$ switches from zero to $\mathcal{S}_{\mathrm{on}}$ across the Stokes line $\mathrm{Re}(x) = \alpha$.
The exponentially small contributions are depicted very similar to the schematic picture
on Figure \ref{F:StokesFig1}. In the region ${\rm Re}(x) > \alpha$, we can rewrite \eqref{1.RNcos} as
\begin{equation}
R_{N} \sim- \frac{16 \pi^3 \beta^2 F}{h}\cos\left(\frac{2\pi\alpha}{h}\right)
\sin\left(\frac{2\pi x}{h}\right) ,\qquad \mathrm{as} \qquad \epsilon \rightarrow 0.\label{2.RTOT2}
\end{equation}
Therefore, $R_{N}$ cancels in this region $\mathrm{Re}(x) > \alpha$ if $2\pi\alpha/h = {\pi (2m-1)}/{2}$, where $m \in \mathbb{N}$.
If we denote these choices of the small parameter as $h_m$, this yields the asymptotic result \eqref{eps-advance-delay}.

\subsection{Numerical results}
\label{sec-4-3}

Here we approximate numerically on-site and inter-site lattice solitons (\ref{on-site-inter-site})
to the second-order difference equation (\ref{DiscreteU}). In accordance to Definition \ref{def-1}, we will
approximate the transparent points by using a computational method consistent with the one used in \cite{Melvin1},
where the transparent points were computed by finding the values of $h$,
for which the eigenvalue of the stability problem passes through zero. We detect
the transparent points by seeking for localized solution of linearized problem
\begin{equation}\label{Discrete}
\frac1{h^2} \left[ v_{n+1}-2v_n+v_{n-1} \right] + v_n-\frac{\theta
(1- (u_n)^2)}{(1+(u_n)^2)^2} v_n=0, \quad n \in \mathbb{Z},
\end{equation}
where $\{u_n\}_{n \in \mathbb{Z}} \in \ell^2(\mathbb{Z})$ is the solution of discrete equation (\ref{DiscreteU}).
If $\{v_n\}_{n \in \mathbb{Z}} \in \ell^2(\mathbb{Z})$ exists, then it corresponds to
the eigenvector of the stability problem with zero eigenvalue.
Cases of the on-site lattice soliton $\{u_n^{os}\}_{n \in \mathbb{Z}}$ and
the inter-site lattice soliton $\{u_n^{is}\}_{n \in \mathbb{Z}}$ are treated separately.

Let $\{u_n^{os}\}_{n \in \mathbb{Z}}$ be the on-site lattice soliton of the difference equation (\ref{DiscreteU})
for some $h$ and consider the linearized difference equation (\ref{Discrete}) with this $\{u_n^{os}\}_{n \in \mathbb{Z}}$.
The sequence $\{ v_n^{os} \}_{n \in \mathbb{Z}_-}$ satisfies the decay condition $v_n^{os} \to 0$ as $n \to -\infty$.
For large values of $n \to -\infty$, we can use the linear asymptotics
$v_n^{os}\sim C\gamma^n$, where $C > 0$ is constant and $\gamma$ is the root of dispersion equation
\begin{eqnarray}
\label{root-gamma}
\gamma^2-(2+(\theta-1)h^2)\gamma+1=0,
\end{eqnarray}
such that $|\gamma|>1$. Thanks to the symmetry of $\{u_n^{os}\}_{n \in \mathbb{Z}}$,
we require the eigenvector $\{v_n^{os}\}_{n \in \mathbb{Z}}$ to satisfy the same
symmetry as the translational (derivative) mode:
\begin{equation}
\label{sym-v}
v_{-n}^{os} = - v_n^{os}, \quad n \in \mathbb{Z}.
\end{equation}
Generically, the sequence $\{v_n^{os}\}_{n \in \mathbb{Z}_-}$ does not satisfy
the symmetry condition $v_0 = 0$ and hence violates the symmetry (\ref{sym-v}).
Moreover, if the sequence is continued to $n \in \mathbb{Z}_+$, it diverges generally as $n \to +\infty$.
Therefore, we introduce the function $W^{os}(h)=v_0^{os}$ and look for zeros of $W^{os}(h)$ as $h$ varies.

Similarly, let $\{u_n^{is}\}_{n \in \mathbb{Z}}$ be the inter-site lattice soliton of
the difference equation (\ref{DiscreteU}) for some $h$ and consider the linearized difference equation
(\ref{Discrete}) with this $\{u_n^{is}\}_{n \in \mathbb{Z}}$.
The sequence $\{ v_n^{is} \}_{n \in \mathbb{Z}_-}$ is computed by using
the same asymptotics $v_n^{is}\sim C\gamma^n$, where $C > 0$ is constant and $\gamma$ is the root of
Eq.~(\ref{root-gamma}) with $|\gamma| > 1$. Thanks to the symmetry of $\{u_n^{is}\}_{n \in \mathbb{Z}}$,
we require the eigenvector $\{v_n^{is}\}_{n \in \mathbb{Z}}$ to satisfy the same
symmetry as the translational (derivative) mode:
\begin{gather*}
\label{sym-v_is}
v_{-n}^{is} = - v_{n-1}^{is}, \quad n \in \mathbb{Z}.
\end{gather*}
Generically, a sequence $\{v_n^{is}\}_{n \in \mathbb{Z}_-}$ does not satisfy
the symmetry condition $v_{-1} + v_0 = 0$ and hence violates the symmetry (\ref{sym-v_is}). Again, we introduce
the function $W^{is}(h)=v_{-1}^{is} + v_0^{is}$ and look for zeros of $W^{is}(h)$ as $h$ varies.

\begin{figure}[h]
{\centerline{\includegraphics [scale=0.6]{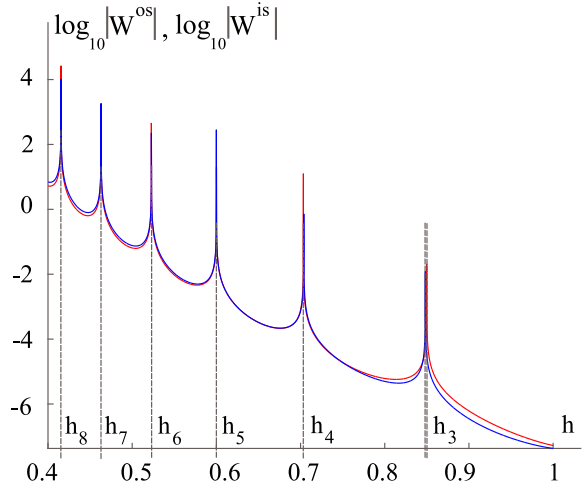}}} \caption{The plots of  $\log_{10}|W^{is}(h)|$ (red line) and $\log_{10}|W^{os}(h)|$ (blue line), $\theta=5$, $0.4<h<1$.  The peaks correspond to zeros of the functions $W^{is}(h)$ and $W^{is}(h)$.
Zeros $h_{3\div8}$ are shown, see enumeration of zeros in Table \ref{T1}} \label{oscillations}
\end{figure}

Our numerical procedure consists in computing the functions $W^{os}(h)$ and $W^{is}(h)$ with small enough spacing
with respect to $h$ and seeking for their zeros. If a continuous solution to
the advance-delay equation (\ref{advance-delay-intro}) exists at $h_*$, then
\begin{gather}\label{ZerosCoincide}
W^{os}(h_*)=W^{is}(h_*)=0.
\end{gather}
One transparent point is known at $h_* = \sqrt{2}$ for any $\theta>1$ thanks to the reduction
of the advance-delay equation (\ref{advance-delay-intro}) to the integrable Ablowitz--Ladik lattice \cite{Khare2005}.
This case was used for testing of the numerical procedure.

With this numerical algorithm for $\theta = 3,5,15$, we have obtained sequences $\{ h_m \}_{m \in \mathbb{N}}$
of zeros of $W^{os}(h)$ and $W^{is}(h)$ that are close to each other within the distance of $5\cdot 10^{-3}$.
The difference becomes even smaller and indistinguishable for zeros with smaller $h$ since
the step size in $h$ is smaller tan $10^{-5}$. We have also recovered the value $h_1 =\sqrt{2}$
at the first transparent point and found no zeros of $W^{os}(h)$ and $W^{is}(h)$ for $h \in (\sqrt{2},2.5)$.

\small
\begin{table}
\begin{tabular}{r||r|r||r|r||r|r}
$m$ & $h_m$, $\theta=3$    & $\Delta$ (on-off)         &$h_m$,  $\theta=5$ & $\Delta$ (on-off)             & $h_m$, $\theta=15$& $\Delta$ (on-off)\\[2mm]
\hline
 1&1.41421      & 0          &1.41421      & 0          &1.41421      &  0    \\[2mm]
 2&1.06639      & 0.00078    &1.06796     & 0.00480   &1.05366       &  0.02619\\[2mm]
 3&0.84855      & 0.00016    &0.84819     & 0.00160   & 0.83719       & 0.01749\\[2mm]
 4&0.70325      & 0          &0.70385     & 0.00048   & 0.69466       & 0.00924 \\[2mm]
 5&0.59979      & 0          &0.59999     & 0.00013   & 0.59257      & 0.00456\\[2mm]
 6&0.52263      & 0          &0.52282     & 0.00004   & 0.51625      & 0.00221 \\[2mm]
 7&0.46291      & 0          & 0.46303     & 0.00001   & 0.45165      & 0.00107\\[2mm]
 8&0.41536      & 0          & 0.41545     & 0           & 0.41011      & 0.00051\\[2mm]
 9&0.37662      & 0          & 0.37669     & 0           & 0.37180      & 0.00024\\[2mm]
10&0.34446      & 0          & 0.34452     & 0           & 0.34000      & 0.00012\\[2mm]
11&0.31733      & 0          & 0.31739     & 0           & 0.31320      & 0.00006 \\[2mm]
12&0.29415      & 0          & 0.29421     & 0           & 0.29030      & 0.00003\\[2mm]
13&0.27412      & 0          & 0.27417     & 0           & 0.27052      & 0.00001\\[2mm]
14&-            & -          & 0.25668     & 0           & 0.25325      & 0.00001\\[2mm]
15&-            & -          & 0.24128     & 0           & 0.23806      & 0\\[2mm]
16&-            & -          & 0.22762     & 0           & 0.22458      & 0\\[2mm]
17&-            & -          & 0.21533     & 0           & 0.21254      & 0\\[2mm]
18&-            & -          & 0.20566     & 0           & 0.20173      & 0\\[2mm]
19&-            & -          & -            & -          & 0.19196      & 0\\[2mm]
20&-            & -          & -            & -          & 0.18309      & 0\\[2mm]
21&-            & -          & -            & -          & 0.17501      & 0\\[2mm]
22&-            & -          & -            & -          & 0.16760      & 0\\[2mm]
23&-            & -          & -            & -          & 0.16080      & 0\\[2mm]
24&-            & -          & -            & -          & 0.15453      & 0%
\end{tabular}
\caption{Transparent points for the advance-delay equation (\ref{advance-delay-intro}).
The results are presented for $\theta=3,5,15$. For each value of $\theta$ two entries are shown:
average $h_n$ between the zeros of $W^{os}(h)$ and $W^{is}(h)$ (the 2-nd, 4-th and 6-th columns)
and the distance $\Delta$ between them (3-rd, 5-th and 7-th columns).
The first zero corresponds to the exact value $h_1=\sqrt{2}$.}\label{T1}
\end{table}
\normalsize

Table \ref{T1} represents these numerical results. It is interesting that zeros of  $W^{os}(h)$ and $W^{is}(h)$
change insignificantly for different values of $\theta$. For instance, the 13-th zero in Table \ref{T1}
differs by 1\% between $\theta=3$ to $\theta=15$. This fact is explained by slow dependence of $\alpha$ 
from parameter $\theta$. Indeed, $\alpha \approx 2.2025$ for $\theta = 3$ and $\alpha \approx 1.9771$ for $\theta = 15$. 

The plots of $\log_{10}|W^{os}(h)|$  and $\log_{10}|W^{is}(h)|$ for $\theta=5$ and for $0.4<h<1$
are shown in Fig.~\ref{oscillations}. The peaks correspond to zeros of the functions $W^{os}(h)$ and $W^{is}(h)$.
They are consistent with the values in Table \ref{T1}. Note that the difference is visible
for $h_2$, $h_3$ but becomes negligible for $h_4$ and smaller values of $h$.

\begin{figure}[h]
{\centerline{\includegraphics [scale=0.6]{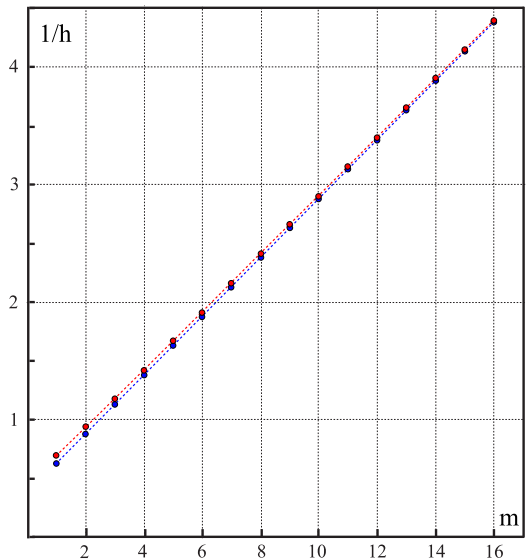}}} \caption{The numerical values of $h_m^{-1}$
corresponding to the transparent points for the advance-delay equation (\ref{advance-delay-intro}) (red balls)
and their asymptotic predictions (blue balls)  computed by the formula (\ref{spacing-advance-delay}) for $m=3\div 16$.
Here $\theta=5$ and the numerical values are taken from Table \ref{T1}.}
\label{C_Theta_5_AD}
\end{figure}

Fig.~\ref{C_Theta_5_AD} shows the numerical values of $\{h_m\}_{m \in \mathbb{N}}$ (red balls)
and their asymptotical values computed by the formula (\ref{spacing-advance-delay}) (blue balls)
for $m=3\div 16$ and $\theta=5$. The numerical values are taken from Table \ref{T1}.
The numerical results are in excellent agreement with the asymptotic formula.

\section{Conclusion}

We have addressed the existence of transparent points for standing lattice solitons in the
dNLS model with saturation and presented three groups of results.

Rigorous results are derived on existence and analytical continuation of solitary wave solutions
to the second-order differential equation which corresponds to the continuum limit.
By studying analytic mappings, we proved existence of a quadruple of logarithmic branch point
singularities in the complex plane nearest to the real line with a specific analytic behaviour
near the singularities.

These rigorous results are used in the asymptotic computations supporting our conjecture on existence
of an infinite countable set of solitary waves in the fourth-order differential equation which
corresponds to the next-order in the continuum limit. We presented two alternative asymptotic
computations producing identical results: one relies on computations of oscillatory integrals
in the persistence problem and the other one relies on beyond-all-order theory.
With application of these results to the advance-delay equation, we can only conjecture
on existence of an infinite countable set of transparent points for which the standing lattice solitons
are nearly continuous.

Finally, careful numerical computations are performed to show validity of our asymptotic predictions.
Numerical computations of solitary wave solutions in the fourth-order
differential equation agree well with the asymptotic formula. Numerical computations of
standing lattice solitons also confirmed existence of the countable set of transparent points.

\vspace{0.25cm}

{\bf Acknowledgments:} GLA was funded by  Russian Science Foundation (Grant No. 17-11-01004).
DEP acknowledges a financial support from the State task program in the sphere
of scientific activity of Ministry of Education and Science of the Russian Federation
(Task No. 5.5176.2017/8.9) and from the grant of President of Russian Federation
for the leading scientific schools (NSH-2685.2018.5).

\end{document}